\title{A Regular and Complete Notion of Delay
for Streaming String Transducers} 
\author{Emmanuel Filiot}
{Universit\'e libre de Bruxelles, Belgium}
{efiliot@ulb.ac.be}
{https://orcid.org/0000-0002-2520-5630}
{Emmanuel Filiot is a senior research associate at F.R.S-FNRS.}
\author{Isma\"el Jecker}
{University of Warsaw, Poland}
{ismael.jecker@gmail.com}
{https://orcid.org/0000-0002-6527-4470}
{Supported by the ERC grant INFSYS, agreement no. 950398.}
\author{Christof Löding}
{RWTH Aachen University, Germany}
{loeding@cs.rwth-aachen.de}
{https://orcid.org/0000-0002-1529-2806}
{}
\author{Sarah Winter}
{Universit\'e libre de Bruxelles, Belgium}
{swinter@ulb.ac.be}
{https://orcid.org/0000-0002-6527-4470}
{Sarah Winter is a postdoctoral researcher at F.R.S.-FNRS.}
\authorrunning{E. Filiot, I. Jecker, C. L\"oding and S. Winter} 
\keywords{Streaming string transducers, Delay, Origin} 
\tikzset{shorten >=1pt, >=stealth, auto, node distance=6em, initial text=}
\newcommand{\myparagraph}[1]{\vskip0.7\baselineskip\noindent\textcolor{black} {\sffamily\bfseries#1\enskip}} 
\global\long\def\st{\mid}
\renewcommand{\epsilon}{\varepsilon}
\newcommand{\alp}{\Sigma}
\newcommand{\alpo}{\Sigma}
\newcommand{\aut}{A}
\newcommand{\tra}{T}
\newcommand{\sst}{T}
\newcommand{\idsub}[1]{\textsf{Id}_{#1}}
\providecommand\lang{}
\renewcommand{\lang}[1]{L(#1)}
\newcommand{\rel}[1]{R(#1)}
\newcommand{\sub}[2]{{\mathcal S}_{#1,#2}}
\newcommand{\subs}{{\mathcal S}}
\newcommand{\var}{\mathcal{X}}
\newcommand{\varO}{O}
\newcommand{\delay}{\textsf{delay}}
\newcommand{\diff}{\textsf{diff}}
\newcommand{\cdelay}[1]{\mathbb{C}^{\dashv}_{k,\ell,#1}}
\newcommand{\outp}[1]{\textsf{out}(#1)}
\newcommand{\pump}[2]{\textsf{pump}_{#2}(#1)}
\newcommand{\neighS}{j_1}
\newcommand{\neighT}{j_2}
\newcommand{\origin}[1]{\textsf{ori}_{#1}}
\newcommand{\weight}{\textsf{weight}}
\newcommand{\maxdiff}{\textsf{max-diff}}
\newcommand{\oriout}[1]{\mathsf{\tilde{out}}(#1)}
\newcommand{\prefix}{{\alpha}}
\newcommand{\infix}{{\beta}}
\newcommand{\suffix}{{\gamma}}
\newcommand{\proot}{\textsf{root}}
\newcommand{\cut}{\textsf{cut}}
\newcommand{\nextcut}{\textsf{next-cut}}
\newcommand{\fact}{\textsf{factors}}
\newcommand{\toNFT}{\textsf{NFT}}
\newcommand{\N}{\mathbbm{N}}
\newcommand{\SST}{SST\xspace}
\newcommand{\SSTs}{SSTs\xspace}
\newcommand{\NFT}{NFT\xspace}
\newcommand{\PSPACE}{{\upshape\textsc{PSpace}}\xspace}
\newcommand{\mdiff}{\textnormal{\textsf{MD}}}
\newcommand{\mdiffe}{\textnormal{\textsf{eMD}}}
\newcommand{\mdiffd}{\textnormal{\textsf{dMD}}}
\@nx\else[{#1}]\fi}
\newcommand{\manu}[2][]{}
\newcommand{\isma}[2][]{}
\newcommand{\christof}[2][]{}
\newcommand{\sarah}[2][]{}
\begin{document}

\maketitle

\begin{abstract}
The notion of delay between finite transducers
is a core element of numerous
fundamental results of transducer theory.
The goal of this work is to provide a similar notion
for more complex abstract machines:
we introduce a new notion of delay
tailored to measure the similarity between
streaming string transducers (SST).
We show that our notion is \emph{regular}: 
we design a finite automaton that can check whether
the delay between any two SSTs executions is smaller than some given bound.
As a consequence, our notion enjoys good decidability properties:
in particular,
while equivalence between non-deterministic SSTs
is undecidable, we show that 
equivalence \emph{up to fixed delay} is decidable.
Moreover, we show that
our notion has good \emph{completeness} properties:
we prove that two  SSTs are equivalent if and only if they are equivalent
up to some (computable) bounded delay. Together with the regularity of
our delay notion, it provides an alternative proof that SSTs equivalence is
decidable. Finally, the definition of our delay notion is
machine-independent, as it only depends on the \emph{origin semantics}
of SSTs. As a corollary, the completeness result also holds for
equivalent machine models such as deterministic two-way transducers,
or MSO transducers.
\end{abstract}

\newpage

\section{Introduction}\label{sec:intro}

Transducers are
another fundamental extension of finite automata for computing \emph{functions}, and more generally \emph{relations},
between structures. In this paper, we consider string-to-string
transducers, which operate on (input) strings and produce (output) strings. 
The most basic, \emph{finite transducers}, are obtained by adding output words on the transitions of a
finite
automaton~\cite{DBLP:journals/iandc/Schutzenberger61a,DBLP:journals/ibmrd/ElgotM65}. 
At the heart of several important results in the theory of finite transducers is a 
notion, called \emph{delay}, allowing to finely compare, in a regular
way (with a finite automaton), transducer executions. The goal of this
paper is to provide such a notion for a strictly more powerful class
of transducers, streaming string
transducers~\cite{DBLP:conf/fsttcs/AlurC10}, which have received a lot of
attention in the recent years, due to their robustness and equivalence
with many other formalisms to define string-to-string
functions. In particular, this paper answers positively the following
high-level question: 
    \smallskip
\begin{adjustwidth}{3pt}{3pt}
\emph{Is it possible to  measure in a regular manner how differently executions of
  equivalent streaming string transducers produce their outputs?}
\end{adjustwidth}
\smallskip
\myparagraph{Finite transducers.} We first explain how the above question is answered for finite transducers.
Transitions of finite transducers over some alphabet
$\Sigma$ are tuples
$(p,\sigma,w,q)$ where $p,q$ are states, $\sigma\in\Sigma$ is a symbol read on
the input tape, and $w\in\Sigma^*$ is a word (possibly empty) written on the
output tape. A finite transducer execution, i.e., a sequence of
successive transitions $\rho =
(p_1,\sigma_1,w_1,p_2)\dots
(p_{n-1},\sigma_n,w_n,p_n)$, operates on the input word
$\textsf{in}(\rho) = \sigma_1\dots\sigma_n$ and produces the output
word $\textsf{out}(\rho) = w_1\dots w_n$. The semantics of a
finite transducer is the set of pairs
$(\textsf{in}(\rho),\textsf{out}(\rho))$ for all accepting executions
$\rho$. Two different executions $\rho_1$ and $\rho_2$ might define
the same input/output pair, but can produce the output in a different way because the output words
$w_i$ occurring on the transitions of the two runs may differ, although
their whole concatenation is the same.

\begin{figure}[t]
\begin{tikzpicture}[baseline,thick,scale=1,
  every path/.style={shorten <=0cm,shorten >=0cm}]
\pgfdeclarelayer{bg}    
\pgfsetlayers{bg,main}
\def\H{0.3}
\def\W{0.25}
\def\Hx{2.7}
\def\Hy{0.5}

\node[anchor=west] at (-0.85+\Hx,0.4*\H){\small $\rho_5$ :\strut};
\node[anchor=west] at (-0.85+\Hx,\Hy + 0.4*\H){\small $\rho_4$ :\strut};
\node[anchor=west] at (-0.85+\Hx,2*\Hy + 0.4*\H){\small $\rho_3$ :\strut};
\node[anchor=west] at (-0.85+\Hx,3*\Hy + 0.4*\H){\small $\rho_2$ :\strut};
\node[anchor=west] at (-0.85+\Hx,4*\Hy + 0.4*\H){\small $\rho_1$ :\strut};

\foreach \y/\i/\c in {  0/0/4,0/1/3,0/2/2,0/3/1,0/4/1,0/5/2,0/6/3,0/7/4,
                        1/0/1,1/1/2,1/2/3,1/3/4,1/4/4,1/5/3,1/6/2,1/7/1,
                        2/0/4,2/1/4,2/2/3,2/3/3,2/4/2,2/5/2,2/6/1,2/7/1,
                        3/0/1,3/1/1,3/2/2,3/3/2,3/4/3,3/5/3,3/6/4,3/7/4,
                        4/0/2,4/1/2,4/2/2,4/3/2,4/4/4,4/5/4,4/6/4,4/7/4}{

\draw[black!20!white] (\i*\W+5*\Hx,\y*\Hy) -- (\i*\W+5*\Hx,\H+\y*\Hy);

\ifthenelse{\i=0}
{\draw[thick] (5*\Hx,\y*\Hy) -- (8*\W+5*\Hx,\y*\Hy) -- (8*\W+5*\Hx,\H+\y*\Hy) -- (5*\Hx,\H+\y*\Hy) -- (5*\Hx,\y*\Hy);}
{}
\node at (\i*\W+5*\Hx+0.5*\W,\y*\Hy + 0.4*\H){\scriptsize \c \strut};

\foreach \x in {1,2,3,4}{
\ifthenelse{\y=0 \AND \i=0}
    {\node at (4*\W+\x*\Hx,-0.7*\Hy){\small $t=\x$};}
    {}
    
\ifthenelse{\x>1 \AND \i=0}{
    \draw[-stealth] (\x*\Hx-0.5,\y*\Hy+0.5*\H) -- (\x*\Hx-0.2,\y*\Hy+0.5*\H);}
    {}
    
\ifthenelse{\i=0}{
  \draw[thick] (\x*\Hx,\y*\Hy) -- (8*\W+\x*\Hx,\y*\Hy) -- (8*\W+\x*\Hx,\H+\y*\Hy) -- (\x*\Hx,\H+\y*\Hy) -- (\x*\Hx,\y*\Hy);}
  {}

\ifthenelse{\i<4 \AND \(\x>\c \OR \x=\c\)}
  {\draw[fill=white!75!ForestGreen]
  (\i*\W+\W+\x*\Hx,0+\y*\Hy) rectangle (\i*\W+\x*\Hx,\H+\y*\Hy);
  \draw (\i*\W+\x*\Hx,0+\y*\Hy) -- (\i*\W+\x*\Hx,\H+\y*\Hy);
  \draw (\i*\W+\W+\x*\Hx,0+\y*\Hy) -- (\i*\W+\W+\x*\Hx,\H+\y*\Hy);
  \node at (\i*\W+\x*\Hx+0.5*\W,\y*\Hy + 0.4*\H){\scriptsize $a$\strut};}
  {
  \ifthenelse{\i>3 \AND \(\x>\c \OR \x=\c\)}
  {\draw[fill=white!75!Blue]
  (\i*\W+\W+\x*\Hx,\y*\Hy) rectangle (\i*\W+\x*\Hx,\H+\y*\Hy);
  \draw (\i*\W+\x*\Hx,\y*\Hy) -- (\i*\W+\x*\Hx,\H+\y*\Hy);
  \draw (\i*\W+\W+\x*\Hx,\y*\Hy) -- (\i*\W+\W+\x*\Hx,\H+\y*\Hy);
  \node at (\i*\W+\x*\Hx+0.5*\W,\y*\Hy + 0.4*\H){\scriptsize $b$\strut};}
  {\begin{pgfonlayer}{bg}
  \draw[black!20!white] (\i*\W+\x*\Hx,\y*\Hy) -- (\i*\W+\x*\Hx,\H+\y*\Hy);
  \draw[black!20!white] (\i*\W+\W+\x*\Hx,\y*\Hy) -- (\i*\W+\W+\x*\Hx,\H+\y*\Hy);
  \end{pgfonlayer}}
  }
}
}

\end{tikzpicture}
  \caption{Five ways of producing the output
  sequence $a^4b^4$, and the corresponding origin functions.}\label{fig:delay}
\end{figure}
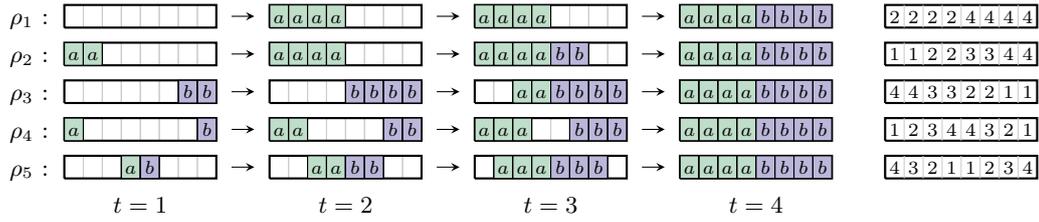

\myparagraph{Origin information.} Differences in the way
transducers produce their output are best captured by the notion of
\emph{origin information}, initially proposed for streaming
string transducers~\cite{DBLP:conf/icalp/Bojanczyk14}. An
origin function maps any position of the output word to the input
position where it was produced. In an execution such as $\rho$ above, any
position of $w_i$ has origin $i$. \cref{fig:delay}
illustrates five different ways of producing the same output word
$a^4b^4$ (the input word of length 4 is irrelevant here and not depicted).
Consider for example the execution $\rho_1$. When reading the first
input symbol (timestep $t=1$), nothing is produced. At timestep $t=2$,
$a^4$ is produced, and its positions have origin $2$, as depicted to
the right of the figure. The sequence $b^4$ is produced at timestep
$t=4$ so its corresponding positions have origin $4$. Note that
$\rho_3,\rho_4,\rho_5$ do not correspond to finite
transducer executions, as $a^4b^4$ is not produced from
left to right.

\myparagraph{Delay for finite transducers.} A natural way of comparing two finite transducer executions
$\rho_1,\rho_2$ on the same input and producing the same output, is to
compare the origin functions $o_1,o_2$ they induce, and in particular 
how much one is ahead of the other. In \cref{fig:delay},
both $\rho_1$ and $\rho_2$ produce the same output
from left to right, yet at different speed:
$\rho_1$ produces bigger chunks of output at lower frequency.
The \emph{delay} between both executions is $2$ for $t \in \{1,3\}$
and $0$ for $t \in \{2,4\}$.
The global delay $\delay(\rho_1,\rho_2)$
is defined as the maximal delay along the executions,
in this case $2$. Two transducers $T_1$ and $T_2$ are said to be
\emph{$k$-delay equivalent} if for each run $\rho_1$ of $T_1$ or $T_2$
the other transducer has a run $\rho_2$
with the same input and output satisfying~$\delay(\rho_1,\rho_2) \leq k$.
This delay notion enjoys two important properties:    
\begin{description}
    \item[Regularity:] While the set $E = \{ (\rho_1,\rho_2)\mid
      \textsf{in}(\rho_1)=\textsf{in}(\rho_2)\wedge
      \textsf{out}(\rho_1)=\textsf{out}(\rho_2)\}$ is not
      regular\footnote{Regularity here is defined as classical
      regularity of word languages, where a pair $(\rho_1,\rho_2)$,
      are encoded as a single word over a product alphabet of
      transitions.} in
      general for finite transducer executions, its restrictions to
      pairs of executions with bounded delay is regular: For every $k \in \mathbbm{N}$, the set $
      E_k = \{ (\rho_1,\rho_2)\in E\mid\delay(\rho_1,\rho_2)\leq k\}
      $
      is regular~\cite{DBLP:conf/icalp/FiliotJLW16}. As finite automata compose well with other
    abstract machines, this allows to use the delay
    in all kinds of constructions
    while preserving 
    good decidability properties. For instance $k$-delay equivalence
    is decidable~\cite{DBLP:conf/icalp/FiliotJLW16}.
  \item[Completeness:] For any two finite transducers $T_1,T_2$
    defining string-to-string \emph{functions}, there exists a
    computable bound $k$ such that $T_1$ and $T_2$ are equivalent
    iff they are $k$-delay equivalent.
    The completeness even holds in broader classes~\cite{DBLP:conf/icalp/FiliotJLW16},
    such as \emph{finite-valued} transducers,
    for which there is a global bound on the number of outputs mapped
    to a single input.
\end{description}


\myparagraph{Applications.}
Due to its regularity and completeness,
the notion of delay, while basic,
proves to be very powerful,
and is omnipresent in the study of finite transducers. It provides decidable approximations
    of various undecidable decision
    problems, which reveals to be exact
for broad classes of transducers,
    such as finite-valued
    transducers~\cite{DBLP:conf/icalp/FiliotJLW16}. Various patterns characterizing
    important subclasses of transducers are based on the delay notion:
    For instance, transducers which are \emph{sequential}~\cite{DBLP:journals/tcs/Choffrut77,DBLP:journals/tcs/BealCPS03} (i.e., definable by an input-deterministic transducer);
    equivalent to finite union of sequential functions~\cite{DBLP:conf/dlt/JeckerF15,DBLP:conf/fossacs/DaviaudJRV17};
    finite-valued~\cite{DBLP:conf/mfcs/SakarovitchS08,DBLP:journals/mst/SakarovitchS10}. Moreover,
    it has been used to show that any finite-valued transducer can be
    decomposed as a union of 1-valued
    transducers~\cite{DBLP:journals/ita/Weber96,DBLP:conf/stacs/SakarovitchS08}. Canonical notions
    for finite transducers are also based on delay: for
    input-deterministic
    transducers~\cite{DBLP:journals/tcs/Choffrut03} and $1$-valued
    transducers~\cite{DBLP:journals/siamcomp/ReutenauerS91}.

\myparagraph{Streaming string transducers and regular functions.} 
\emph{Streaming string transducers} (SST)
are obtained by equipping deterministic finite automata with a finite
set of registers that store output~\cite{DBLP:conf/fsttcs/AlurC10}.
Those registers cannot be tested, but are updated by
concatenating them, or prepending/appending some symbols to them. E.g., consider a single-state SST with a loop which, whatever it
reads as input, updates its single register $O$ by the operation $O
:= aOb$, and finally outputs the whole content of $O$. The execution
$\rho_5$ in \cref{fig:delay} represents an execution of this
SST. Consider an equivalent single-state SST with two registers $X,Y$ which, whatever the input, executes  $X := Xa$ and $Y:=bY$, and
finally outputs the concatenation $XY$ ($\rho_4$ in
\cref{fig:delay} produces the output in this way). While
equivalent, those two transducers are not equivalent if the origin
information is included in the semantics.

SSTs define a robust set of string-to-string functions,
called \emph{regular functions}, which can be equivalently defined by
deterministic two-way transducers~\cite{DBLP:conf/fsttcs/AlurC10},  
monadic second-order
transductions~\cite{DBLP:books/daglib/0030804,DBLP:journals/tocl/EngelfrietH01,DBLP:conf/icalp/AlurD11},
regular transducer
expressions~\cite{DBLP:conf/csl/AlurFR14,DBLP:journals/ijfcs/BaudruR20,DBLP:conf/lics/DartoisGGK22,DBLP:conf/lics/DartoisGK21},
and a logic with origins~\cite{DBLP:conf/lics/DartoisFL18}. Regular
functions also enjoy a Krohn-Rhodes decomposition
theorem~\cite{DBLP:journals/corr/abs-1810-08760}. SSTs have a decidable
equivalence problem~\cite{DBLP:journals/siamcomp/Gurari82}, have been applied
to the verification of \emph{list-processing programs}~\cite{DBLP:conf/popl/AlurC11}, and have been implemented for evaluating string transformations~\cite{DBLP:conf/popl/AlurDR15}.

There are two natural ways of extending the delay defined for
finite transducers to compare
executions of SSTs with the same input and output:
we can either simply compare the number of produced output letters
without caring about the positions where it is produced,
or we can count the number of positions whose
output has already been produced in one run
but not in the other.
Both methods match the previously defined delay
if the output is produced from left to right,
but unfortunately neither preserves
conjointly the regularity and the completeness
once other output production mechanisms, such as in SSTs, are considered\footnote{We refer the
  interested reader to the appendix for more details about those
  natural extensions and an explanation of why they are not
  satisfactory.}. As the basic notions of delay for SSTs
fail to satisfy good properties,
better ways of comparing SSTs were introduced,
yet none that conjointly has good decidability
\emph{and} completeness properties.
For instance, \emph{bounded regular resynchronizers} can
alter origin functions of SSTs in a controlled manner
while preserving good decidability properties,
but they are are not complete~\cite{DBLP:journals/corr/abs-1807-08053,DBLP:phd/hal/Bose21}.
In the restricted setting of \emph{single-register} SSTs,
a regular and complete notion of delay is defined based on word equations~\cite{DBLP:conf/stacs/GallotMPS17}.
This approach was applied to prove a decomposition theorem
for single-register SSTs, but unfortunately fails  when more registers
are allowed.

\myparagraph{Contributions.} We define a delay notion for comparing SST
executions, based
on the origin functions they induce,  that is both regular (\cref{thm:trueDelayResync}) and
complete for fundamental SST decision problems including equivalence
(\cref{thm:completeFunction,cor:completeMinim,cor:completeRat}). This
delay notion is described in the next subsection. We first present
our results. Since
our notion is based on origins, it more generally applies to regular
functions with origin information. SSTs are known to be
equivalent to deterministic two-way and MSO transducers,
via origin-preserving encodings~\cite{DBLP:conf/icalp/Bojanczyk14}, so
we obtain as a corollary
that our delay notion is also complete for equivalence of
deterministic two-way transducers and MSO transducers (\cref{cor:general}).

The origin semantics allows us to recover decidability of several decision problems:
While non-deterministic SST equivalence is undecidable,
we can decide whether two non-deterministic SSTs define the same relations \emph{with
  same origins}. However, asking for identical origins is very
restrictive: it fails to detect equivalence if the origins are
perturbed ever so slightly. This observation was already made
in~\cite{DBLP:conf/icalp/FiliotJLW16} in the context of finite transducers, where it is proposed to relax several decision problems,
such as equivalence with same origins, to decision problems with
\emph{close} origins, such as equivalence up to a given delay bound
$k$. We prove that the inclusion, equivalence and variable
minimization problems up to a given delay bound are decidable for
(non-deterministic) SSTs
(\cref{thm:complexityInclusion,thm:delayVarMin}), while in general
those problems are undecidable. Since our delay notion is complete for
equivalence of (deterministic) SSTs, the latter result provides an
alternative proof that such SSTs have decidable equivalence problem,
and sheds light on the intrinsic reasons why SSTs executions are
equivalent.

\myparagraph{Delay for SSTs.} Our notion of delay is based on the following observation:
The order of production of each output segment
that is a power of a small word
should have no impact on the delay.  For every $\ell \in \mathbbm{N}$,
we introduce the measure $\delay_\ell$,
which first decomposes the output
into blocks that are powers of words smaller than or equal to $\ell$.
Then, at any position $j$ ending a block (instead of \emph{any} position), we
measure the maximal difference at any timestep $t$, between the number
of output positions at the left of $j$ produced before timestep $t$ in
the first and second executions. We then take the maximal such value
for all~$j$. Consider $\rho_4$ and $\rho_5$ for example. $\delay_1$ splits the output $a^4b^4$ of \cref{fig:delay}
into two blocks, $a^4$ and $b^4$. Consider ending block position
$j=4$. At any timestep, the maximal difference of quantity of outputs produced at
the left of $j$ is always $0$, because $\rho_4,\rho_5$ produce
exactly one symbol per timestep before position $j=4$. The same
reasoning applies for ending block position $j=8$, where here instead,
exactly two symbols are produced by $\rho_4,\rho_5$ at any
timestep, at the left of position $j=8$. Hence $\delay_1(\rho_4,\rho_5)=0$.  We refer to \cref{sec:resync} for a
smooth introduction to our delay notion and the main results and to
\cref{subsec:trueDelayResyncComplete} for a proof of its
completeness, and to \cref{subsec:trueDelayResync} for a proof of its regularity. The completeness proof is perhaps the most involved. It
is based on a key pumping lemma (\cref{lem:technical}), which
intuitively states that for any SST, there exist computable
bounds $k$ and $\ell$ such that, if the delay $\delay_\ell$ between
two executions on the same input/output is greater than $k$, then those two executions can be
pumped to construct two executions over the same input but with
\emph{different} outputs.

\section{Preliminaries}\label{sec:prel}

We fix our notation.
Let $\N$ denote the set of non-negative integers.
The interval between integers $a$ and $b$ including resp.\ excluding $a$ and $b$ is denoted $[a,b]$ resp.\ $(a,b)$.

\myparagraph{Free monoid.}
Given a finite alphabet $\alp$,
the \emph{free monoid} over $\alp$ is
the monoid $(\alp^*,\cdot, \epsilon)$ defined as follows.
The set $\alp^*$ is composed of finite sequences of elements of $\alp$, called \emph{words}.
The operation $\cdot$ is the usual word concatenation. The neutral element $\epsilon$ is the empty word.

A subset $L \subseteq \alp^*$ is called a \emph{language}. 
Given a word $u = a_1\cdots a_n\in \alp^*$, $|u|$ denotes its length,
$u[i]$ denotes its $i$th letter $a_i$, $u[i,j]$ denotes its infix
$a_i\cdots a_j$.
Given an additional word $v = b_1\cdots b_n \in \alp^*$, $u \otimes v$ denote the \emph{convolution} $\binom{a_1 \cdots a_n}{b_1 \cdots b_n} \in (\alp \times \alp)^*$.

\myparagraph{Automaton.}
A \emph{(finite state) automaton} is a tuple $\aut = (\alp,Q,I,\Delta,F)$
composed of a finite alphabet $\alp$,
a finite set of states $Q$,
a set of initial states $I \subseteq Q$,
a set of final states $F \subseteq Q$,
and a set of transitions
$\Delta \subseteq Q \times \alp \times Q$.
A \emph{run} of $\aut$
is a sequence $\rho = q_0 a_1 q_1 a_2 q_2 \cdots a_n q_n \in Q (\alp Q)^*$
such that $(q_{i-1}, a_i, q_i) \in \Delta$ for every $1 \leq i \leq n$.
The \emph{input} of $\rho$ is the word $a_1a_2 \cdots a_n \in \alp^*$.
The run $\rho$ is called \emph{accepting} if its first state is initial ($q_0 \in I$)
and its last state is final ($q_n \in F$).
The automaton $\aut$ is \emph{deterministic} if $I$ is a singleton and
$\Delta$ is expressed as a function $\delta: Q \times \alp \rightarrow Q$.
The \emph{language recognized} by $\aut$ is the set $\lang{\aut} \subseteq \alp^*$ composed of the inputs of all its accepting runs.

\myparagraph{Substitutions monoid.}
Given a finite alphabet $\alp$, a finite set $\var =
\{X_1,X_2,\ldots,X_n\}$ of \emph{variables}, and a designated
\emph{output variable} $\varO \in \var$, 
the \emph{(copyless) substitutions monoid} $(\sub{\var}{\alp},\circ,\idsub{\var})$
is defined as follows.
The set $\sub{\var}{\alp}$
contains all functions
$
\sigma: \var \rightarrow (\var \cup \alp)^*
$
such that no variable appears twice in the image of a variable
and no variable appears in the images of two distinct variables, i.e.,
the word $\sigma(X_1)\dots \sigma(X_n)$ does not contain twice the
same variable.
The composition $\sigma_1 \circ \sigma_2$
of two substitutions $\sigma_1,\sigma_2 \in \sub{\var}{\alp}$ is obtained by first applying $\sigma_2$, and then $\sigma_1$. Formally, it is the function $\hat{\sigma}_1(\hat\sigma_2(\cdot))$,
where each substitution $\sigma$ is morphically extended to $\hat \sigma$ over $(\var\cup \alp)^*$ by letting $\hat{\sigma}(X) = \sigma(X)$ and $\hat{\sigma}(\alpha) = \alpha$ for $\alpha\in \alp$.
We write $\sigma$ in place of $\hat\sigma$.
For instance, the substitution composition $(\sigma_1\colon X \mapsto \varepsilon) \circ (\sigma_2\colon X \mapsto aX) \circ (\sigma_3\colon X \mapsto bXc)$
maps $X$ to $bac$, since $\sigma_3(X) = bXc$, $\sigma_2(bXc) = baXc$ and $\sigma_1(baXc) = bac$.
%
The neutral element $\idsub{\var}$ is the identity function mapping
each variable $X \in \var$ to itself.
Finally, we denote by $\sigma_{\varepsilon}$ the substitution mapping
each variable $X \in \var$ to $\epsilon$.
We recall that $\varO$ is the designated output variable; the \emph{output} of a substitution $\sigma$ is the word $ \outp{\sigma} = (\sigma_{\varepsilon} \circ \sigma)(\varO)$.

\myparagraph{Streaming string transducer.}
A \emph{streaming string transducer} (\SST{}) is a tuple $\sst = (\alp, Q, I, \delta, F, \var, \varO, \kappa, \kappa_F)$
where $\aut_{\sst} = (\alp, Q, I, \delta, F)$ is a deterministic
automaton, called \emph{underlying automaton} of $\sst$,
$\var$ is a finite set of variables (also called registers),
$\varO \in \var$ is a final variable,
$\kappa\colon \delta \rightarrow \sub{\var}{\alpo}$
is an output function, and $\kappa_F\colon Q \rightarrow \sub{\var}{\alpo}$ is a final output function.
A \emph{run} of $\sst$ is a run $\rho = q_0 a_1 q_1 a_2 q_2 \cdots a_n q_n \in Q (\alp Q)^*$
of $\aut_{\sst}$,
we define $\kappa(\rho)$ as the substitution obtained by composing
sequentially all
substitutions occurring on the transitions and the final substitutions,
i.e. $\kappa(\rho) = \kappa((q_0,a_1,q_1)) \circ \kappa((q_1,a_2,q_2)) \circ \cdots \circ \kappa((q_{n-1},a_n,q_n)) \circ \kappa_F(q_n) \in \sub{\var}{\alpo}$.
The \emph{output} of the run $\rho$ is the word $(\sigma_{\varepsilon} \circ \kappa(\rho))(\varO)$.
The \emph{transduction} $\rel{\sst}$ \emph{recognized} (also called \emph{realized}) by $\sst$ is
the set of pairs $(u,v) \in \alp^* \times \alpo^*$
such that $\tra$ has an accepting run with input~$u$ and output~$v$.
Non-deterministic \SST{} are defined the same way except that $\aut_\sst$ is non-deterministic.
We emphasize that the transduction realized by a (deterministic) \SST{} is a partial function.

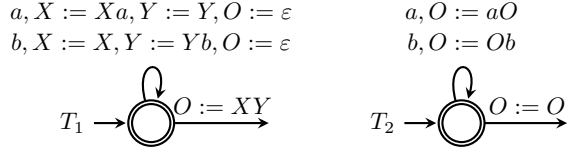
\begin{figure}[t]
\centering
\begin{tikzpicture}[baseline,thick,scale=1,every node/.style={scale=0.9},every loop/.style={looseness=10}]
  \tikzstyle{every state}+=[inner sep=3pt, minimum size=1.75em];
  \node[state, initial, initial text = $T_1$, accepting] (0) {};
  \coordinate[right of =0, xshift = -1em] (e);
  \draw[->] (0) edge[] node {$O := XY$} (e);
  \draw[->] (0) edge[loop above] node {$\begin{matrix} a, X := Xa, Y := Y, O := \varepsilon \\ b, X := X, Y := Yb, O := \varepsilon\end{matrix}$} ();
\end{tikzpicture}
    \qquad
    \begin{tikzpicture}[baseline,thick,scale=1,every node/.style={scale=0.9},every loop/.style={looseness=10}]
      \tikzstyle{every state}+=[inner sep=3pt, minimum size=1.75em];
      \node[state, initial, initial text = $T_2$, accepting] (0) {};

      \coordinate[right of =0, xshift = -1.5em] (e);
  
      \draw[->] (0) edge[] node {$O := O$} (e);
      \draw[->] (0) edge[loop above] node {$\begin{matrix} a, O := aO \\ b, O := Ob \end{matrix}$} ();
    \end{tikzpicture}
  \caption{Two deterministic \SST{}s that realize the same sorting function, cf.\ \cref{ex:intro}.}\label{fig:intro}
\end{figure}

\begin{example}\label{ex:intro}
    Depicted in \cref{fig:intro} are deterministic \SST{}s $T_1$ and
    $T_2$ that realize the same sorting function $f\colon \{a,b\}^*
    \to \{a,b\}^*$ that maps $u \in \{a,b\}^*$ to $a^mb^n$, where $m$ (resp.\ $n$) is the number of occurrences of $a$ (resp.\ $b$) in $u$.
\end{example}

\myparagraph{Origin information.}
A \emph{word with origins} is a word $\tilde u := u \otimes o \in (\Sigma \times \mathbbm{N})^*$ where each letter of $u$ is annotated with a natural number signifying its origin in time.
A \emph{transduction with origins} is a relation $R \subseteq \Sigma^* \times (\Sigma \times \mathbbm{N})^*$.

Naturally, we associate with a sequence $\lambda = \sigma_1 \sigma_2 \cdots \sigma_n \in \sub{\var}{\alp}^*$ of substitutions the output word with origins $\oriout{\lambda} = \outp{\lambda} \otimes i_1\cdots i_{|\outp{\lambda}|}$ with $i_j = \outp{\sigma_1' \sigma_2' \cdots \sigma_n'}[j]$ for $1 \leq j \leq |\outp{\lambda}|$ where for all $1 \leq t \leq n$, the substitution $\sigma_t' \in \sub{\var}{\N}$ is obtained by replacing each output letter in $\sigma_t$ with the number~$t$.
Furthermore, we associate with an \SST{} $T$ the \emph{transduction with origins} $R_{\origin{}}(T)$ that is the set of pairs $(u,\oriout{\lambda}) \in \Sigma^* \times (\Sigma \times \mathbbm{N})^*$ such that $T$ has an accepting run $\rho$ with input $u$ and sequence of substitutions $\kappa(\rho) = \lambda$.
For example, regarding the \SST{s} $T_1$ and $T_2$ from \cref{fig:intro}, we obtain that $(abaa,\binom{aaab}{1342}) \in R_{\origin{}}(T_1)$ and $(abaa,\binom{aaab}{4312}) \in R_{\origin{}}(T_2)$.


\section{Delay measure}\label{sec:resync}

As mentioned in the introduction, the concept of delay has proven to be a useful tool in the understanding of finite transducers.
Our goal is to introduce a robust delay measure suitable to gain a better understanding of streaming string transducers.
Towards that, we informally recall the notion of delay for finite transducers:
The delay between two finite transducer computations that produce the same output in the end, is a measure for how much ahead one output is compared to the other during the computation.
In other words, the difference between the length of the so-far produced output is measured.

A key difference between finite transducers and \SST{s} is that the
former build their outputs from left to right
while \SST{s} do not have this restriction.
To design a notion of delay taking this into account,
we use origin information to define a notion of \emph{weight difference}.
These notions are illustrated in \cref{ex:notions}.

\myparagraph{Weight difference.}
Given a word with origins $\tilde u := u \otimes o \in (\Sigma \times \mathbbm{N})^*$, a time $t \in \mathbbm{N}$, and $j \in \mathbbm{N}$,
we define the \emph{positional weight} $\weight_{j,t}(\tilde u) \in \N$ as the number of positions of $u$ up to~$j$ whose origin is no later than $t$, i.e.,
\[
\weight_{j,t}(\tilde u) = \big |\{ i \in \{1,2,\ldots, \text{min}(j,|u|)\} \st o[i] \leq t \}\big |.
\]
Note that $\weight_{j,t}(\tilde u) =
\weight_{|u|,t}(\tilde u)$ for all $j\geq
|u|$.
Given a second word with origins $\tilde v \in (\Sigma \times \mathbbm{N})^*$, and $j_1,j_2\in \mathbbm{N}$, we define the \emph{weight difference} as
\[
     \diff_{j_1,j_2,t}(\tilde u,\tilde v) =
     \big| \weight_{j_1,t}(\tilde u) - \weight_{j_2,t}(\tilde v) \big| \text{ and } \diff_{j,t}(\tilde u,\tilde v) = \diff_{j,j,t}(\tilde u, \tilde v).
\]
Furthermore, we define the \emph{maximal weight difference} as
\[
    \maxdiff_{j_1,j_2}(\tilde u,\tilde v) = \max_{t \in \N}
    \big(\diff_{j_1,j_2,t}(\tilde u,\tilde v)\big)  \text{ and } \maxdiff_j(\tilde u,\tilde v) = \max_{t \in \N}
\big(\diff_{j,t}(\tilde u,\tilde v)\big).
\]
We remark that the value of \maxdiff{} is bounded even though $t$ takes infinitely many values.
Also, we note that $\maxdiff_0(\tilde u,\tilde v)=0$.
We illustrate these notions.

\begin{example}\label{ex:notions}
We base our example on the \SSTs $T_1,T_2$ from \cref{fig:intro}.
On input $abaaa$ both \SSTs produce output $aaaab$.
The associated origins differ, we have $\tilde u = \binom{aaaab}{13452}$ and $\tilde v = \binom{aaaab}{54312}$.
We obtain $\diff_{2,0}(\tilde u,\tilde v) = 0$, $\diff_{2,1}(\tilde u,\tilde v) = 1$, $\diff_{2,2}(\tilde u,\tilde v) = 1$, $\diff_{2,3}(\tilde u,\tilde v) = 2$, $\diff_{2,4}(\tilde u,\tilde v) = 1$, $\diff_{2,5}(\tilde u,\tilde v) = 0$, and $\maxdiff_{2}(\tilde u,\tilde v) = 2$.
More generally, on input $aba^i$ the output is $a^{i+1}b$ and with origins we have 
$\tilde u_i = \left(\begin{smallmatrix} a & a & a & \cdots & a & a & b\\ 1 & 3 & 4 & \cdots & i+1 & i+2 & 2\end{smallmatrix}\right)$ and $\tilde v_i = \left(\begin{smallmatrix} a & a & \cdots & a & a & a & b\\ i+2 & i+1 & \cdots & 4 & 3 & 1 & 2\end{smallmatrix}\right)$ for all $i > 0$.
Moreover, $\maxdiff_{(i+1)/2} (\tilde u_i,\tilde v_i) = (i+1)/2$ for all odd $i$.
\end{example}

The first idea of a delay notion for \SST{s} similar to finite
transducers is to consider the maximal weight difference that can occur.
In~\cref{ex:notions}, since $T_1$ builds the $a$-output block from left to right and $T_2$ from right to left, their maximal weight difference is unbounded even though $T_1 \equiv T_2$.
Hence, this first idea of a delay notion violates the completeness requirement.
To avoid this problem,
we would like our notion of delay to reflect that,
for a periodic block, it is not important if a repetition of the period is appended or prepended:
the result is the same.
Therefore, we only measure the difference at the end of periodic blocks and not inside of them.
To this end, we introduce a new notion.

\myparagraph{Factors, cuts.}
The \emph{primitive root} of a word $u\in \alp^*$, denoted $\proot(u)$, is the shortest word $w$ such that $u = w^k$ for some positive integer $k$.
We call a word $u\in\alp^*$ \emph{primitive} if $\proot(u)=u$.

Let $u \in \alpo^*$ and $\ell > 0$.
We cut $u$ into factors such that each factor has a primitive root of length at most $\ell$.
The factors are chosen inductively from left to right in a way to maximize the size of each factor as follows.
The first factor $u_1$ of $u$ is the longest prefix of $u$ such that $|\proot(u_1)|\leq \ell$, the $i$th factor $u_i$ of $u$ is the longest prefix of $u'$ such that $|\proot(u_i)|\leq \ell$ where $u = u_1\cdots u_{i-1}u'$.
We refer to this unique \emph{$\ell$-factorization} as $\fact_\ell(u)$.
Moreover, we denote by $\cut_\ell(u)$ the set that contains the end positions of the factors referred to as \emph{$\ell$-cuts}.
For example, consider $u = aaababcbabaaaaa$ and $\ell = 2$, then the unique $\ell$-factorization is $aaa|ba|bc|baba|aaaa|$ and its $\ell$-cuts are $\{3,5,7,11,15\}$.
%



\myparagraph{Delay.} 
We define delay for a word and two origin annotations (of said word) by considering the maximal weight difference \emph{at the cut positions} which are obtained via the factorization into periodic words as introduced above.
Formally, given a word $w \in \Sigma^*$ and two annotated versions $\tilde u,\tilde v$ with origin, i.e., $\tilde u := w \otimes o_1,\tilde v := w \otimes o_2 \in (\Sigma \times \mathbbm{N})^*$, we define the \emph{$\ell$-delay} $\delay_{\ell}(\tilde u,\tilde v)$ (delay for short) as
\[
    \delay_{\ell}(\tilde u,\tilde v) =
    \max_{j \in \cut_{\ell}(w)} \maxdiff_j(\tilde u,\tilde v).
\]

Going back to \cref{ex:notions}, we have $\delay_{1}(\tilde u_i,\tilde v_i) = 0$, because the $1$-factorization of $a^{i+1}b$ is $a^{i+1}|b|$ and we have $\maxdiff_{i+1}(\tilde u_i,\tilde v_i) = \maxdiff_{i+2}(\tilde u_i,\tilde v_i) = 0$.




We apply the delay notion to transductions with origin.
Intuitively, two such transductions are ``close'' if for every pair $(u, w \otimes o_1)$ (from one transduction) there is some pair $(u, w \otimes o_2)$ (from the other transduction) such that the delay between these outputs with origins is small.
Let $R_1, R_2$ denote transductions with origin.
We say that $R_1$ is \emph{$(k,\ell)$-included} in $R_2$ (written $R_1 \subseteq_{k,\ell} R_2$) if the following holds: for all $(u,w \otimes o_1)\in R_1$,
there exists $(u,w \otimes o_2)\in R_2$
such that $\delay_{\ell}(w \otimes o_1,w \otimes o_2) \leq k$.
We say that $R_1$ is \emph{$(k,\ell)$-equivalent} to $R_2$ (written $R_1 \equiv_{k,\ell} R_2$) if $R_1 \subseteq_{k,\ell} R_2$ and conversely $R_2 \subseteq_{k,\ell} R_1$.
We are ready to state our first main result which illustrates the generality of our delay notion.

\begin{theorem}[Completeness]\label{thm:completeFunction}
Given two \SST{s} $T_1$ and $T_2$, there exist computable integers $k,\ell$ such that 
$T_1 \equiv T_2$ iff $R_{\origin{}}(T_1) \equiv_{k,\ell} R_{\origin{}}(T_2)$.
\end{theorem}

\cref{subsec:trueDelayResyncComplete} is devoted to the proof of \cref{thm:completeFunction} and \cref{sec:applications} illustrates some consequences of \cref{thm:completeFunction}.
As the delay between streaming string transducers only depends on their induced transductions with origin, we are able to state a more general result.
\cref{cor:general} uses the fact that deterministic streaming string transducers, deterministic two-way transducers and MSO transducers are equally expressive -- they characterize the so-called \emph{regular functions} -- and every regular function given in one formalism can be translated into every other one without changing its induced transduction with origins \cite{DBLP:books/daglib/0030804,DBLP:journals/tocl/EngelfrietH01,DBLP:conf/icalp/AlurD11}\sarah{refs correct?}.

\begin{corollary}\label{cor:general}
  Given deterministic two-way transducers resp.\ MSO transducers $T_1$ and $T_2$.
  Let $R_1$ and $R_2$ denote their induced transductions with origin.
  There exist computable integers $k,\ell$ such that 
  $T_1 \equiv T_2$ iff $R_1 \equiv_{k,\ell} R_2$.
\end{corollary}


We focus on the second main aspect of our delay measure, namely, regularity.
Meaning that for every $k, \ell \in \N$, we would like to construct a finite automaton that accepts (suitable representations of) pairs $(w \otimes o_1,w \otimes o_2)$ with $\delay_{\ell}(w \otimes o_1,w \otimes o_2) \leq k$.
As finite automata enjoy good closure properties, this yields a useful tool to solve for instance decision problems up to fixed delay, cf.\ \cref{sec:applications}.
As mentioned in the paragraph before \cref{cor:general}, our delay measure is applicable to transductions with origins and is complete for several transducer models.
However, we need to pick some way to represent such transductions to show regularity.
Hence, we prove our second main result for streaming string transducers.

\begin{theorem}[Regularity]\label{thm:trueDelayResync}
  Let $\subs \subseteq \sub{\var}{\alp}$ be a \emph{finite}
subset of $\sub{\var}{\alp}$, let $k \geq 0$ and $\ell > 0$.
  The following set is a regular language:
  \[
    \mathbb{D}_{k,\ell, \subs} =
     \{\lambda \otimes \mu \in (\subs \times \subs)^* \st \delay_{\ell}(\oriout{\lambda},\oriout{\mu}) \leq k \text{ and }|\lambda| = |\mu|\}.
\]
\end{theorem}
Note that $\lambda \otimes \mu \in \mathbb{D}_{k,\ell,\subs}$ implies $\outp{\lambda} = \outp{\mu}$ because $\delay_\ell$ is defined only for such substitution sequences.
We write $\mathbb{D}_{k,\ell}$ instead of $\mathbb{D}_{k,\ell,\subs}$ when $\subs$ is clear from the context.
We prove \cref{thm:trueDelayResync} in \cref{subsec:trueDelayResync} and show some applications of this result in \cref{sec:applications}.

\section{Completeness and regularity}
\subsection{Completeness of the delay notion}\label{subsec:trueDelayResyncComplete}

We now prove the completeness result for SST{s}, as stated in \cref{thm:completeFunction}.
%
To this end,
we show that whenever the delay between
two sequences of substitutions is sufficiently large
we can pump well-chosen factors to obtain
two sequences of substitutions producing outputs that are distinct.
Formally, given a sequence of substitutions $\lambda \in \mathcal{S}^*$
and $1 \leq s < t < |\lambda|$, we denote by
$\pump{\lambda}{(s,t]}$ the sequence of substitutions
$\lambda[1,t] \lambda(s,t] \lambda(t,|\lambda|)$
obtained by pumping the interval $(s,t]$,
and we prove the following:

\begin{lemma}\label{lem:technical}
Let $\mathcal{S}$ be a finite set of substitutions.
There exist computable integers $k,\ell \in \mathbbm{N}$
such that for every integer $C \in \mathbbm{N}$
and every pair $\lambda, \mu \in \mathcal{S}^*$
that satisfy $|\lambda| = |\mu|$,
$\outp{\lambda} = \outp{\mu}$,
and
$\delay_{C \ell}(\oriout{\lambda},\oriout{\mu}) > C^2 k$,
there exist $0 \leq t_1 < t_2 < \ldots < t_C < |\lambda|$
satisfying
\[
\outp{
\pump{\lambda}{(t_i,t_j]}}
\neq
\outp{
\pump{\mu}{(t_i,t_j]}}
\textup{ for every $1 \leq i < j\leq C$}.
\]
\end{lemma}
Moreover we can choose $k$ and $\ell$
exponential with respect to the number of variables of $\mathcal{S}$.
Before delving into the proof of \cref{lem:technical},
we argue that \cref{thm:completeFunction} follows as a corollary.

\begin{proof}[Proof of \cref{thm:completeFunction}.]
Let $\sst_1$ and $\sst_2$ be two SST{s}
with set of states $Q_1$ and~$Q_2$. By symmetry, it is sufficient to
show that $\rel{\sst_1} \subseteq \rel{\sst_2}$ iff
$R_{\origin{}}(\sst_1) \subseteq_{C^2k,C\ell} R_{\origin{}}(\sst_2)$, where $C = |Q_1| \cdot |Q_2| + 1$,
and $k$, $\ell$ are as in the statement of \cref{lem:technical}
with respect to the union $\mathcal{S} = \mathcal{S}_1 \cup \mathcal{S}_2$
of the substitutions used by $\sst_1$ and $\sst_2$. The right to left
direction of the 'iff' is immediate, as inclusion up to bounded delay
is stronger than inclusion.


We now apply \cref{lem:technical}
to prove the converse direction.
Suppose that
$R_{\origin{}}(\sst_1) \not\subseteq_{C^2k,C\ell} R_{\origin{}}(\sst_2)$.
Then there exists a pair with origins
$(u,w \otimes o_1)\in R_{\origin{}}(\sst_1)$
such that all the pairs with origin 
$(u,w \otimes o_2)\in R_{\origin{}}(\sst_2)$
with matching input and output satisfy
$\delay_{C\ell}(w \otimes o_1,w \otimes o_2) > C^2k$.
Two possible cases arise:
either there is no pair of the form
$(u,w \otimes o_2)$ in $R_{\origin{}}(\sst_2)$,
or there exists such a pair $(u,w \otimes o_2) \in R_{\origin{}}(\sst_2)$,
and it satisfies $\delay_{C\ell}(w \otimes o_1,w \otimes o_2) > C^2k$.
In the former case, we immediately get that
$R(\sst_1) \not\subseteq R(\sst_2)$,
since $(u,w)$ is in $R(\sst_1)$ but not in $R(\sst_2)$.
In the latter case, we get that there exists
a run $\rho_1$ of $\sst_1$
and a run $\rho_2$ of $\sst_2$
over the same input $u$ that
both produce the same output $w$,
but with very different origins functions:
$\delay_{C\ell}(\oriout{\kappa_1(\rho_1)},\oriout{\kappa_2(\rho_2)})
> C^2k$.
Then \cref{lem:technical} implies that we can find
$C$ intermediate points in $\kappa_1(\rho_1)$ and $\kappa_2(\rho_2)$
such that iterating the segment between any two points
yields sequences of substitutions producing distinct outputs.
As $C = |Q_1| \cdot |Q_2| + 1$,
two of these points mark a loop in both $\rho_1$ and $\rho_2$,
and pumping these loops creates runs of $\sst_1$ and $\sst_2$
with the same input but different outputs.
Since $\sst_2$ is deterministic, and therefore cannot map
an input word to two distinct output words,
this implies that $R(\sst_1)$ is not included in $R(\sst_2)$.
\end{proof}

\begin{figure}[t]
\centering
\begin{tikzpicture}[baseline,thick,scale=0.9,
  every path/.style={shorten <=0cm,shorten >=0cm}]
\pgfdeclarelayer{bg}    
\pgfsetlayers{bg,main}
\def\H{0.3}
\def\W{0.09}
\def\Hy{0.7}
\def\L{40}
\def\j{28}
\def\l{5}
\def\s{15}
\def\z{-0.75}
\def\sh{7.2}


\node[anchor=east] at (-0.1,0.4*\H){\small $\outp{\mu}$ :\strut};
\node[anchor=east] at (-0.65,0.4*\H-0.5*\z){\small $=$\strut};

\draw[fill=white!75!Yellow]
  (\j*\W - 2*\l*\W,0) rectangle (\j*\W,\H);
  
\draw[fill=white!75!Blue]
  (\j*\W,0) rectangle (\j*\W + \l*\W,\H);

\draw[thick] (0,0) -- (\L*\W,0) -- (\L*\W,\H) -- (0,\H) -- (0,0);

\node[anchor=east] at (-0.1,0.4*\H-\z){\small $\outp{\lambda}$ :\strut};

\draw[fill=white!75!Yellow]
  (\j*\W - 2*\l*\W,0-\z) rectangle (\j*\W,\H-\z);
  
\draw[fill=white!75!Blue]
  (\j*\W,0-\z) rectangle (\j*\W + \l*\W,\H-\z);

\draw[thick] (0,0-\z) -- (\L*\W,0-\z) -- (\L*\W,\H-\z) -- (0,\H-\z) -- (0,0-\z);

\draw[thick] (\j*\W,0) -- (\j*\W,\H);
\draw[thick] (\j*\W,0-\z) -- (\j*\W,\H-\z);
\draw[densely dotted] (\j*\W,-\z) -- (\j*\W, 0);
\node at (\j*\W, - 0.8*\Hy-0.25*\z){\small $j$\strut};

\draw[thick] (\j*\W - 2*\l*\W,0) -- (\j*\W - 2*\l*\W,\H);
\draw[thick] (\j*\W - 2*\l*\W,0-\z) -- (\j*\W - 2*\l*\W,\H-\z);
\draw[densely dotted] (\j*\W - 2*\l*\W,-\z) -- (\j*\W - 2*\l*\W, 0);
\node at (\j*\W - 2*\l*\W, - 0.8*\Hy-0.25*\z){\small $j_1$\strut};

\draw[thick] (\j*\W + \l*\W,0) -- (\j*\W + \l*\W,\H);
\draw[thick] (\j*\W + \l*\W,0-\z) -- (\j*\W + \l*\W,\H-\z);
\draw[densely dotted] (\j*\W + \l*\W,-\z) -- (\j*\W + \l*\W, 0);
\node at (\j*\W + \l*\W, - 0.8*\Hy-0.25*\z){\small $j_2$\strut};

\node[anchor=east] at (-0.1 + \sh,0.4*\H){\small $\outp{\pump{\mu}{(s,s']}}$ :\strut};
\node[anchor=east] at (-1.3 + \sh,0.4*\H-0.5*\z){\small $\neq$\strut};

\draw[fill=white!75!Yellow]
  (\j*\W*1.275 - 2*\l*\W + \sh,0) rectangle (\j*\W*1.275 + \sh,\H);
  
\draw[fill=white!75!Blue]
  (\j*\W*1.275 + \sh,0) rectangle (\j*\W*1.275 + \l*\W + \sh,\H);

\draw[thick] (0 + \sh,0) -- (\L*\W*1.25 + \sh,0) -- (\L*\W*1.25 + \sh,\H) -- (0 + \sh,\H) -- (0 + \sh,0);

\node[anchor=east] at (-0.1 + \sh,0.4*\H-\z){\small $\outp{\pump{\lambda}{(s,s']}}$ :\strut};

\draw[fill=white!75!Yellow]
  (\j*\W*1.2 - 2*\l*\W + \sh,0-\z) rectangle (\j*\W*1.2 + \sh,\H-\z);
  
\draw[fill=white!75!Blue]
  (\j*\W*1.2 + \sh,0-\z) rectangle (\j*\W*1.2 + \l*\W + \sh,\H-\z);

\draw[thick] (0 + \sh,0-\z) -- (\L*\W*1.25 + \sh,0-\z) -- (\L*\W*1.25 + \sh,\H-\z) -- (0 + \sh,\H-\z) -- (0 + \sh,0-\z);

\draw[thick] (\j*\W*1.275 + \sh,0) -- (\j*\W*1.275 + \sh,\H);
\draw[thick] (\j*\W*1.2  + \sh,0-\z) -- (\j*\W *1.2 + \sh,\H-\z);
\draw[densely dotted] (\j*\W*1.275 + \sh,-\z+\H) -- (\j*\W*1.275 + \sh,0);
\node at (\j*\W*1.275 + \sh, - 0.8*\Hy-0.25*\z){\small $j{+}z$\strut};
\draw[densely dotted] (\j*\W*1.2 + \sh,-\z+\H) -- (\j*\W*1.2 + \sh,0);
\node at (\j*\W*1.2 + \sh, 0.6*\Hy-1.25*\z){\small $j{+}y$\strut};

\draw[thick] (\j*\W*1.275 - 2*\l*\W + \sh,0) -- (\j*\W*1.275 - 2*\l*\W + \sh,\H);
\draw[thick] (\j*\W*1.2 - 2*\l*\W + \sh,0-\z) -- (\j*\W *1.2 - 2*\l*\W + \sh,\H-\z);

\draw[thick] (\j*\W*1.275 + \l*\W + \sh,0) -- (\j*\W*1.275 + \l*\W + \sh,\H);
\draw[thick] (\j*\W*1.2 + \l*\W + \sh,0-\z) -- (\j*\W *1.2 + \l*\W + \sh,\H-\z);

\end{tikzpicture}
  \caption{Illustration of the main idea used in the proof of \cref{lem:technical}.}\label{fig:illus}
\end{figure}
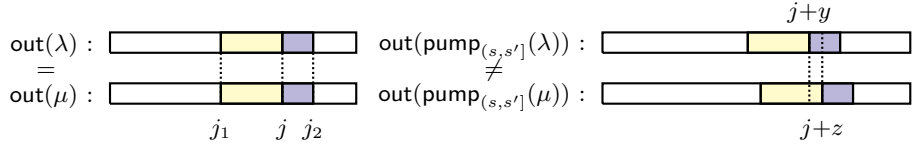

\begin{proof}[Proof overview of \cref{lem:technical}]
We consider two substitution sequences
$\lambda, \mu \in \mathcal{S}^*$ that have the same length,
produce the same output,
and satisfy 
$\delay_{C \ell}(\oriout{\lambda},\oriout{\mu}) > C^2 k$
for some $C,k,\ell \in \mathbbm{N}$.
This implies the existence of a cut position
$j \in \cut_{C \ell}(\outp{\lambda})$
and a point $t \in [0,|\lambda|)$
for which the
weight difference $\diff_{j,t}(\oriout{\lambda},\oriout{\mu})$
is equal to $C^2 k$.

Our proof is based on the following fact,
illustrated by \cref{fig:illus}:
If we choose $k$ such that
$\diff_{j,t}(\oriout{\lambda},\oriout{\mu})$ is sufficiently large,
there are intervals $(s,s'] \subset [1,|\lambda|)$
that, once pumped,
add distinct amount of output before position $j$,
thus creating a misalignment between two copies
of the letter $\outp{\lambda}[j]$ in the output words generated by
$\pump{\lambda}{(s,s']}$
and
$\pump{\mu}{(s,s']}$.
Moreover,
we show that carefully identifying patterns
occurring along the two substitution sequences
also allows us to ensure that some neighborhood
$\outp{\lambda}[\neighS,\neighT] = \outp{\mu}[\neighS,\neighT]$
of the position $j$
is preserved in both
$\outp{\pump{\lambda}{(s,s']}}$
and $\outp{\pump{\mu}{(s,s']}}$,
in a way that these two copies overlap, but not perfectly
(this is the most complex part of the proof).
At this point, we use the fact that
$j$ is a cut position of $\outp{\lambda}$:
since $j$ marks the position where the period changes,
and this position is not aligned properly in
$\outp{\pump{\lambda}{(s,s']}}$
and $\outp{\pump{\mu}{(s,s']}}$,
we can derive the existence of a mismatch,
proving that $\outp{\pump{\lambda}{(s,s']}} \neq \outp{\pump{\mu}{(s,s']}}$.

Then, all that remains to do is to combine a few counting arguments
to show how to choose $k$ and $\ell$ sufficiently large
with respect to the parameters of $\mathcal{S}$
(and, crucially, independently of $C$)
so that the fact that
$\diff_{j,t}(\oriout{\lambda},\oriout{\mu}) > C^2 k$
implies the existence of $C$ consecutive
intervals $(s,s'] \subset [1,|\lambda|)$,
which, as described earlier,
satisfy
$\outp{\pump{\lambda}{(s,s']}} \neq \outp{\pump{\mu}{(s,s']}}$,
which concludes the proof of \cref{lem:technical}.
\end{proof}

\subsection{Regularity of the delay notion}\label{subsec:trueDelayResync}

Our goal is to prove that the delay notion is
regular, as stated in \cref{thm:trueDelayResync}. All over this section, $k$ and $\ell$ are non-negative integers, and $\subs$ is a finite set of substitutions over
a finite set of variables $\var$ and an alphabet $\alpo$.
\cref{lem:carac} characterizes
the pairs of substitution sequences whose outputs end with a unique endmarker symbol $\dashv$ and that are \emph{not} in
$\mathbb{D}_{k,\ell,\subs}$, using properties which are independently shown to be regular.
We first start with the characterization, then give an overview on how to show its regularity.

\myparagraph{A characterization.} 
Note that a pair $(\lambda,\mu)$ of two substitution sequences of the same length is not in $\mathbb{D}_{k,\ell,\subs}$ if either $\outp{\lambda} \not= \outp{\mu}$ or there is a cut witnessing a delay greater than $k$. Unfortunately, the first condition $\outp{\lambda} \not= \outp{\mu}$ is not regular. So in order to characterize the complement of $\mathbb{D}_{k,\ell,\subs}$ by regular properties, we somehow have to mix conditions on differences in the output and on positions witnessing a big delay. For the corresponding formal statement in \cref{lem:carac} we need the following definition.

Given a sequence of substitutions $\lambda$ and $i\geq 0$, let
$\nextcut_\ell(i,\outp{\lambda})$ be the smallest output position $j$
such that $j>i$ and $j\in \cut_\ell(\outp{\lambda})$, if it exists. 
Note that, as the last output position is a cut,
such a position $j$ always exists unless $i$ is the last output position.
Formally, 
$\nextcut_\ell(i,\outp{\lambda})$ denotes the set
$\min\left(\cut_\ell(\outp{\lambda}) \cap \{ j \mid j > i\}\right)$.
The result is either a singleton or the empty set.
In the former case, we write $\nextcut_\ell(i,\outp{\lambda}) = j$ instead of $\nextcut_\ell(i,\outp{\lambda}) = \{j\}$.

\begin{restatable}{lemma}{lemmaCarac}\label{lem:carac}
Let $\dashv\in\alpo$ and $\lambda$, $\mu\in\subs^*$ be sequences with $\outp{\lambda},\outp{\mu} \in (\alpo \setminus \{\dashv\})^*\!\!\dashv$ and $|\lambda| = |\mu|$.
Then $\lambda \otimes \mu\not\in \mathbb{D}_{k,\ell,\subs}$ iff 
there exists $i \in \left(\cut_\ell(\outp{\lambda}) \cap \cut_\ell(\outp{\mu})\right) \cup \{0\}$ such that $\maxdiff_i(\oriout{\lambda},\oriout{\mu}) \leq k$, and
one of the following holds:
\begin{enumerate}
  \item{\label{rat:i1}} Both $j_1 = \nextcut_\ell(i,\outp{\lambda})$ and $j_2 = \nextcut_\ell(i,\outp{\mu})$ exist, and either $j_1 \neq j_2$ or $\maxdiff_{j_1,j_2}(\oriout{\lambda},\oriout{\mu}) >~k$;

  \item{\label{rat:i2}} $\outp{\lambda}[i+b] \neq \outp{\mu}[i+b]$ for some $b\in [0,\ell^2]$.
\end{enumerate}
\end{restatable}

\begin{proof}[Proof sketch.]
  Assume that $\lambda,\mu$ satisfy \cref{rat:i1} or \ref{rat:i2} from the above statement.
  If $\outp{\lambda} \not= \outp{\mu}$, then clearly $\lambda \otimes \mu\not\in \mathbb{D}_{k,\ell,\subs}$. So assume that $\outp{\lambda} = \outp{\mu}$. Then $\cut_\ell(\outp{\lambda}) = \cut_\ell(\outp{\mu})$. Hence \cref{rat:i1} is satisfied with $j_1 = j_2 =: j$ and $\maxdiff_{j}(\oriout{\lambda},\oriout{\mu}) >~k$.
  Since $j$ is a cut, we obtain that $\delay_{\ell}(\oriout{\lambda},\oriout{\mu}) > k$ and thus $\lambda \otimes \mu\not\in \mathbb{D}_{k,\ell,\subs}$.

  Conversely, let $\lambda \otimes \mu \not\in \mathbb{D}_{k,\ell,\subs}$.
  First assume $\outp{\lambda} = \outp{\mu}$.
  Since $\lambda \otimes \mu \notin \mathbb{D}_{k,\ell,\subs}$, there exists some $j\in \cut_\ell(\outp{\lambda}) \cap \cut_\ell(\outp{\mu})$ such that $\maxdiff_j(\oriout{\lambda},\oriout{\mu}) >~k$,
  and we can satisfy \cref{rat:i1}.
  Second, we assume $\outp{\lambda} \neq \outp{\mu}$.
  Let $m$ be the position of the earliest mismatch (that is, $\outp{\lambda}[m] \neq \outp{\mu}[m]$), and $i$ be the nearest common cut to the left of the mismatch.
  If $\maxdiff_i(\oriout{\lambda},\oriout{\mu}) >~k$, we have a common cut with a too large difference before a mismatch occurs.
  We can treat this situation as if $\outp{\lambda} = \outp{\mu}$ and satisfy \cref{rat:i1} as before.
  If $\maxdiff_i(\oriout{\lambda},\oriout{\mu}) \leq k$, we show that either the mismatch is close to $i$ (that is, $m \leq i + \ell^2$) and \cref{rat:i2} is satisfied, or the mismatch causes $j_1 = \nextcut_\ell(i,\outp{\lambda})$ and $j_2 = \nextcut_\ell(i,\outp{\mu})$ to be different and \cref{rat:i1} is satisfied for $j_1 \neq j_2$.
\end{proof}

\myparagraph{Proof overview for the regularity of the delay notion.}
Let us denote by $\cdelay{\subs}$ the set of words
of the form $\lambda\otimes \mu$ such that $\lambda,\mu\in \subs^*$ 
satisfy the properties of the characterization given in \cref{lem:carac}. 
The main technical part is to show that $\cdelay{\subs}$ is
regular. Then regularity of $\mathbb{D}_{k,\ell,\subs}$ follows by
complementation and end-marker removal, which preserve regularity.

First, note that the definition of $\cdelay{\subs}$ is existential in
nature: it asks for the existence of positions in $\outp{\lambda}$
and $\outp{\mu}$ satisfying some properties. A classical way of
dealing with positions quantified existentially in automata theory is
to mark some positions in the input by using an extended alphabet,
construct an automaton over the extended alphabet, and then project
this automaton over the original alphabet. Here, the positions needed
in $\cdelay{\subs}$ are positions of $\outp{\lambda}$ and $\outp{\mu}$,
while the automata we want to construct read $\lambda$ and
$\mu$ as input. So, instead of marking input positions, we rather mark
positions in right-hand sides of updates occurring in the substitutions of
$\lambda$ and $\mu$. Let us make this more precise. First, for
$n\geq 0$, words $u$
over the alphabet $\Sigma$ are extended into \emph{$n$-marked words},
i.e., words over the alphabet $\Sigma\times 2^{\{1,\dots,n\}}$, such that
the additional information in $\{1,\dots,n\}$ precisely corresponds to
an $n$-tuple of positions $\overline{x}$ of $u$ (position $x_i$ is marked with label $i$ for all
$i\in\{1,\dots,n\}$, and we consider sets because the
same position can correspond to different components of
$\overline{x}$). By extension, we also define an operation $\triangleright$ which marks any substitution sequence $\lambda\in \sub{\var}{\Sigma}^*$ by a tuple $\overline{x}$ of positions of $\outp{\lambda}$, resulting in a substitution sequence 
$(\lambda\triangleright \overline{x})\in \sub{\var}{\Sigma\times 2^{\{1,\dots,n\}}}^*$ such that $\outp{\lambda\triangleright \overline{x}}=\outp{\lambda}\triangleright \overline{x}$.

We show that the set of substitution sequences $\lambda\triangleright i$ satisfying
$i\in\cut_\ell(\outp{\lambda})$ is regular, and similarly for
$\nextcut_\ell$. To do so, we
prove that the set  $\{ u\triangleright i\mid
i\in\cut_\ell(u)\}$ is regular (and
similarly for $\nextcut_\ell$) and then transfer this result to marked
substitution sequences, as regular languages are preserved under
inverse of SSTs.

Then, we show regularity results for predicates of the form
$\maxdiff_i(\oriout{\lambda},\oriout{\mu})\leq~k$ and
$\maxdiff_{j_1,j_2}(\oriout{\lambda},\oriout{\mu})>~k$. In the end, all parts of the property of the characterization of \cref{lem:carac} are shown to be regular, so that the
whole property can be checked by a synchronized product of
automata. Perhaps the most interesting part is how to show that the
predicate $\maxdiff_i(\oriout{\lambda},\oriout{\mu})\leq k$ is regular. More precisely, it is shown that the set of $(\lambda\triangleright i_1)\otimes
(\mu\triangleright i_2)$ such that $i_1 = i_2 = i$ and
$\maxdiff_i(\oriout{\lambda},\oriout{\mu})$ is smaller than $k$ (which is a given constant), is
regular. Let us intuitively explain why. In general, checking whether two
marked positions $i_1$ (in $\outp{\lambda}$) and $i_2$ (in $\outp{\mu}$) are equal
cannot be done in a regular way (recall that the automaton reads $\lambda \otimes \mu$ and not their outputs). 
However, if additionally, one has to
check that $\maxdiff_{i_1,i_2}(\oriout{\lambda},\oriout{\mu})$ is smaller than $k$, it
is actually regular. To do so, a finite automaton needs to monitor the
difference in the outputs produced in $\lambda$ and $\mu$ before 
positions $i_1$ and $i_2$ resp., and check that it is bounded by $k$
(otherwise it rejects). The difference must eventually reach $0$ when
the whole inputs $\lambda$ and $\mu$ have been read, to ensure
$i_1=i_2$.

We also prove that once a position $i$ in
$\outp{\lambda}$ is marked, then the next cut $j_1 = \nextcut_\ell(i,\outp{\lambda})$ can be identified in a regular way by an automaton. Similarly, given a constant $d$, the output position $i+d$ can also be identified in a regular way. This allows us to check the properties of \cref{rat:i1} and \cref{rat:i2} respectively of \cref{lem:carac}.
This concludes the overview of the proof of
\cref{thm:trueDelayResync}. 
A complexity analysis yields that the set $\mathbb{D}_{k,\ell,\subs}$
is recognizable by a DFA with a number of states doubly exponential in $\ell^3$ and in $|\var|$, and singly exponential in $k$.

\section{Applications of delay completeness and regularity}
\label{sec:applications}


\myparagraph{Decision problems up to fixed delay.}
Instead of comparing the transductions defined by non-deterministic SSTs
for inclusion or equivalence, which are undecidable problems
already for finite (variable-free) transducers, we show here that the
strengthening of those problems up to fixed delay, $\subseteq_{k,l}$ and $\equiv_{k,l}$, are decidable. 
The key to decide inclusion and equivalence up to fixed delay is the regularity of the delay notion
as stated in \cref{thm:trueDelayResync}. 
Hence, those problems can be reduced to classical inclusion and equivalence problems of regular languages. 


\begin{restatable}{theorem}{thmComplexityInclusion}\label{thm:complexityInclusion}
    Given integers $k,\ell$, the $(k,\ell)$-inclusion problem for
    non-deterministic SSTs is decidable. It is
    \PSPACE-complete if $k,\ell$ are constants and the number of variables $|\var|$
    is a constant. The same results hold for the $(k,\ell)$-equivalence problem.
\end{restatable}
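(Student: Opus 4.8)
The plan is to reduce the $(k,\ell)$-inclusion problem to a language inclusion problem for nondeterministic finite automata, and then analyze the size of the automata involved. Fix SSTs $T_1$ and $T_2$, put $\subs = \subs_1\cup\subs_2$, $\var=\var_1\cup\var_2$, and recall $L(T_i)\subseteq (\Sigma\times\subs_i)^*$ consists of convolutions $u\otimes\tau(\rho)$ of accepted inputs $u$ with the substitution sequence of an accepting run $\rho$. By \cref{thm:trueDelayResync}, the delay resynchronizer $\mathbb{D}_{k,\ell,\subs}$ is letter-to-letter rational, i.e.\ recognized by a letter-to-letter NFT, equivalently by an NFA $\mathcal{A}_{\mathbb{D}}$ over the alphabet $\subs\times\subs$. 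The key observation is that $T_1\subseteq_{k,\ell}T_2$ fails if and only if there exist $(u,\lambda)\in L(T_1)$ such that for \emph{no} $(u,\mu)\in L(T_2)$ do we have $(\lambda,\mu)\in\mathbb{D}_{k,\ell,\subs}$. Since $\mathbb{D}_{k,\ell,\subs}$ already forces $|\lambda|=|\mu|$ and equal outputs, this is equivalent to: the projection onto the first component of the relation
\[
\{(u\otimes\lambda,\;u\otimes\mu)\mid (u,\lambda)\in L(T_1),\ (u,\mu)\in L(T_2),\ (\lambda,\mu)\in\mathbb{D}_{k,\ell,\subs}\}
\]
does \emph{not} cover all of $L(T_1)$. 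So the first step is to build an NFA $\mathcal{B}$ over $\Sigma\times\subs$ recognizing exactly the set of $(u,\lambda)$ that \emph{are} matched: $\mathcal{B}$ runs the underlying automaton of $T_1$ producing $(u,\lambda)$, simultaneously guesses a run of $T_2$ on the same $u$ producing $\mu$, and simultaneously simulates $\mathcal{A}_{\mathbb{D}}$ on the pair $(\lambda,\mu)$; it accepts when all three components accept. Then $T_1\subseteq_{k,\ell}T_2$ holds iff $L(T_1)\subseteq L(\mathcal{B})$, a standard NFA inclusion test (guess a word in $L(T_1)\setminus L(\mathcal{B})$ on the fly, keeping a single state of $T_1$'s underlying automaton and a subset of states of $\mathcal{B}$). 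This gives decidability. For the $(k,\ell)$-equivalence problem one simply runs the inclusion test in both directions.

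For the complexity bounds, assume $k,\ell$ and $|\var|$ are constants. The alphabet $\subs$ has size polynomial in $|T_1|+|T_2|$ once $|\var|$ is fixed (a copyless substitution over a fixed variable set with output words bounded by the transition labels of the SSTs). The automaton $\mathcal{A}_{\mathbb{D}}$ from \cref{thm:trueDelayResync}: inspecting the construction in \cref{subsec:trueDelayResync}, for constants $k,\ell$ its state set is of polynomial size in $|\subs|$ (it tracks bounded $\ell$-period information and a bounded weight difference up to $k$ at cut positions), so $\mathcal{A}_{\mathbb{D}}$ has polynomially many states; more conservatively, it has at most exponentially many. The product $\mathcal{B}$ has state set $Q_1\times Q_2\times Q_{\mathcal{A}_{\mathbb{D}}}$, hence polynomial (or at worst exponential) in the input size; crucially, the \textbf{number of states of $\mathcal{B}$ is at most exponential}, so a state of its subset automaton is exponentially large, and checking $L(T_1)\subseteq L(\mathcal{B})$ by on-the-fly nondeterministic search uses only polynomial space: store one state of $\mathcal{A}_{T_1}$ and one subset of $Q_{\mathcal{B}}$ (if $\mathcal{B}$ is already polynomial-size the subset is polynomial; if $\mathcal{B}$ is exponential-size one instead stores the run of $T_1$'s underlying automaton plus, by Savitch/Immerman--Szelepcs\'enyi-style reachability arguments in the product, only the information needed), yielding a \PSPACE\ upper bound. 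For \PSPACE-hardness, reduce from a known \PSPACE-complete problem such as NFA universality or the intersection-emptiness of a bounded number of DFAs, or more directly from the already-\PSPACE-hard equivalence/inclusion of \emph{deterministic} SSTs (or even finite transducers) with a fixed number of registers: note that for deterministic SSTs that build output left-to-right, $(k,\ell)$-inclusion with $k=\ell=0$ coincides with genuine inclusion of the transductions, and a padding argument reduces general deterministic-transducer inclusion to this case, so \PSPACE-hardness is inherited.

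The main obstacle is the complexity analysis rather than decidability: one must verify that the automaton $\mathcal{A}_{\mathbb{D}}$ of \cref{thm:trueDelayResync} is of at most exponential size when $k,\ell$ are constants and, more delicately, that the weight-difference bookkeeping at cut positions can be carried out \emph{online} within the product automaton, so that the subset-construction cost of the final inclusion test stays within polynomial space. A secondary subtlety is bounding $|\subs|$ polynomially: this relies on the output words labeling SST transitions being explicitly part of the input, so that $\subs$ is read off the description of $T_1,T_2$ rather than being of size exponential in $|\var|$. Once these size bounds are in place, the \PSPACE\ upper bound follows from the standard $\NPSPACE=\PSPACE$ nondeterministic reachability argument for NFA non-inclusion, and the matching lower bound from the reduction sketched above. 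For the equivalence statement, nothing new is needed: $T_1\equiv_{k,\ell}T_2$ is decided by two inclusion tests, preserving both decidability and the \PSPACE-completeness under the stated constancy assumptions.
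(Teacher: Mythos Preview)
Your overall reduction---build an NFA $\mathcal{B}$ that accepts $(u,\lambda)$ iff some $(u,\mu)\in L(T_2)$ satisfies $(\lambda,\mu)\in\mathbb{D}_{k,\ell,\subs}$, then test $L(T_1)\subseteq L(\mathcal{B})$---is exactly the paper's approach, and decidability follows. (Including the $T_1$-component inside $\mathcal{B}$ is harmless but redundant.)

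The \PSPACE\ upper bound, however, has a genuine gap. The paper's key point is that when $k,\ell,|\var|$ are \emph{all} fixed, the automaton for $\mathbb{D}_{k,\ell,\subs}$ has a \emph{constant} number of states: the state count in \cref{lem:polRat} depends only on $k$, $\ell$, $|\var|$; only the \emph{transition} set grows with $|\subs|$. Hence $\mathcal{B}$ has polynomially many states, and the standard on-the-fly subset test gives \PSPACE\ directly. You instead hedge (``polynomial \ldots\ more conservatively, at most exponentially many'') and then invoke a vague Savitch/Immerman--Szelepcs\'enyi fallback for the exponential case. That fallback does not work: if $\mathcal{B}$ really had exponentially many states, storing a subset of $Q_{\mathcal{B}}$ would require exponential space, and NFA inclusion would only give \EXPSPACE. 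You must commit to the constant-state observation; otherwise the bound you claim does not follow.

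For the lower bound, your first suggestion (reduce from NFA universality/inclusion) is correct and is precisely what the paper does: view an NFA as an SST with a single variable and the trivial update $\varO:=\varepsilon$, so that outputs are empty and $(k,\ell)$-inclusion collapses to NFA language inclusion. Your alternatives are wrong, though: intersection-emptiness of a \emph{bounded} number of DFAs is in \PTIME, not \PSPACE-hard; and equivalence/inclusion of \emph{deterministic} finite transducers (or deterministic SSTs) is not \PSPACE-hard, so that route gives nothing. Also note $\ell>0$ in the paper's definition, so ``$k=\ell=0$'' is not well-formed, and even for NFTs, $(0,\ell)$-inclusion is strictly stronger than relational inclusion (it forces lockstep output production), so your ``coincides with genuine inclusion'' claim is false. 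Keep the NFA-inclusion reduction and drop the rest.
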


\begin{proof}[Proof sketch.]
  We sketch the result for the inclusion problem. 
  For a non-deterministic SST $T$ over input alphabet $\Sigma$, we denote by $L(T)$ its language,
defined as the set of words of the form $u\otimes \tau(\rho)$, where
$u\in \Sigma^*$, $\rho$ is an accepting run of $T$ over $u$, and
$\tau(\rho)$ is the sequence of substitutions occurring on $\rho$. 
Let $T_1$ and $T_2$ be two non-deterministic SSTs over two finite sets of variables
$\var_1$ and $\var_2$ respectively, both with output variable $\varO$. Let $\subs_1$ (resp. $\subs_2$) be the finite set of
substitutions occurring in $T_1$ (resp. $T_2$). Let $\subs =
\subs_1\cup \subs_2$ and $\var = \var_1\cup \var_2$. Let
$\ell,k\in\N$.

  We let
    $\mathbb{D}^{in}_{k,\ell,\subs} = \{ u \otimes \lambda \otimes
    \mu \mid u\in\Sigma^*\wedge \lambda \otimes \mu \in \mathbb{D}_{k,\ell,\subs}\}$.  
    The automaton recognizing $\mathbb{D}_{k,\ell,\subs}$ from the proof of
    \cref{thm:trueDelayResync} can be easily extended into an automaton which recognizes
    $\mathbb{D}^{in}_{k,\ell,\subs}$.
    Now, observe that $T_1$ is $(k,\ell)$-included in $T_2$ iff $L(T_1)\subseteq \mathbb{D}^{in}_{k,\ell,\subs}(L(T_2))$, where $\mathbb{D}^{in}_{k,\ell,\subs}(L(T_2))$ denotes the set $\{ u \otimes \lambda \mid \exists u \otimes \mu \in L(T_2)\colon u \otimes \lambda \otimes \mu \in \mathbb{D}^{in}_{k,\ell,\subs}\}$.
\end{proof}

We turn to the variable minimization problem which is open for (deterministic) \SST{s} and undecidable for non-deterministic \SST{s}.
We prove decidability for variable minimization up to fixed delay in both cases.

\begin{restatable}{theorem}{thmDelayVarMin}\label{thm:delayVarMin}
    Given integers $k,\ell,m $ and a non-deterministic SST $T$, it is decidable whether $T \equiv_{k,\ell} T'$ for some (non-deterministic resp.\ deterministic) SST $T'$ that uses at most $m$ variables.
\end{restatable}

\begin{proof}[Proof sketch.]
  First, we sketch the result for non-deterministic SST. 
  Since we are looking for some non-deterministic SST $T'$ with $m$ variables such that $T \equiv_{k,\ell} T'$,
  we need to consider SSTs that produce at most $r := 2k + p$ letters (where $p$ is the maximal number of letters produced by $T$ in one step)
  per computation step in order to not violate the delay bound.
  The reasoning is that the difference between the output of the computations can be at most $k$ letters, then in the next computation step, the computation that was $k$ letters ahead may produce $p$ letters, the other computation must recover the difference by producing at least $p$ letters and at most $2k+p$ letters to keep the difference at most $k$.
  Thus, let $\subs = \subs_{T}\cup \subs_{r,m}$ and $\var = \var_{T}\cup \var_{m}$, where $\subs_{T}$ (resp. $\var_{T}$) are the substitutions (resp.\ variables) occurring in $T$,
  $\var_{m} = \{X_1,\cdots,X_m\}$,
  and $\subs_{r,m}$ are substitutions over $\var_{m}$ producing at most $r$ letters.
  
  As as in the proof sketch of \cref{thm:complexityInclusion}, $L(T)$ is the set of words of the form $u\otimes \tau(\rho)$, where $u\in \Sigma^*$, $\rho$ is an accepting run of $T$ over $u$, and $\tau(\rho)$ is the sequence of substitutions occurring on $\rho$. 
  Furthermore, as in the proof sketch above, we let
  $\mathbb{D}^{in}_{k,\ell,\subs} = \{ u \otimes \lambda \otimes
  \mu\mid u\in\Sigma^*\wedge \lambda \otimes \mu\in \mathbb{D}_{k,\ell,\subs}\}$.  
  Now, observe that there exists some non-deterministic SST $T'$ with $m$ variables such that $T \equiv_{k,\ell} T'$ iff $L(T)\subseteq \mathbb{D}^{in}_{k,\ell,\subs}(L)$ where $L$ is the set of all words $u \otimes \mu$ such that $u\in \alp^*$ and $\mu \in \subs_{r,m}^*$.
  As in the proof sketch above, $\mathbb{D}^{in}_{k,\ell,\subs}(L)$ denotes the set $\{ u \otimes \lambda \mid \exists u \otimes \mu \in L\colon u \otimes \lambda  \otimes \mu \in \mathbb{D}^{in}_{k,\ell,\subs}\}$.

  If the goal is to have a deterministic SST, we need to adapt the above procedure.
  Let $M$ be the projection of $\mathbb{D}^{in}_{k,\ell,\subs}$ onto its first and third component.
  Hence, $M$ contains all $u \otimes \mu$ such that there is some $u \otimes \lambda \in L(T)$ with $\delay_\ell(\oriout{\lambda},\oriout{\mu}) \leq k$.
  We reduce the problem to a safety game played on a DFA for $M$.
  In alternation, one player provides an input letter, the other chooses a matching transition in the DFA.
  If the input player has provided a sequence $u \in \mathrm{dom}(R(T))$ then the play must be in an accepting state of the DFA, otherwise the output player has lost.
  We show that the output player has a winning strategy iff there exists an equivalent deterministic SST with at most $m$ variables.
\end{proof}

\myparagraph{Comparison with delay for finite transducers.}
We compare our notion of delay for streaming string transducers with the
previously existing delay notion for finite
transducers~\cite{DBLP:conf/icalp/FiliotJLW16}. A \emph{rational} SST
is a non-deterministic SST with only one variable $\varO$, and updates
all of the form $\varO := \varO u$.
In other words, a rational SST $T$ is simply a finite transducer.
Applying our delay notion to rational \SST{s}  yields that if $T_1$ is $(k,\ell)$-included in $T_2$ for two rational \SST{s} $T_1, T_2$,
then $T_1$ is $k$-included in $T_2$ for the definition according to \cite{DBLP:conf/icalp/FiliotJLW16} and vice versa.
The $k$-inclusion problem for finite transducers is \PSPACE-complete for fixed $k$ \cite{DBLP:conf/icalp/FiliotJLW16}.
Hence, the complexity obtained in \cref{thm:complexityInclusion} matches this bound.
on conceptual differences between those notions.

\myparagraph{Consequences of completeness.}
We now turn to some consequences of our completeness result (\cref{thm:completeFunction}).
Inclusion for (deterministic) \SST{s} is known to be decidable \cite{DBLP:conf/popl/AlurC11}.
It is undecidable for non-deterministic SSTs, but $(k,\ell)$-inclusion is decidable (\cref{thm:complexityInclusion}).
Although \cref{thm:completeFunction} provides a new decision procedure for the inclusion problem for SSTs its value lies in showing that our notion of delay is a sensible approach to gain a better understanding of streaming string transducers.
The following two corollaries are easy consequences of \cref{thm:completeFunction}.

\begin{corollary}\label{cor:completeMinim}
Given an \SST $T$ and an integer $m$, there exist integers $k,\ell$
such that there exists an \SST $T'$ with $m$ variables such that $T \equiv T'$ iff there exists an \SST $T''$ with $m$ variables such that $T \equiv_{k,\ell} T''$.
\end{corollary}

Note that the above corollary does not imply that $k$ and $\ell$ are \emph{computable}.
This would entail a solution for the variable minimization problem for \SST{s} which is open (and decidable for concatenation-free SSTs \cite{DBLP:conf/icalp/BaschenisGMP16}).
The next result is about rational functions, that is, functions recognizable by finite transducers.
%

\begin{corollary}\label{cor:completeRat}
Given an \SST $T$, there exist integers $k,\ell$ such that there exists a rational \SST $T'$ such that $T \equiv T'$ iff there exists a rational \SST $T''$ such that $T \equiv_{k,\ell} T''$.
\end{corollary}

It was shown that it decidable whether a deterministic two-way
transducer (which is effectively equivalent to an \SST) recognizes a rational function \cite{DBLP:conf/lics/FiliotGRS13}.
The decision procedure is effective, i.e., a finite transducer is constructed if possible.
A procedure with improved complexity was given that yields a finite transducer of doubly exponential size in \cite{DBLP:conf/lics/BaschenisGMP17}.
While computability is not implied by \cref{cor:completeRat}, note
that one could compute $k,\ell$ satisfying the statement of
\cref{cor:completeRat} using \cref{thm:completeFunction} from an
equivalent rational SST (if it exists) that has been obtained using
the decision procedure from \cite{DBLP:conf/lics/BaschenisGMP17}.


\myparagraph{Other applications.} We mention other potential applications of our delay notion that ought
to be investigated. For instance, a \emph{decomposition theorem} for SST relations is still only conjectured:
Can every finite-valued SST relation be decomposed
into a finite union of SST functions?
In other settings where the corresponding statement holds
(finite transducers~\cite{DBLP:journals/ita/Weber96},
single-variable SST~\cite{DBLP:conf/stacs/GallotMPS17}),
the main ingredients of the proof
is the regularity and completeness
of the appropriate notion of delay. So, having a good delay notion
seems necessary to obtain such a result, but solving the decomposition
theorem for SSTs does \emph{not} seem to be a low-hanging fruit of our present
study.

The notion of delay might also help to solve the \emph{variable minimization} problem:
Can we determine the minimal number of variables
needed to define a given SST function?
\cref{cor:completeMinim}
makes some progress towards a positive answer,
yet it remains to prove that
the integers $k$ and $\ell$ of the statement are computable,
which is likely a complex problem.
Another interesting research direction is to study 
how our notion of delay fares beyond SST.
For \emph{copyful} SST
(where the copyless restriction of the substitutions is dropped),
our notion of delay 
can be defined in the same manner,
but its properties are unclear:
our proofs of regularity and completeness
both crucially rely on the copyless assumption.

\bibliography{biblio}

\newpage

\appendix

\section{Discussion on natural delay notions for SSTs}


We illustrate here two natural delay notions for SSTs and
their flaws, using the executions of \cref{fig:delay}.

\begin{itemize}
    \item
    The measure $\delay_\infty$
    computes the maximal
    difference between the sizes of
    the partial outputs occurring along two runs.
    For instance,
    since the runs
    $\rho_2$, $\rho_3$, $\rho_4$ and $\rho_5$
    all steadily produce two output symbols
    (albeit different ones)
    at each step, $\delay_\infty(\rho_i,\rho_j) = 0$
    for all $i,j \in \{2,3,4,5\}$.
    This measure is complete for SSTs,
    but is not regular:
    as witnessed by Figure \ref{fig:delay},
    a small delay does not imply any similarities
    between the partial output produced,
    thus unbounded memory might be required
    to ensure that the final outputs are equal.
    This causes the undecidability of the related decision problems.
    \item
    The measure $\delay_0$
    computes the maximal size of the \emph{symmetric difference}
    between the partial outputs occurring along two runs.
    For instance
    $\delay_0(\rho_2,\rho_4) = \delay_0(\rho_3,\rho_5) = 4$ and
    $\delay_0(\rho_2,\rho_3) = \delay_0(\rho_4,\rho_5) = 8$,
    witnessed at $t=2$.
    While the high value of
    $\delay_0(\rho_2,\rho_3)$ is desirable
    ($\rho_2$ starts by producing only $a$'s
    while $\rho_3$ produces $b$'s),
    the high value of $\delay_0(\rho_4,\rho_5)$
    is problematic:
    the partial outputs of the two runs
    are not \emph{so} different,
    as they commute in the final word.
    This behavior leads to the incompleteness of $\delay_0$
    for functional SSTs:
    for instance, over a single-letter alphabet
    the SST that copies the input
    and the SST that 
    reverses it are equivalent,
    yet their delay is unbounded.
\end{itemize}
\phantom{bla}

We refer the reader to \cref{sec:comparison} for a discussion of the differences between delay notions for finite transducers and streaming string transducers.

\section{Comparison with delay for finite transducers}
\label{sec:comparison}

\myparagraph{Finite Transducers.}
Formally, a \emph{(real-time finite state) transducer} (\NFT) is a tuple $\tra = (\alp,Q,I,\Delta,F,\kappa)$
where $\aut_\tra = (\alp,Q,I,\Delta,F)$ is an automaton,
called \emph{underlying automaton} of $\tra$,
and $\kappa\colon \Delta \rightarrow \alpo^*$ is an output function.
A run of $\tra$ is a run
\[
  \rho = q_0 a_1 q_1 a_2 q_2 \cdots a_n q_n \in Q (\alp Q)^*
\]
of $\aut_\tra$,
and the \emph{output} of $\rho$ is the word 
\[
  \kappa(\rho) = \kappa((q_0,a_1,q_1)) \kappa((q_1,a_2,q_2)) \cdots \kappa((q_{n-1},a_n,q_n)) \in \alpo^*.
\]
The \emph{transduction} $\rel{\tra}$ recognized by $\tra$ is
the set of pairs $(u,v) \in \alp^* \times \alpo^*$
such that $\tra$ has an accepting run with input $u$ and output $v$.
If $\rel{\tra}$ is a (partial) function, $\tra$ is called \emph{functional}.
If $\kappa:\Delta \rightarrow \Sigma$, then $T$ is called \emph{letter-to-letter}.

\myparagraph{Rational SSTs.}
As already said in the main body of the paper, any rational SST $T$ can be
seen as a finite transducer, that we denote by $\toNFT(T)$, and which is defined as expected.
  As mentioned before, the delay between two finite transducer computations with the same input and output is a measure of how ahead one output is compared to the other.
  More formally\footnote{In \cite{DBLP:conf/icalp/FiliotJLW16}, delay is defined between synchronizations instead of runs, for ease of presentation, we do not formally introduce this notion here.}, let $T_1$ and $T_2$ be \NFT{s}, given runs $\rho_1 = q_0a_1q_1 \cdots a_nq_n$ and $\rho_2 = p_0a_1p_1 \cdots a_np_n$ of $T_1$ resp.\ $T_2$ with the same input and output, then $\delay(\rho_1,\rho_2)$ according to \cite{DBLP:conf/icalp/FiliotJLW16} is defined as 
    \begin{equation}\label{eq:finitetransducerdelay}
        \max_{0\leq i \leq n} ||\kappa(q_0,a_1,q_1)\cdots\kappa(q_{i-1}a_i,q_i)| - |\kappa(p_0,a_1,p_1)\cdots\kappa(p_{i-1}a_i,p_i)||.
    \end{equation}
  Note that this definition of delay pays no attention to periodic factors.
  In the proof of the following proposition, we show that this indeed plays no role for \NFT{s}.
  The reason is that \NFT{s} build their output from left to right; there are no gaps in the output (as, e.g., depicted in \cref{fig:non-reg-delay}) during the computation as for \SST{s}.
  As we explained in the introduction and in \cref{sec:resync}, is does not play a role how periodic factors are built (appending or prepending a period yields the same result), cutting the output into periodic factors avoids including irrelevant intermediate gaps in the output into the delay computation.
  Since computations of \NFT{s} do not have these gaps, there is no need to cut the word into periodic factors.
  
  \begin{restatable}{proposition}{propDelayrelation}\label{prop:delayrelation}
      Let $T_1,T_2$ be two rational SSTs, and $k\in\N$. Then, $\toNFT(T_1)$ is $k$-included in $\toNFT(T_2)$ (definition of~\cite{DBLP:conf/icalp/FiliotJLW16}) iff $T_1$ is $(k,\ell)$-included in $T_2$ for any $\ell \geq 1$. 
  \end{restatable}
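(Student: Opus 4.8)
The plan is to show that for rational SSTs, our weight-difference-at-cuts notion of delay coincides exactly with the classical prefix-length delay of~\cite{DBLP:conf/icalp/FiliotJLW16}, from which the inclusion statement follows immediately by unfolding the two definitions of $(k,\ell)$-inclusion and $k$-inclusion. The key observation, already hinted at in the surrounding text, is that a rational SST builds its single output variable $\varO$ strictly from left to right, so at every time step $t$ the portion of the final output already produced is exactly a \emph{prefix} of $\outp{\lambda}$. Concretely, if $\lambda = \sigma_1\cdots\sigma_n$ is the substitution sequence of a run of a rational SST, then for every $t$ the set of output positions with origin $\leq t$ is precisely $\{1,2,\ldots,p_t\}$ for some $p_t$, namely $p_t = |\outp{\sigma_1\cdots\sigma_t\sigma_\varepsilon}|$, which is also the length of the concatenation of the outputs of the first $t$ transitions in the corresponding \NFT. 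This is a routine induction on $t$ using the fact that each update is of the form $\varO := \varO u$.

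Given this, I would first establish the following equality: for a rational SST run $\lambda$ and any position $j$, $\weight_{j,t}(\lambda) = \min(j, p_t)$. Hence for two rational SST runs $\lambda,\mu$ with the same input and output (so $|\outp{\lambda}| = |\outp{\mu}|$ and the prefix lengths $p_t, p'_t$ are well-defined), we get $\diff_{j,t}(\lambda,\mu) = |\min(j,p_t) - \min(j,p'_t)|$. Taking $j = |\outp{\lambda}|$ (the trivial last cut, which is always an $\ell$-cut for every $\ell$), this is $|p_t - p'_t|$, which is exactly the $t$-th term in the classical delay formula. Therefore $\maxdiff_{|\outp{\lambda}|}(\lambda,\mu) = \delay(\rho_1,\rho_2)$ where $\rho_1,\rho_2$ are the corresponding \NFT\ runs. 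It remains to argue that intermediate cuts cannot increase the delay beyond this: for any cut position $j \leq |\outp{\lambda}|$ we have $|\min(j,p_t) - \min(j,p'_t)| \leq |p_t - p'_t|$ (the map $x \mapsto \min(j,x)$ is $1$-Lipschitz), so $\maxdiff_j(\lambda,\mu) \leq \maxdiff_{|\outp{\lambda}|}(\lambda,\mu)$ for every $j$, and the max over cuts in the definition of $\delay_\ell$ is attained at the final cut. Thus $\delay_\ell(\lambda,\mu) = \delay(\rho_1,\rho_2)$ for every $\ell \geq 1$.

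With this identity in hand, the proposition is a direct translation. Unfolding the definition, $T_1 \subseteq_{k,\ell} T_2$ means that for every accepting run $\rho_1$ of $T_1$ over $u$ with substitution sequence $\lambda$, there is an accepting run of $T_2$ over $u$ with substitution sequence $\mu$ and $\delay_\ell(\lambda,\mu)\leq k$; since runs of a rational SST and of its associated \NFT\ are in bijection preserving input and output, and since $\delay_\ell(\lambda,\mu) = \delay(\toNFT(\rho_1),\toNFT(\rho_2))$, this is exactly the statement that $\toNFT(T_1)$ is $k$-included in $\toNFT(T_2)$ in the sense of~\cite{DBLP:conf/icalp/FiliotJLW16}. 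The converse direction is identical. I would also note that one should be slightly careful about the length of substitution sequences: a rational SST has a final substitution $\kappa_F$ which contributes one extra substitution $\sigma_{n+1}$, corresponding in the \NFT\ picture to a final output word; this is a harmless bookkeeping detail handled by letting the index $t$ in the weight functions range over the full length including the final step, matching the $0 \le i \le n$ range of the classical definition once the $\epsilon$-producing initialization is accounted for.

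\textbf{Main obstacle.} The only genuinely delicate point is the claim that intermediate $\ell$-cuts never strictly exceed the delay at the final cut; this relies on the fact that for \emph{rational} SSTs the weight function is the truncation $\min(j,p_t)$ of a single monotone quantity, which in turn hinges on the left-to-right nature of the output production. Everything else is a straightforward unwinding of definitions and the bijection between rational SST runs and \NFT\ runs.
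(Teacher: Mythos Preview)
Your proof is correct and, in fact, slightly cleaner than the paper's. Both arguments rest on the same observation---that for a rational SST the set of output positions with origin $\leq t$ is always a prefix $\{1,\dots,p_t\}$---but you exploit it more directly. You derive the closed form $\weight_{j,t}(\lambda)=\min(j,p_t)$ and use the $1$-Lipschitz property of $x\mapsto\min(j,x)$ to obtain $\maxdiff_j(\lambda,\mu)\le\maxdiff_{|\outp{\lambda}|}(\lambda,\mu)$ for \emph{every} $j$, giving the exact equality $\delay_\ell(\lambda,\mu)=\delay(\rho_1,\rho_2)$ in one stroke. The paper instead proves the two inequalities $\delay_\ell\le d\Rightarrow\delay\le d$ and $\delay\le d\Rightarrow\delay_\ell\le d$ separately; the second direction locates, for each cut $j$, the step $i$ with $j_i<j\le j_{i+1}$ (where $j_i=\max(p_i,p'_i)$) and sandwiches $\diff_{j,i+1}$ between $d_i$ and $d_{i+1}$. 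Your Lipschitz argument subsumes this case analysis and also handles all time steps $t$ uniformly, which the paper's sandwich argument leaves somewhat implicit for $t>i+1$. Your remark about the final substitution $\kappa_F$ as a bookkeeping detail is appropriate and matches what the paper silently does.
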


  \begin{proof}
   Let $\rho_1 = q_0a_1q_1 \cdots a_nq_n$ denote a run of $\toNFT(T_1)$, and $\rho_2 = p_0a_1p_1 \cdots a_np_n$ denote a run of $\toNFT(T_2)$ with the same input and output.
   Furthermore, let $\lambda$ and $\mu$ denote the corresponding sequences of substitutions of $T_1$ and $T_2$, respectively.
  
   First, we express $\delay(\rho_1,\rho_2)$ using our notations.
   For all $0 \leq i \leq n$, let $j_i$ denote 
   \[
      \max \{|\kappa(q_0,a_1,q_1)\cdots\kappa(q_{i-1}a_i,q_i)|,|\kappa(p_0,a_1,p_1)\cdots\kappa(p_{i-1}a_i,p_i)| \}.
   \]
   We express 
   \[
    ||\kappa(q_0,a_1,q_1)\cdots\kappa(q_{i-1}a_i,q_i)| - |\kappa(p_0,a_1,p_1)\cdots\kappa(p_{i-1}a_i,p_i)||
    \]
    as $\diff_{j_i,i}(\oriout{\lambda},\oriout{\mu})$.
  Note that $\diff_{j_i,i}(\lambda,\mu) = \diff_{j_k,i}(\oriout{\lambda},\oriout{\mu})$ for all $i \leq k \leq n$ by definition of $\diff$, because there is no output right of $j_i$ after the $i$th computation step, as the output is build left to right.
  So,
  \[
  \begin{array}{rl}
      \delay(\rho_1,\rho_2) & = \max_{0\leq i \leq n} ||\kappa(q_0,a_1,q_1)\cdots\kappa(q_{i-1}a_i,q_i)| - |\kappa(p_0,a_1,p_1)\cdots\kappa(p_{i-1}a_i,p_i)||\\
      & = \max_{0 \leq i \leq n} \diff_{j_i,i}(\oriout{\lambda},\oriout{\mu})\\
      & = \max_{0 \leq i \leq n} \diff_{j_n,i}(\oriout{\lambda},\oriout{\mu})\\
      & = \maxdiff_{j_n}(\oriout{\lambda},\oriout{\mu}).
  \end{array}
  \]
  Recall that by definition, we have 
  \[
    \delay_\ell(\oriout{\lambda},\oriout{\mu}) = \max_{j \in \cut_\ell(\outp{\lambda})} \maxdiff_{j}(\oriout{\lambda},\oriout{\mu}).
  \]
  Furthermore, note that $j_n \in \cut_\ell(\outp{\lambda})$, because the last output position is always a cut.

  Assume $\delay_\ell(\oriout{\lambda},\oriout{\mu}) = d$ for some $\ell \geq 1$, we show $\delay(\rho_1,\rho_2) \leq~d$.
  Since $j_n \in \cut_\ell(\outp{\lambda})$, we have $\maxdiff_{j_n}(\oriout{\lambda},\oriout{\mu}) \leq d$.
  As stated above, $\delay(\rho_1,\rho_2) = \maxdiff_{j_n}(\oriout{\lambda},\oriout{\mu})$.
  Hence, $\delay(\rho_1,\rho_2) \leq d$.
  
  For the other direction, assume $\delay(\rho_1,\rho_2) = d$.
  We show $\delay_\ell(\lambda,\mu) \leq d$ for any $\ell \geq 1$, i.e., we prove that $\maxdiff_{j}(\oriout{\lambda},\oriout{\mu}) \leq d$ for any $j \in  \cut_\ell(\outp{\lambda})$.
  For each such $j$ there is some $i < n$ such that $j_i < j \leq j_{i+1}$.
  This means at least one of $\lambda$ and $\mu$ produces output at position $j$ in the $i+1$th computation step and the other computation produces this output later (or also in the $i+1$th step).
  After the $i$th step, we have $\diff_{j_i,i}(\oriout{\lambda},\oriout{\mu}) = d_i \leq d$, and after the $i+1$th step, we have $\diff_{j_{i+1},i+1}(\oriout{\lambda},\oriout{\mu}) = d_{i+1} \leq d$.
  Since after the $i$th step the outputs have length at most $j_i < j$, we can conclude that $\min \{d_i,d_{i+1}\} \leq \diff_{j,i+1}(\oriout{\lambda},\oriout{\mu}) \leq \max \{d_i,d_{i+1}\}$.
  
  As a last step, note that $\maxdiff_j(\oriout{\lambda},\oriout{\mu})$ is the maximum of
  \[
    \max_{0 \leq i} \diff_{j_i,i}(\oriout{\lambda},\oriout{\mu}) \text{ and } \diff_{j,i+1}(\oriout{\lambda},\oriout{\mu}).
\]
  Hence, we obtain that $\maxdiff_j(\oriout{\lambda},\oriout{\mu}) \leq~d$.
  
  We have proven that $\delay_\ell(\oriout{\lambda},\oriout{\mu}) \leq~d$ for any $\ell \geq 1$.
  \end{proof}

We now turn our attention to some subtleties of the delay notion for streaming string transducers.
We start with an example.

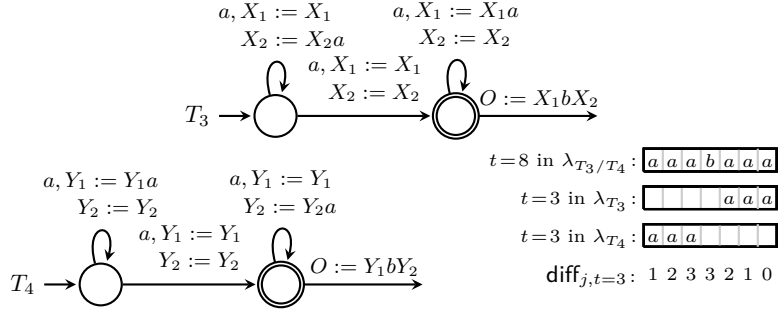
\begin{figure}[t]
    \centering
    \begin{tikzpicture}[baseline,thick,scale=1,every node/.style={scale=0.9},every loop/.style={looseness=10}]
        \tikzstyle{every state}+=[inner sep=3pt, minimum size=1.75em];
        \node[state, initial, initial text = $T_3$] (0) {};
        \node[state,right of = 0, xshift=1.5em, accepting] (1) {};

        \coordinate[right of = 1] (e);
    
        \draw[->] (1) edge[] node {\small $O:=X_1bX_2$} (e);
        \draw[->] (0) edge[loop above] node {\small $\begin{matrix} a, X_1 := X_1 \\ \phantom{a, x}X_2 := X_2a \end{matrix}$} ();
        \draw[->] (0) edge[auto] node {\small $\begin{matrix} a, X_1 := X_1\\ \phantom{a, }X_2 := X_2 \end{matrix}$} (1);
        \draw[->] (1) edge[loop above] node {\small $\begin{matrix} a, X_1 := X_1a \\ \phantom{a, }X_2 := X_2 \end{matrix}$} ();
      \end{tikzpicture}

      \begin{tikzpicture}[baseline,thick,scale=1,every node/.style={scale=0.9},every loop/.style={looseness=10}]
        \tikzstyle{every state}+=[inner sep=3pt, minimum size=1.75em];
        \node[state, initial, initial text = $T_4$] (0) {};
        \node[state,right of = 0, xshift=1.5em, accepting] (1) {};

        \coordinate[right of = 1] (e);
    
        \draw[->] (1) edge[] node {\small $O:= Y_1bY_2$} (e);
        \draw[->] (0) edge[loop above] node {\small $\begin{matrix} a, Y_1 := Y_1a \\ \phantom{a, x}Y_2 := Y_2 \end{matrix}$} ();
        \draw[->] (0) edge[auto] node {\small $\begin{matrix} a, Y_1 := Y_1\\ \phantom{a, }Y_2 := Y_2 \end{matrix}$} (1);
        \draw[->] (1) edge[loop above] node {\small $\begin{matrix} a, Y_1 := Y_1 \\ \phantom{a, }Y_2 := Y_2a \end{matrix}$} ();
      \end{tikzpicture}\nolinebreak
      \quad\quad\begin{tikzpicture}[baseline,thick,scale=1,
        every path/.style={shorten <=0cm,shorten >=0cm}]
      \pgfdeclarelayer{bg}    
      \pgfsetlayers{bg,main}
      \def\H{0.3}
      \def\W{0.25}
      \def\Hx{2.7}
      \def\Hy{0.5}

      \def\s{0.6}
      
        \node[anchor=east] at (-0.85+\Hx,0*\Hy + 0.4*\H){\small $\diff_{j,t=3}\text{\scriptsize$\colon$}$ \strut};
        \node[anchor=east] at (-0.85+\Hx,1*\Hy + 0.4*\H){\scriptsize $t\! = \! 3 \text{ in } \lambda_{T_4}\colon$ \strut};
        \node[anchor=east] at (-0.85+\Hx,2*\Hy + 0.4*\H){\scriptsize $t\! = \! 3 \text{ in } \lambda_{T_3}\colon$ \strut};
        \node[anchor=east] at (-0.85+\Hx,3*\Hy + 0.4*\H){\scriptsize $t\! = \! 8 \text{ in } \lambda_{T_3/T_4}\colon$ \strut};

      \foreach \y/\i/\c in {  0/0/1,0/1/2,0/2/3,0/3/3,0/4/2,0/5/1,0/6/0,
                              1/0/$a$,1/1/$a$,1/2/$a$,1/3/,1/4/,1/5/,1/6/,
                              2/0/,2/1/,2/2/,2/3/,2/4/$a$,2/5/$a$,2/6/$a$,
                              3/0/$a$,3/1/$a$,3/2/$a$,3/3/$b$,3/4/$a$,3/5/$a$,3/6/$a$}{

      \ifthenelse{\i>0 \AND \y>0}
      {\draw[thick] (\s*\Hx,\y*\Hy) -- (7*\W+\s*\Hx,\y*\Hy) -- (7*\W+\s*\Hx,\H+\y*\Hy) -- (\s*\Hx,\H+\y*\Hy) -- (\s*\Hx,\y*\Hy);
      \draw[black!20!white] (\i*\W+\s*\Hx,\y*\Hy) -- (\i*\W+\s*\Hx,\H+\y*\Hy);
      
      }
      {}
      \node at (\i*\W+\s*\Hx+0.5*\W,\y*\Hy + 0.4*\H){\scriptsize \c \strut};
      }      
      \end{tikzpicture}
    \caption{Two \SST{}s that realize the same relation, $\lambda_{T_3}$ resp.\ $\lambda_{T_4}$ is the sequence of substitutions of the run of $T_3$ resp.\ $T_4$ on $aaaaaaa$ where each loop is taken $3$ times, cf.\ \cref{ex:unboundeddelay}.}\label{fig:non-reg-delay}
\end{figure}

\begin{example}\label{ex:unboundeddelay}
Depicted in \cref{fig:non-reg-delay} are \SST{}s $T_3$ and $T_4$
that realize the transduction $R(T_3) = R(T_4) = \{ (a^{m_1+m_2+1}, a^{m_1}ba^{m_2}) \mid m_1,m_2 \geq 0 \}$.
Let $\rho_{T_3}$ resp.\ $\rho_{T_4}$ be the run of $T_3$ resp.\ $T_4$ on the input word $a^7$ where both loops are taken three times, let $\lambda_{T_3}$ resp. $\lambda_{T_4}$ denote the associated substitution sequence of length 8 (because we have a final substitution).
We have $\outp{\lambda_{T_3}} = \outp{\lambda_{T_4}} = aaabaaa$.
In \cref{fig:non-reg-delay}, on the right-hand side, is a visualization of the output produced by $\lambda_{T_3}$ resp. $\lambda_{T_4}$ after 3 computation steps.
The output letters are placed at the positions they will have in the final output word.
The intermediate step shows that the common output word is built quite differently.
The figure displays $\diff_{j,3}(\oriout{\lambda_{T_3}},\oriout{\lambda_{T_4}})$ for all output positions $j$.
Computing the maximal difference over all computation steps, we have, e.g, $\maxdiff_4(\oriout{\lambda_{T_3}},\oriout{\lambda_{T_4}})=3$, $\maxdiff_6(\oriout{\lambda_{T_3}},\oriout{\lambda_{T_4}})=1$ and $\maxdiff_8(\oriout{\lambda_{T_3}},\oriout{\lambda_{T_4}})=0$.
All in all, we obtain $\delay_{1}(\oriout{\lambda_{T_3}},\oriout{\lambda_{T_4}})=3$, because the $1$-cuts of $aaabaaa$ are at $3,4,8$.
\end{example}

We recall that we defined our delay notion (for streaming string transducers) with two properties in mind: Completeness (\cref{thm:completeFunction}) in order to have generality and regularity (\cref{thm:trueDelayResync}) in order to ensure good decidability properties.
The delay notion for finite transducers introduced in \cite{DBLP:conf/icalp/FiliotJLW16} also has these qualities.

We have explained in \cref{sec:resync} why we introduced the notion of cuts.
These ensure to measure delay at relevant output positions, a key ingredient to achieve completeness (while maintaining regularity which we explain below).
We have explained at the beginning of this section and shown in the proof of \cref{prop:delayrelation} that the notion of cuts is irrelevant for a good notion of delay for finite transducers.

We now want to highlight a key difference between the delay notion for SSTs and the delay notion for finite transducers that is needed to obtain regularity.
Recall the definition of delay between two finite transducer runs (\cref{eq:finitetransducerdelay}): the delay is measured between \emph{intermediate} outputs (produced on the so far processed input), then the maximum of these delays is taken.
This is in stark contrast to our delay notion: The delay is measured once between the \emph{complete} outputs.
We come back to \cref{ex:unboundeddelay} to explain why this is necessary to obtain regularity.


\begin{example}
We generalize \cref{ex:unboundeddelay}.
Given $m, n \in \N$, let $\alpha_{m,n}$, resp.\ $\beta_{m,n}$, be the sequence of substitutions of $T_3$, resp. $T_4$, over the input $a^{m+n+1}$ that takes the first loop $m$ times and the second loop $n$ times.
The output $a^{n}ba^{m}$ of $\alpha_{m,n}$ and the output $a^{m}ba^{n}$ of $\beta_{m,n}$ are equal if and only if $m = n$,
hence the set $M := \{ \alpha_{m,n} \otimes \beta_{m,n} \mid \outp{\alpha_{m,n}} = \outp{\beta_{m,n}}\}$ is not regular:
an unbounded memory is required to check that the first loop and the second loop are taken the same number of times.

Let us measure the delay between $\alpha_{m,n}$ and $\beta_{m,n}$ for all even $m,n \in \N$ using the delay notion introduced for finite transducers:
Until the last substitution, the so far produced outputs cannot be differentiated (both are $a$-blocks of the same length), there is no delay.
Finally, the last substitutions insert a $b$, cutting the $a$-blocks into two.
In order to know if the outputs are still equal, it would have been necessary to keep track of the number of times each loop is taken.

The fact that one needs to keep track of the number of times each loop is taken is reflected by our delay notion:
We obtain $\delay_{1}(\oriout{\alpha_{m,n}},\oriout{\beta_{m,n}}) = m$ for all even $m=n\in \N$, because we consider the $1$-factorization of the \emph{complete} output, namely, $a^{m}|b|a^{m}|$.
Thus, measuring the difference between outputs at the \emph{cuts} yields that $\maxdiff_{m}(\oriout{\alpha_{m,n}},\oriout{\beta_{m,n}}) = m$ for all even $m=n \in \N$.
So, it is easy to see that there is no $k \in \N$ such that the regular (according to \cref{thm:trueDelayResync}) set $\{ \alpha_{m,n} \otimes \beta_{m,n} \mid \delay_{1}(\oriout{\alpha_{m,n}},\oriout{\beta_{m,n}}) \leq k\}$ is equal to~$M$.

 \end{example}

    \section{Completeness of the delay notion}

    This section is devoted
    to the proof of \cref{lem:technical}.
    %
    %
    %
    We begin by setting the notation used throughout the proof.
    Let $\mathcal{S}$ be a finite substitution alphabet
    with finite alphabet $\alp$ and set of variables $\var$.
    We fix three integers $C,k,\ell \in \mathbbm{N}$
    and two sequences of substitutions
    $\lambda = \sigma_1 \sigma_2 \ldots \sigma_n \in \mathcal{S}^*$
    and $\mu = \tau_1 \tau_2 \ldots \tau_n \in \mathcal{S}^*$
    satisfying
    $|\lambda| = |\mu|$,
    $\outp{\lambda} = \outp{\mu}$,
    and 
    $\delay_{C \ell}(\oriout{\lambda},\oriout{\mu}) > C^2 k$.
    We preemptively modify $\lambda_n$ and $\mu_n$
    to add $2\ell$ padding symbols $\# \notin \alp$
    at the start and end of $\outp{\lambda}$ and $\outp{\mu}$:
    this ensures that the cuts of $\outp{\lambda}$
    are not too close to its extremities
    (and does not change the delay).
    By definition,
    $\delay_{C \ell}(\oriout{\lambda},\oriout{\mu}) > C^2 k$ implies the existence
    of positions
    $j \in \cut_{C \ell}(\outp{\lambda})$
    and
    $t \in [1,n]$
    such that
    $\diff_{j,t}(\oriout{\lambda},\oriout{\mu}) > C^2 k$.
    We fix such a pair $j$ and $t$.
    The proof revolves around a careful study of what happens
    to the infix $\outp{\lambda}[j-2C\ell,j+C\ell]$ when some interval is pumped.
    To simplify the notation we let $\neighS = j-2C\ell$ and $\neighT = j+C\ell$ (see \cref{fig:notation}).
    
    \myparagraph{Structure of the proof of \cref{lem:technical}.}
    We show that if $k$ and $\ell$ are sufficiently large,
    then $\diff_{j,t}(\oriout{\lambda},\oriout{\mu}) > C^2 k$
    implies the existence of $0 \leq t_1 < t_2 < \ldots < t_C < |\lambda|$
    such that $\outp{\pump{\lambda}{(t_i,t_j]}}$ and $\outp{\pump{\mu}{(t_i,t_j]}}$
    are distinct
    for every $1 \leq i < j \leq C$.
    This is done in two steps.
    First, we establish a property $\textsf{P}$ for intervals $(s,s'] \subset [1,|\lambda|)$ that is sufficient for
    the words $\outp{\pump{\lambda}{(s,s']}}$ and $\outp{\pump{\mu}{(s,s']}}$ to be distinct (\cref{claim:sufficent_property}).
    It can be summarized as follows:
    If both $\outp{\pump{\lambda}{(s,s']}}$ and
    $\outp{\pump{\mu}{(s,s']}}$ contain a copy of the word
    $\outp{\lambda}[\neighS, \neighT]$
    but these copies do not overlap properly,
    the fact that $j \in \cut_{C \ell}(\outp{\lambda})$
    implies the existence of a mismatch
    between $\outp{\pump{\lambda}{(s,s']}}$ and $\outp{\pump{\mu}{(s,s']}}$.
    Second, we show how to ensure property $\textsf{P}$
    by carefully regulating the quantity of output produced
    and the arrangement of the variables with respect to
    the neighborhood $[\neighS,\neighT]$ of $\outp{\lambda}$
    and $\outp{\mu}$.
    Formally, we establish three conditions
    $\textsf{C}_1$, $\textsf{C}_2$ and $\textsf{C}_3$
    sufficient to get property $\textsf{P}$
    (\cref{claim:sufficient_conditions}).
    We then conclude by showing that,
    since $\diff_{j,t}(\oriout{\lambda},\oriout{\mu}) > C^2 k$,
    choosing $k$ and $\ell$ sufficiently large
    with respect to the parameters of $\mathcal{S}$
    (and, crucially, independently of $C$)
    guarantees the existence
    of $C$ consecutive intervals satisfying
    $\textsf{C}_1$, $\textsf{C}_2$ and $\textsf{C}_3$
    (\cref{claim:guarantee_conditions}).
    Combining the three claims yields the proof of \cref{lem:technical}.
    
    \myparagraph{One sufficient property.}
    An interval $(s,s'] \subset [1,|\lambda|)$
    \emph{satisfies Property $\textsf{P}$}
    if the output words generated by
    $\pump{\lambda}{(s,s']}$
    and
    $\pump{\mu}{(s,s']}$
    both contain a copy of the word
    $\outp{\lambda}[\neighS,\neighT] = \outp{\mu}[\neighS,\neighT]$
    such that these copies overlap partially,
    with a shift of at most $C \ell/2$.
    Formally:
    
    \begin{description}
        \item[$\textsf{P}$.]
        There exist $x,y \in \mathbbm{N}$ such that:
    $
    \left\{
    \begin{array}{lll}
    \outp{\pump{\lambda}{(s,s']}}[\neighS + x,\neighT + x] = \outp{\lambda}[\neighS,\neighT];\\
    \outp{\pump{\mu}{(s,s']}}[\neighS+y,\neighT + y] = \outp{\mu}[\neighS,\neighT];\\
    0 < |x-y| \leq C\ell.
    \end{array}
    \right.$
    \end{description}
    
    Since $j \in \cut_{C \ell}(\outp{\lambda})$
    we can derive from Property $\textsf{P}$
    the existence of a mismatch between
    $\outp{\pump{\lambda}{(s,s']}}$ and $\outp{\pump{\mu}{(s,s']}}$.

    \begin{figure}[t]
    \begin{tikzpicture}[baseline,thick,scale=1,
      every path/.style={shorten <=0cm,shorten >=0cm}]
    \pgfdeclarelayer{bg}    
    \pgfsetlayers{bg,main}
    \def\H{0.3}
    \def\W{0.25}
    \def\Hy{1}
    \def\L{50}
    \def\j{38}
    \def\l{5}
    \def\s{15}
    

    \node[anchor=east] at (-0.1,0.4*\H){\small $\outp{\lambda}$ :\strut};

    \draw[fill=white!75!Yellow]
      (\s*\W,0) rectangle (\j*\W,\H);
    \draw[decoration={brace,mirror,raise=5pt},decorate,thick]
      (\j*\W,\H) -- node[above=6pt] {\scriptsize $p$-periodic} (\s*\W,\H);

    \draw[fill=white!75!Blue]
      (\j*\W - 2*\l*\W,0) rectangle (\j*\W + \l*\W,\H);
    \draw[decoration={brace,mirror,raise=5pt},decorate,thick]
      (\j*\W - 2*\l*\W,0) -- node[below=6pt] {\scriptsize $q$-periodic} (\j*\W + \l*\W,0);  
      
    \draw[fill=white!75!ForestGreen]
      (\j*\W - 2*\l*\W,0) rectangle (\j*\W,\H);
    
    \draw[thick] (0,0) -- (\L*\W,0) -- (\L*\W,\H) -- (0,\H) -- (0,0);

    \draw[thick] (\s*\W,0) -- (\s*\W,\H);
    \draw[dotted] (\s*\W,0) -- (\s*\W, - 0.5*\Hy);
    \node at (\s*\W, - 0.8*\Hy){\small $j'$\strut};

    \draw[thick] (\j*\W,0) -- (\j*\W,\H);
    \draw[dotted] (\j*\W,0) -- (\j*\W, - 0.5*\Hy);
    \node at (\j*\W, - 0.8*\Hy){\small $j$\strut};

    \draw[thick] (\j*\W - 2*\l*\W,0) -- (\j*\W - 2*\l*\W,\H);
    \draw[dotted] (\j*\W - 2*\l*\W,0) -- (\j*\W - 2*\l*\W, - 0.5*\Hy);
    \node at (\j*\W - 2*\l*\W, - 0.8*\Hy){\small $j_1 = j - 2C\ell$\strut};

    \draw[thick] (\j*\W + \l*\W,0) -- (\j*\W + \l*\W,\H);
    \draw[dotted] (\j*\W + \l*\W,0) -- (\j*\W + \l*\W, - 0.5*\Hy);
    \node at (\j*\W + \l*\W, - 0.8*\Hy){\small $j_2 = j + C\ell$\strut};

    \end{tikzpicture}
      \caption{The different factors of $\outp{\lambda}$ studied in the proof of \cref{claim:sufficent_property}.}\label{fig:notation}
    \end{figure}
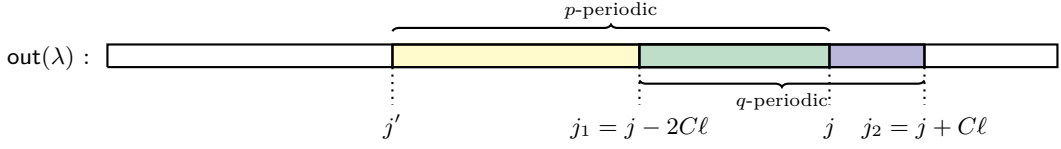

    \begin{claim}
    \label{claim:sufficent_property}
    For each interval $(s,s']$ satisfying $\textsf{P}$,
    $\outp{\pump{\lambda}{(s,s']}} \neq \outp{\pump{\mu}{(s,s']}}$.
    \end{claim}
    
    \begin{proof}
    The proof relies on the Fine-Wilf theorem~\cite{FineW65},
    a fundamental result in the study of combinatorics on words,
    that we now remind:
    For each $p >0$,
    we say that a word $w \in \alp^*$
    is \emph{$p$-periodic}
    if $w[i] = w[i+p]$ for all $1 \leq i \leq |w|-p$.
    Given $p,q > 0$ and $u,v,w \in \alp^*$,
    if $uv$ is $p$-periodic,
    $vw$ is $q$-periodic
    and $|v| \geq p+q-gcd(p,q)$,
    then $uvw$ is $gcd(p,q)$-periodic.
    
    Let us now prove the claim.
    First, remark that since $j \in \cut_{C \ell}(\outp{\lambda})$,
    by definition of $\fact_{C\ell}(\outp{\lambda})$
    there exists an integer $j' < j$
    such that $\outp{\lambda}[j',j]$
    is the longest prefix of $\outp{\lambda}[j',|\outp{\lambda}|]$
    that has a primitive root of length smaller than $C\ell$.
    In particular, this implies that $\outp{\lambda}[j',j]$ is $p$-periodic for some $p \leq C\ell$.
    Now let $(s,s'] \subset [0,n]$ be an interval satisfying $\textsf{P}$.
    To prove the claim we suppose,
    towards building a contradiction,
    that $\outp{\pump{\lambda}{(s,s']}} = \outp{\pump{\mu}{(s,s']}}$.
    Combining Property \textsf{P} with this equality 
    implies that $\outp{\lambda}[\neighS,\neighT]$
    is $q$-periodic for some $1 \leq q \leq C\ell$.
    We show that this implies that $\outp{\lambda}[j',\neighT]$ is also $q$-periodic.
    To this end, we differentiate two cases:
    \begin{enumerate}
        \item
        If $j' \geq \neighS$ then $[j',\neighT] \subseteq [\neighS,\neighT]$,
        and we directly get that
        $\outp{\lambda}[j',\neighT]$ is $q$-periodic.
        \item
        If $j' < \neighS$ then
        $[j',j] \cap [\neighS,\neighT] = [\neighS,j]$ (See \cref{fig:notation}).
        Since $j-\neighS = 2C\ell \geq p+q$,
        the Fine-Wilf theorem yields that $\outp{\lambda}[j',\neighT]$ is $gcd(p,q)$-periodic
        (hence it is also $q$-periodic).
    \end{enumerate}
    In both cases we showed that the interval $\outp{\lambda}[j',\neighT]$ is $q$-periodic.
    Since $q \leq C\ell$
    there exists $j'' \in (j,\neighT] = (j,j + C\ell]$
    such that $j''-j'$
    is a multiple of $q$.
    Then $\outp{\lambda}[j',j'']$ is 
    a prefix of $\outp{\lambda}[j',|\outp{\lambda}|]$
    longer than $\outp{\lambda}[j',j]$
    that has a primitive root of length at most $q$,
    thus smaller than $C\ell$.
    This contradicts the definition of $j$
    as the next cut after $j'$.
    \end{proof}

    \myparagraph{Three sufficient conditions.}
    We establish three conditions on intervals $(s,s'] \subset [1,|\lambda|)$
    whose conjunction implies Property \textsf{P}.
    Condition $\textsf{C}_1$ limits the growth of the delay from $s$ to $s'$
    in order to ensure that the shift between the two copies of $\outp{\lambda}[\neighS,\neighT]$ is appropriate.
    
    \begin{description}
        \item[$\textsf{C}_1$.]
        The delay between $\lambda$ and $\mu$ at position $\neighS$ increases between $s$ and $s'$,
        but \emph{not too much}:
        $\diff_{\neighS,s}(\oriout{\lambda},\oriout{\mu}) < \diff_{\neighS,s'}(\oriout{\lambda},\oriout{\mu}) < \diff_{\neighS,s}(\oriout{\lambda},\oriout{\mu}) + C\ell$;
    \end{description}
    
    \noindent
    Conditions $\textsf{C}_2$ and $\textsf{C}_3$ are more technical.
    To guarantee property \textsf{P}
    we need to ensure that $\outp{\lambda}[\neighS,\neighT] = \outp{\mu}[\neighS,\neighT]$ still appears after $(s,s']$ is pumped.
    To this end,
    we study how the production of $\outp{\lambda}[\neighS,\neighT]$ and $\outp{\mu}[\neighS,\neighT]$ evolves between $s$ and $s'$.
    We begin by introducing some notation concerning the sequences of substitutions
    $\lambda = \sigma_1 \sigma_2 \ldots \sigma_n \in \mathcal{S}^*$
    and $\mu = \tau_1 \tau_2 \ldots \tau_n \in \mathcal{S}^*$.
    We only define the notation explicitly for $\lambda$:
    it is defined analogously for $\mu$ by replacing each occurrence of $\lambda$ with $\mu$,
    and each occurrence of $\sigma$ with $\tau$.
    
    We set $\sigma_0 = \sigma_\epsilon$ and for every interval $[s,s'] \subseteq [0,n]$
    we denote by $\sigma_{[s,s']}$ the composition
    $\sigma_s \circ \sigma_{s+1} \circ \ldots \circ \sigma_{s'}$.
    Using this notation,
    we remark that the words
    $\outp{\lambda} = \sigma_{[0,s']}(\sigma_{(s',n]}(O))$
    and 
    $\outp{\pump{\lambda}{(s,s']}} = \sigma_{[0,s']}(\sigma_{(s,n]}(O))$
    are obtained by applying 
    the same substitution $\sigma_{[0,s']}$ to either $\sigma_{(s',n]}(O)$ or $\sigma_{(s,n]}(O)$.
    Since Property \textsf{P} requires that the infix $\outp{\lambda}[\neighS,\neighT]$ still appears after pumping $(s,s']$,
    we are specifically interested in the parts of $\sigma_{(s,n]}(O)$ and $\sigma_{(s',n]}(O)$ that are responsible
    for the creation of $\outp{\lambda}[\neighS,\neighT]$:
    For all $s \in [1,n]$
    we define a decomposition $\prefix_\lambda(s)\infix_\lambda(s)\suffix_\lambda(s)$ of $\sigma_{(s,n]}(O)$
    by setting $\infix_\lambda(s)$ as the \emph{minimal} infix such that
    \[
    \sigma_{[0,s]}(\prefix_\lambda(s)) = \outp{\lambda}[1,\neighS'), \ \
    \sigma_{[0,s]}(\infix_\lambda(s)) = \outp{\lambda}[\neighS',\neighT'], \ \
    \sigma_{[0,s]}(\suffix_\lambda(s)) = \outp{\lambda}(\neighT',\lambda],
    \]
    for some $\neighS' \leq \neighS$ and $\neighT' \geq \neighT$.
    In other words, $\infix_\lambda(s)$ is composed of the parts of $\outp{\lambda}[\neighS,\neighT]$ appearing in $\sigma_{(s,n]}(O)$ together with
    the variables that will contain fragments of $\outp{\lambda}[\neighS,\neighT]$
    once $\sigma_{[0,s)}$ is applied.
    We now have all the tools to define the remaining two conditions:
    
    \begin{description}
        \item[$\textsf{C}_2$.]
        The parts of
        $\sigma_{(s,n]}(O)$ and $\sigma_{(s',n]}(O)$ responsible for the creation of $\outp{\lambda}[\neighS,\neighT]$ and $\outp{\mu}[\neighS,\neighT]$
        are identical:
        $\infix_{\lambda}(s)  = \infix_{\lambda}(s')$
        and $\infix_{\mu}(s)  = \infix_{\mu}(s')$.
        \item[$\textsf{C}_3$.]
        The sequence of variables containing fragments of
        $\outp{\lambda}[1,\neighS)$ and
        $\outp{\mu}[1,\neighS)$ are the same at $s$ and $s'$:
        $\eta_{\var}(\prefix_{\lambda}(s)) = \eta_{\var}(\prefix_{\lambda}(s'))$
        and $\eta_{\var}(\prefix_{\mu}(s)) = \eta_\var({\prefix_{\mu}(s')})$,
        where $\eta_\var$ is the erasing morphism that
        preserves the symbols of $\var$ and erases
        the symbols of $\alp \cup \{ \# \}$.
    \end{description}
    We show that the conjunction of the three conditions implies Property $\textsf{P}$.

    \begin{claim}
    \label{claim:sufficient_conditions}
    Every interval 
    $(s,s']$ that satisfies
    $\textsf{C}_1$, $\textsf{C}_2$ and $\textsf{C}_3$
    also satisfies $\textsf{P}$.
    \end{claim}
    

    \begin{proof}
    Let us consider an interval $(s,s']$ that satisfies the three conditions.
    We first show that Condition $\textsf{C}_2$
    directly implies that both $\outp{\pump{\lambda}{(s,s']}}$ and $\outp{\pump{\mu}{(s,s']}}$
    contain a copy of the word
    $\outp{\lambda}[\neighS,\neighT] = \outp{\mu}[\neighS,\neighT]$,
    and then we combine $\textsf{C}_1$ and $\textsf{C}_3$ to show that the overlap between these two copies has the required size. 
    
    We focus on the sequence of substitutions $\lambda$
    by comparing $\outp{\lambda}$ and $\outp{\pump{\lambda}{(s,s']}}$.
    Since $\infix_\lambda(s) = \infix_\lambda(s')$ by Condition $\textsf{C}_2$,
    there exist $\neighS'' \leq \neighS$ and $\neighT'' \geq \neighT$ satisfying
    \[
    \begin{array}{lll}
    \outp{\lambda}
    & = &
    (\sigma_{[0,s']}
    \circ \sigma_{(s',n]})(O)\\
    & = &
    \sigma_{[0,s']}(\prefix_\lambda(s') \cdot \infix_\lambda(s') \cdot \suffix_\lambda(s'))\\
    & = &
    \sigma_{[0,s']}
    (\prefix_\lambda(s')) \cdot \outp{\lambda}[\neighS'',\neighT''] \cdot \sigma_{[0,s']}(\suffix_\lambda(s'));\\
    \outp{\pump{\lambda}{(s,s']}}
    & = &
    (\sigma_{[0,s']}
    \circ \sigma_{(s,n]})(O)\\
    & = &
    \sigma_{[0,s']}(\prefix_\lambda(s) \cdot \infix_\lambda(s) \cdot \suffix_\lambda(s))\\
    & = &
    \sigma_{[0,s']}
    (\prefix_\lambda(s)) \cdot \sigma_{[0,s']}(\infix_\lambda(s')) \cdot \sigma_{[0,s']}(\suffix_\lambda(s))\\
    & = &
    \sigma_{[0,s']}
    (\prefix_\lambda(s)) \cdot \outp{\lambda}[\neighS'',\neighT''] \cdot \sigma_{[0,s']}(\suffix_\lambda(s)).
    \end{array}
    \]
    As an immediate consequence, we have that
    \[
    \outp{\pump{\lambda}{(s,s']}}[\neighS'' + x,\neighT'' + x] = \outp{\lambda}[\neighS'',\neighT''],
    \textup{ with } x = |\sigma_{[0,s']}(\prefix_\lambda(s))|-|\sigma_{[0,s']}(\prefix_\lambda(s'))|.
    \]
    Since $\neighS'' \leq \neighS$ and $\neighT'' \geq \neighT$,
    this implies that, as required, $\outp{\pump{\lambda}{(s,s']}}$ contains a copy of $\outp{\lambda}[\neighS,\neighT]$.
    Let us now focus on the value of the shift $x$.
    First, remark that by definition of the function $\prefix_\lambda$,
    the letters of $\alp$ that appear in $\prefix_\lambda(s)$, respectively $\prefix_\lambda(s')$,
    are exactly the letters of $\outp{\lambda}$
    whose position is smaller than $\neighS$ and origin is greater than $s$, respectively $s'$, hence:
    \[
    \begin{array}{llll}
     x & = &
    |\sigma_{[0,s']}(\prefix_\lambda(s))| - |\sigma_{[0,s']}(\prefix_\lambda(s'))|\\
    & = & 
    |\eta_{\alp}(\prefix_\lambda(s))| + 
    |\sigma_{[0,s']}(\eta_{\var}(\prefix_\lambda(s)))|
    -
    |\eta_{\alp}(\prefix_\lambda(s'))| - 
    |\sigma_{[0,s']}(\eta_{\var}(\prefix_\lambda(s')))|\\
    & = & 
    (|\outp{\lambda}| - \weight_{\neighS,s}(\oriout{\lambda})) + 
    |\sigma_{[0,s']}(\eta_{\var}(\prefix_\lambda(s)))|\\
    & &
    - (|\outp{\lambda}| - \weight_{\neighS,s'}(\oriout{\lambda})) -
    |\sigma_{[0,s']}(\eta_{\var}(\prefix_\lambda(s')))|\\
    \end{array}
    \]
    \sarah[inline]{I think it should be $|\outp{\lambda}|$?}
    \isma[inline]{Good catch, I changed it}
    Moreover, Condition $\textsf{C}_3$ implies that the second and fourth terms cancel each other.
    To summarize, we showed that
    \[
    \outp{\pump{\lambda}{(s,s']}}[\neighS + x,\neighT + x] = \outp{\lambda}[\neighS,\neighT],
    \]
    $\textup{with } x = \weight_{\neighS,s'}(\oriout{\lambda}) - \weight_{\neighS,s}(\oriout{\lambda}).$
    Since Conditions $\textsf{C}_2$
    and $\textsf{C}_3$ treat $\lambda$ and $\mu$ identically,
    we get similarly that
    \[
    \outp{\pump{\mu}{(s,s']}}[\neighS + y,\neighT + y] = \outp{\mu}[\neighS,\neighT],
    \]
    $\textup{with } y = \weight_{\neighS,s'}(\oriout{\mu}) - \weight_{\neighS,s}(\oriout{\mu}).$
    To prove that $(s,s']$ 
    satisfies Property \textsf{P},
    it only remains to apply $\textsf{C}_1$:
    \[
    \begin{array}{lllll}
    |x-y| & = &
    \big|\weight_{\neighS,s'}(\oriout{\lambda}) - \weight_{\neighS,s}(\oriout{\lambda})\\
    & &
    -
    \big(
    \weight_{\neighS,s'}(\oriout{\mu}) - \weight_{\neighS,s}(\oriout{\mu})
    \big)\big|\\
    & = &
    |\diff_{\neighS,s}(\oriout{\lambda},\oriout{\mu})
    -
    \diff_{\neighS,s'}(\oriout{\lambda},\oriout{\mu})|
    \ \in \ [1,C\ell/2].
    \end{array}
    \]
    This concludes the proof.
    \end{proof}
    
    
    To conclude, we show that if $k$ and $\ell$
    are large enough with respect to the parameters of $\mathcal{S}$,
    the fact that $\diff_{j,t}(\oriout{\lambda},\oriout{\mu}) > C^2k$ implies through a combination of counting arguments
    the existence of $C$ consecutive segments
    that satisfy the three conditions.
    
    \begin{claim}
    \label{claim:guarantee_conditions}
    Let $M_1$ be the size 
    of the largest substitution occurring in $\mathcal{S}$, and let
    $M_2 = (|\var| + 1)^{|\var|+1} \cdot |\var|!$.
    If $\ell = M_1 M_2^2$ and $k > (6\ell + 3) M_1 M_2^2$,
    then there exist $0 \leq t_1 < t_2 < \ldots < t_C \leq t$
    such that $(t_i,t_{i'}]$ satisfies
    $\textsf{C}_1$, $\textsf{C}_2$ and $\textsf{C}_3$
    for all $1 \leq i < i'\leq C$.
    \end{claim}
    
    
    
    \begin{proof}
    If we consider a variable $s$ ranging from $0$ to $t$,
    then the value of $\diff_{j,s}(\oriout{\lambda},\oriout{\mu})$ increases by at most $M_1$ whenever $s$ increases by $1$.
    Since
    $\diff_{j,t}(\oriout{\lambda},\oriout{\mu}) > C^2k =
    C^2 \cdot (6\ell + 3)  M_1 M_2^2$,
    this implies that we can find
    $C'= C^2 \cdot (6\ell + 3) \cdot M_2^2$
    points $s_1 < s_2 < \ldots < s_{C'}$
    between $0$ and $t$
    along which the value of $\diff_{j,s}(\oriout{\lambda},\oriout{\mu})$ increases:
    We set $s_0 = 0$, and for every
    $1 \leq i \leq C' $
    we let $s_{i}$ be the minimal integer greater than $s_{i-1}$
    that satisfies
    $\diff_{j,s_{i}}(\oriout{\lambda},\oriout{\mu})>\diff_{j,s_{i-1}}(\oriout{\lambda},\oriout{\mu})$.
    This satisfies the following variation of condition $\textsf{C}_1$:
    
    \begin{description}
        \item[$\textsf{C}_1'$.]
        $\diff_{j,s_i}(\oriout{\lambda},\oriout{\mu}) < \diff_{j,s_{i+1}}(\oriout{\lambda},\oriout{\mu}) \leq \diff_{j,s_i}(\oriout{\lambda},\oriout{\mu}) + M_1$
        for every $0 \leq i \leq C'-1$.
    \end{description}
    
    Now, in order for $\textsf{C}_2$ to be satisfied,
    in particular we need intervals along which
    no output at the positions $[\neighS,\neighT]$
    of $\outp{\lambda}$ and $\outp{\mu}$ is produced.
    Let $\oriout{\lambda} = \outp{\lambda} \otimes o_\lambda$ and $\oriout{\mu} = \outp{\mu} \otimes o_\mu$.
    The set\sarah{I redefined U using other notation as $\origin{\lambda}$ does not exist any more.}
    \isma{All good.}
    $U = \{ o_{\lambda}[i] |
    i \in [\neighS,\neighT]\} \cup \{ o_{\mu}[i] | i \in [\neighS,\neighT]\}$
    contains at most $6C\ell+2$ elements,
    thus whenever we pick $6C\ell+3$ disjoint intervals in $[0,n]$,
    at least one of them contains no element from $U$. 
    As $C' = C^2 \cdot (6\ell + 3) \cdot M_2^2 > C \cdot (6C\ell + 3) \cdot M_2^2$,
    there exist $0 \leq x < y \leq C'$
    such that $y - x = C \cdot M_2^2$
    and for all $i \in [\neighS,\neighT]$, $\origin{\lambda}(i) \notin (s_{x},s_{y})$ and $\origin{\mu}(i) \notin (s_{x},s_{y})$.
    This property greatly restricts the possible values of $\infix_\lambda(s_i)$ when $i$ ranges from $x$ to $y$:
    if we set $\infix_\lambda(s_y) = u_0 X_1 u_1 X_2 u_2 \ldots X_m u_m$ with $X_1,\ldots,X_m \in \var$ and $u_0,\ldots,u_m \in \alp^*$,
    then for every $x \leq i \leq y$,
    since $\sigma_{(i,n]} (0)= (\sigma_{(i,y]} \circ \sigma_{(y,n]}) (O)$,
    there exists $H_1,H_2, \ldots, H_m \in \var^*$ such that $\infix_\lambda(s_i) = u_0 H_1 u_1 H_2 u_2 \ldots H_m u_m$.
    In other words $\infix_\lambda(s_i)$ is entirely determined by the tuple composed of the $m$ sequences of variables
    $H_1,H_2, \ldots, H_m \in \var^*$ that occur in it,
    and if we set on top of that $H_0 = \eta_\var(\prefix_\lambda(s_i))$,
    we get that the pair $(\eta_\var(\prefix_\lambda(s_i)),\infix_\lambda(s_i))$ is entirely determined by
    a tuple of $m+1$ sequences of $\var^*$.
    Similarly, the pair $(\eta_\var(\prefix_\mu(s_i)),\infix_\mu(s_i))$ is also determined by $m+1$ sequences of $\var^*$.
    Finally, since we are dealing with copyless substitutions,
    at most $M_2 = (|\var| + 1)^{|\var| + 1} \cdot |\var|!$ tuples of $m+1$ sequences of $\var^*$ can occur:
    there are $|\var|!$ possible ways of concatenating all the variables in $\var$,
    each such sequence has $|\var| + 1$ distinct prefixes,
    and there are (at most) $(|\var| + 1)^{|\var|}$ manners to cut each prefix in a tuple of $m+1$
    (possibly empty) parts since $m \leq |X|$ (as $m$ is the number of variables appearing in $\infix_\lambda(s_y)$).
    
    Therefore, since $y - x = C M_2^2$,
    we can find $C$ indices
    $t_1,t_2,\ldots,t_C \in \{s_{x},s_{x + 1}, \ldots, s_{y}\}$
    such that for every $1 \leq i < i' \leq C$,
    $\textsf{C}_2$ and $\textsf{C}_3$ are satisfied:
    $(\infix_{\lambda}(t_i),\eta_\var(\prefix_{\lambda}(t_i))) = (\infix_{\lambda}(t_{i'}),\eta_\var(\prefix_{\lambda}(t_{i'})))$ and
    $(\infix_{\mu}(t_i),\eta_\var(\prefix_{\lambda}(t_i))) = (\infix_{\mu}(t_{i'}),\eta_\var(\prefix_{\lambda}(t_{i'})))$.
    Finally, $\textsf{C}_1$ is obtained by 
    combining $\textsf{C}_1'$ with the facts that
    $y-x = C M_2^2$ and $\ell = M_1 M_2^2$.
    \end{proof}

  
\section{Regularity of the delay notion}

The following sections are devoted to the missing proofs of \cref{subsec:trueDelayResync}.

\subsection{Characterization lemma}
    
    \lemmaCarac*
    \begin{proof}
      First assume that $\lambda,\mu$ satisfy \cref{rat:i1} or \ref{rat:i2} from the statement of the lemma. We show that $(\lambda,\mu)\not\in \mathbb{D}_{k,\ell,\subs}$. If $\outp{\lambda} \not= \outp{\mu}$, then clearly $(\lambda,\mu)\not\in \mathbb{D}_{k,\ell,\subs}$. So assume that $\outp{\lambda} = \outp{\mu}$. Then $\cut_\ell(\outp{\lambda}) = \cut_\ell(\outp{\lambda})$. Hence it is not possible that \cref{rat:i2} is satisfied or \cref{rat:i1} with $j_1 \not= j_2$. The remaining case is \cref{rat:i1} with $j_1 = j_2 =: j$ and $\maxdiff_{j}(\oriout{\lambda},\oriout{\mu}) >~k$. Since $j$ is a cut, we obtain that $\delay_{\ell}(\oriout{\lambda},\oriout{\mu}) > k$ and thus $(\lambda,\mu)\not\in \mathbb{D}_{k,\ell,\subs}$.

    For the other direction, consider $(\lambda,\mu) \not\in \mathbb{D}_{k,\ell,\subs}$.
    Firstly, we assume that $\outp{\lambda} = \outp{\mu}$ which entails that $\cut_\ell(\outp{\lambda}) = \cut_\ell(\outp{\mu})$.
    Since $(\lambda,\mu) \notin \mathbb{D}_{k,\ell,\subs}$, there exists some $j \in \cut_\ell(\outp{\lambda}) \cap \cut_\ell(\outp{\mu})$ such that $\maxdiff_j(\oriout{\lambda},\oriout{\mu}) > k$.
    Let $j$ be the smallest such cut position.
    If $j$ is the first cut, let $i = 0$, otherwise let $i \in \cut_\ell(\outp{\lambda}) \cap \cut_\ell(\outp{\mu})$ such that $\nextcut_\ell(i,\outp{\lambda}) = \nextcut_\ell(i,\outp{\mu}) = j$.
    Clearly, \cref{rat:i1} is satisfied for this $i$.
    
    Secondly, we assume that $\outp{\lambda} \neq \outp{\mu}$.
    Since $\outp{\lambda},\outp{\mu} \in (\alpo{-}\dashv)^*\!\!\dashv$, there exists some $m$ such that $\outp{\lambda}[m] \neq \outp{\mu}[m]$.
    Consider the smallest such $m$.
    Let $M = \left(\left(\cut_\ell(\outp{\lambda}) \cap [1,m-1]\right) \cap \left(\cut_\ell(\outp{\mu}) \cap [1,m-1]\right)\right) \cup \{0\}$.
    Pick the position $i \in M$ closest to $m$ (the maximal position in $M$).
    If $\maxdiff_i(\oriout{\lambda},\oriout{\mu}) \leq k$ and $m \leq i + \ell^2$, we are done since \cref{rat:i2} is satisfied.
    
    Otherwise, we consider two cases.
    Either $\maxdiff_i(\lambda,\mu) > k$ or $\maxdiff_i(\oriout{\lambda},\oriout{\mu}) \leq k$ and $m > i + \ell^2$.
    
    Assume that $\maxdiff_i(\oriout{\lambda},\oriout{\mu}) > k$.
    Note that $\outp{\lambda}[1,i] = \outp{\mu}[1,i]$ since $i < m$. Thus, $\maxdiff_i(\oriout{\lambda},\oriout{\mu}) > k$ and $\maxdiff_0(\oriout{\lambda},\oriout{\mu}) = 0$ implies that there exist $i',i'' \in M$ with $0 \le i' < i'' \leq i$ and $i'' = \nextcut_\ell(i',\outp{\lambda}) = \nextcut_\ell(i',\outp{\mu})$ such that $\maxdiff_{i'}(\oriout{\lambda},\oriout{\mu}) \leq k$ and $\maxdiff_{i'',i''}(\oriout{\lambda},\oriout{\mu}) > k$.
    Hence, \cref{rat:i1} is satisfied (with $i'$ for $i$ and $i''$ for $j_1$ and $j_2$).
    
    Assume that $\maxdiff_i(\oriout{\lambda},\oriout{\mu}) \leq k$ and $m > i + \ell^2$.
    Recall that $i < m$ so $j_1 = \nextcut_\ell(i,\outp{\lambda})$ and $j_2 = \nextcut_\ell(i,\outp{\mu})$ exist.
    If $\maxdiff_{j_1,j_2}(\oriout{\lambda},\oriout{\mu}) > k$, \cref{rat:i1} is clearly satisfied.
    If $\maxdiff_{j_1,j_2}(\oriout{\lambda},\oriout{\mu}) \leq k$, we have to prove that $j_1 \neq j_2$ in order to satisfy \cref{rat:i1}.
    Towards a contradiction, assume $j_1 = j_2 =: j$. Since $i$ is maximal in $M$, we obtain that $j \ge m$.
    Let $\proot(\outp{\lambda}[i,j]) = u$ and $\proot(\outp{\mu}[i,j]) = v$ and $h_1,h_2$ such that $\outp{\lambda}[i,j] = u^{h_1}$ and $\outp{\mu}[i,j] = v^{h_2}$. 
    Because of the mismatch at $m \le j$, we have $u^{h_1} \neq v^{h_2}$. Since further $|u^{h_1}| = |v^{h_2}|$, the difference between $u^{h_1}, v^{h_2}$ must be on the first $|u||v|$ positions. This contradicts the fact that $m$ is the first mismatch between $\outp{\lambda}$ and $\outp{\mu}$ and $j \ge m \geq i + \ell^2$ and $|u||v| \leq \ell^2$.
    \end{proof}

  \subsection{Formal details for the regularity of the characterization}
  \label{sec:formaloverview}
  
  We provide here more details for the proof that $\cdelay{\subs}$ is regular (\cref{lem:regdashv}), and prove that it implies the regularity of $\mathbb{D}_{k,\ell,\subs}$ in \cref{lem:polRat}. First, we introduce some useful notions.
  
  Let $\Gamma$ be an arbitrary set and $\pi_\Sigma : (\Sigma\times \Gamma)^*\rightarrow \Sigma^*$ be the projection morphism which projects labels on $\Sigma$. We extend $\pi_\Sigma$ naturally to substitution sequences $\nu$ in $\sub{\var}{\Sigma\times \Gamma}^*$: $\pi_\Sigma$ projects on $\Sigma$ the labels from $\Sigma\times \Gamma$ occurring in the substitutions of $\nu$. In particular, $\outp{\pi_\Sigma(\nu)} = \pi_\Sigma(\outp{\nu})$.

  Let $n\geq 0$ and $u\in\Sigma^*$. For any $n$-tuple of positions
  $\overline{x}$ of $u$, we let
  $u\triangleright \overline{x}$ be the word in $(\Sigma\times
  2^{\{1,\dots,n\}})^*$
  of length $|u|$ such that for all $p\in\{1,\dots,|u|\}$, 
  $(u\triangleright \overline{x})[p] = (u[p],\{i\mid p=x_i\})$. An $n$-marking is any word of the form $u\triangleright \overline{x}$. E.g.,
  $abc\triangleright (1,3,1) = (a,\{1,3\})(b,\varnothing)(c,\{2\})$. 
  An \emph{$n$-marked substitution sequence} is a sequence of substitutions
  $\nu$ in $(\sub{\var}{\alp\times 2^{\{1,\dots,n\}}})^*$ such
  that $\outp{\nu}$ is an $n$-marking. Given $\lambda\in (\sub{\var}{\alp})^*$ and an $n$-tuple of positions $\overline{x}$ of $\outp{\lambda}$, we denote by $(\lambda\triangleright \overline{x})$ the unique $n$-marked substitution sequence such that $\outp{\lambda\triangleright \overline{x}} = \outp{\lambda}\triangleright \overline{x}$, $\pi_\Sigma(\lambda\triangleright \overline{x}) = \pi_\Sigma(\lambda)$ and each mark appears exactly once in $\lambda\triangleright \overline{x}$.
   For example, if $\var=\{X,O\}$ and $\lambda = ((X,O)\mapsto (Xc,Ob)).((X,O)\mapsto (\epsilon, aOX))$, then
   $\lambda\triangleright (1,3,1) = ((X,O)\mapsto (X(c,\{2\}),O(b,\varnothing))).((X,O)\mapsto (\epsilon, (a,\{1,3\})OX))$.

  
  
  
  \begin{restatable}{lemma}{lemCutNextCut}\label{coro:cutnextcutreg}
      The following sets of $1$- and $2$-marked
      substitution sequences $\lambda\triangleright i$ and 
  $\lambda\triangleright (i,j)$ where $\lambda\in\subs^*$ and $1\leq i\leq j\leq |\outp{\lambda}|$, are regular, recognizable by NFA with both $O(\ell|\Sigma|^{4|\var|\ell^3})$ states:
          \[
      \textsf{Cut}=  \{(\lambda\triangleright i)\mid
       i\in \cut_\ell(\outp{\lambda})\} \qquad \textsf{NextCut}  =  \{(\lambda\triangleright (i,j) \mid \nextcut_\ell(i,\outp{\lambda})=j\}
       \]
  \end{restatable}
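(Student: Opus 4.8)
The strategy is to first establish the result at the level of plain words over $\alpo$ and then transfer it to substitution sequences using the fact that $\SST$s pull back regular languages to regular languages. Concretely, I would first show that the sets $W_{\textsf{Cut}} = \{ u \triangleright i \mid u \in \alpo^*,\ i \in \cut_\ell(u)\}$ and $W_{\textsf{NextCut}} = \{ u \triangleright (i,j) \mid u\in\alpo^*,\ \nextcut_\ell(i,u) = j\}$ are regular subsets of $(\alpo \times 2^{\{1\}})^*$ resp.\ $(\alpo \times 2^{\{1,2\}})^*$. For this, recall the definition of the $\ell$-factorization: it is built greedily from the left, each factor being the longest prefix of the remaining suffix whose primitive root has length at most $\ell$. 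A position $i$ is an $\ell$-cut iff it is a boundary of this factorization. The key combinatorial observation is that whether $i$ is a factor boundary is a \emph{local} property once we know the ``shape'' of the current factor, i.e., its primitive root $w$ (of length $\leq \ell$) and an $\ell$-window around $i$: a factor $u_p$ starting at some position $s$ ends at the first position $t \geq s$ such that extending to $t+1$ would force the primitive root of $u[s,t+1]$ to exceed $\ell$. By the Fine--Wilf theorem, once a run is periodic with period $w$ for a sufficiently long stretch (length $> 2\ell$ say), its primitive root is exactly $\proot(w)$, so the automaton only needs to remember the candidate period word $w$ (there are at most $|\alpo|^\ell$ of these, hence the $|\alpo|^{O(\ell)}$-type state count with the extra powers coming from combining several such windows and the factor-in-progress bookkeeping) together with a bounded-length suffix buffer of size $O(\ell^2)$ or so to detect where the current factor must end. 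I would make this precise by having the automaton maintain, at each position, the primitive root of the current factor-so-far together with how far into a fresh ``potential new factor'' we are; deciding a cut is then a finite-state computation, and the $\ell^3$ in the exponent accounts for the product of the period alphabet with the positional window needed to certify maximality of factors via Fine--Wilf.

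Once regularity of $W_{\textsf{Cut}}$ and $W_{\textsf{NextCut}}$ as word languages is established, I would transfer it to marked substitution sequences. The operation $\lambda \mapsto \lambda \triangleright \overline{x}$ together with the output map is exactly an $\SST$ computation: given a marked substitution sequence $\nu \in \sub{\var}{\alpo \times 2^{\{1,\dots,n\}}}^*$, its output $\outp{\nu}$ is a word over $\alpo \times 2^{\{1,\dots,n\}}$, and $\nu$ is a legal $n$-marked substitution sequence with $\outp{\nu} = u \triangleright \overline{x}$ for the appropriate $u$ iff (a) $\outp{\nu} \in W$ for the relevant word language $W$ and (b) $\nu$ is \emph{well-marked}, meaning each label $i \in \{1,\dots,n\}$ appears exactly once across all right-hand sides of $\nu$. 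Condition (b) is plainly regular (count occurrences of each mark, with a bounded counter $0/1/{\geq}2$). Condition (a) is regular because the map sending $\nu \in \sub{\var}{\alpo\times 2^{\{1,\dots,n\}}}^*$ to $\outp{\nu}$ is realized by a (non-deterministic, but here we only need the inverse-image direction) $\SST$, and the inverse image of a regular language under an $\SST$ is regular --- this is a standard fact and I would cite/invoke it. Intersecting the two regular sets gives $\textsf{Cut}$ resp.\ $\textsf{NextCut}$, and the state bound follows by taking the product of the word-automaton (of size $O(\ell|\Sigma|^{4|\var|\ell^3})$ --- the $|\var|$ in the exponent appears because simulating the $\SST$ on $\lambda$ to recover $\outp{\lambda}$ requires tracking, for each variable, a bounded-length abstraction of its content relevant to the windows and periods we monitor) with the bounded-size well-markedness automaton.

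The main obstacle I anticipate is the combinatorial core: proving that ``$i$ is an $\ell$-cut'' and ``$\nextcut_\ell(i,u) = j$'' are recognizable by \emph{finite-state} automata reading $u$ (resp.\ $u\triangleright i$, $u \triangleright (i,j)$) left to right, with the claimed state bound. The subtlety is that the $\ell$-factorization is \emph{greedy}, so the position where a factor ends depends on a potentially long look-ahead into apparently-periodic continuations; one must argue via Fine--Wilf / periodicity lemmas that a bounded window (length polynomial in $\ell$) always suffices to decide, deterministically and online, whether the current factor ends here, and that the candidate primitive root of the next factor is determined within a bounded prefix of the remaining suffix. Getting the window size right (and hence the $\ell^3$ exponent) is where the real work lies; the transfer from words to substitution sequences via $\SST$ inverse images is then comparatively routine, modulo carefully checking the well-markedness encoding and the state count of the $\SST$-simulation.
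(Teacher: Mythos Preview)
Your proposal is correct and follows essentially the same two-step approach as the paper: first prove regularity at the word level (the paper's \cref{lem:cutreg}), then transfer to substitution sequences via the inverse-image-under-SST construction (the paper's \cref{lem:invreg}, which tracks a pair of target-automaton states per variable, giving the $q^{2|\var|}$ blowup). One minor difference: for the maximality of factors in the word-level automaton, the paper does not invoke Fine--Wilf but instead runs, for each $\ell' \leq \ell$, a copy of the \emph{complement} of the $\ell'$-periodicity automaton on the short extension of each factor to the next multiple of $\ell'$; this is where the $\ell^3$ in the exponent comes from (up to $\ell^2$ such copies in parallel, each of size $|\Sigma|^\ell$), and it sidesteps the window-size analysis you flag as the main obstacle.
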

  
  \begin{proof}[Proof sketch.]
  The proof is done in two steps:
  \begin{enumerate}
      \item we prove that $\{u \triangleright i\mid u\in\Sigma^*, i\in \cut_\ell(u)\}$ and 
      $\{u \triangleright (i,j)\mid u\in\Sigma^*, \nextcut_\ell(i,u)=j\}$ are recognizable by NFA with $O(\ell.|\Sigma|^{\ell+\ell^3})$ states, 
      \item then we transfer the latter result to marked substitution sequences.
  \end{enumerate}

      For step $1$, we actually prove a somewhat stronger result: the set of words $u$ such that
      all positions in $\cut_\ell(u)$ are marked by $1$ is
      regular, recognizable by an NFA $C$. Intuitively, the two main
      ingredients of the construction of $C$ are: (1) checking that a word is periodic
      with period \emph{at most} $\ell$, and (2) checking that a word
      is \emph{not} periodic for any period at most $\ell$.
      The first
      property is needed to check that words in between two marked
      positions are periodic of period $\ell'\leq \ell$, and it is doable by a finite automaton as
      $\ell$ is fixed (using $O(\ell.|\Sigma|^{\ell})$ states, where the
      factor $\ell$ comes from the fact that a period $\ell'\leq \ell$
      must be non-deterministically guessed). The second
      property is needed to check that the factor in between marked
      positions are maximal, in the sense that they cannot be extended
      to longer periodic factors of $u$, for any period at most $\ell$. This step
      is slightly more difficult but doable with
      $O(|\Sigma|^{\ell^3})$. The full proof is given in Appendix (\cref{lem:cutreg}). 
      
      For step $2$, we prove that given a regular language $P$ of $n$-markings of words over $\Sigma$, that we call regular predicate, the set of $n$-marked substitution sequences $\nu$ such that $\pi_\Sigma(\nu)\in \subs^*$ and $\outp{\nu}\in P$ is itself regular, recognizable by an NFA with $O(q^{2m})$ states, where $m = |\var|$ and $q$ is the number of states of an NFA recognizing $P$. This result is actually an easy consequence of the fact that given a
      regular language $L$ and some non-deterministic SST $T$, the inverse of $L$ by $T$
      is regular as well. It seems to be a folklore result but we could
      not find a reference where this result is explicitly mentioned
      with the
      right complexity, we reprove it in Appendix (\cref{lem:invreg}) for the sake of
      completeness. 
  \end{proof}
  
  We now prove some useful regularity results about substitutions
  sequences marked with positions satisfying some particular properties of their
  maximal difference $\maxdiff$.
  
  \begin{restatable}{lemma}{lemmaDelayReg}\label{lem:delaylargesmallreg}
      For all $\alpha\in\mathbbm{N}$, the following sets of pairs of $2$-marked substitution sequences $(\lambda\triangleright i_1)\otimes (\mu\triangleright i_2)$ where $\lambda,\mu\in\subs^*$, $1\leq i_1\leq \outp{\lambda}$ and $1\leq i_2\leq \outp{\mu}$ are regular and recognizable by NFA with $O(\alpha.8^{|\var|})$ states:
      \[
      \begin{array}{llllll}
      \mdiff_{\leq \alpha} & = & \{ (\lambda\triangleright i_1)\otimes (\mu\triangleright i_2)\mid
                    \maxdiff_{i_1,i_2}(\oriout{\lambda},\oriout{\mu})\leq
                             \alpha\} \\
       \mdiff_{>\alpha} & = & \{ (\lambda\triangleright i_1)\otimes (\mu\triangleright i_2)\mid
                  \maxdiff_{i_1,i_2}(\oriout{\lambda},\oriout{\mu})> \alpha\} \\
  
        \mdiffe_{\leq \alpha} & = & \mdiff_{\leq \alpha} \cap \{ (\lambda\triangleright i_1)\otimes (\mu\triangleright i_2)\mid
                     i_1=i_2\} \\ 
  
        \mdiffd_{\leq \alpha} & = & \mdiff_{\leq \alpha}\cap \{ (\lambda\triangleright i_1)\otimes (\mu\triangleright i_2)\mid
                     i_1\neq i_2\} \\ 
      \end{array}
       \]
  \end{restatable}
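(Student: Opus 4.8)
The plan is to construct, for each of the four sets, a finite automaton that reads the convolution $(\lambda\triangleright i_1)\otimes(\mu\triangleright i_2)$ one pair of marked substitutions at a time and maintains, after processing the $t$-th pair, the signed quantity
\[
d_t \;=\; \weight_{i_1,t}(\lambda)-\weight_{i_2,t}(\mu),
\]
stored explicitly while $|d_t|\leq\alpha$ and collapsed into a single absorbing \emph{overflow} state as soon as $|d_t|>\alpha$. Since $\maxdiff_{i_1,i_2}(\lambda,\mu)=\max_{1\leq t\leq|\lambda|}|d_t|$, a pair lies in $\mdiff_{\leq\alpha}$ exactly when overflow is never reached and in $\mdiff_{>\alpha}$ exactly when it is, so these two automata differ only in which states are accepting; the counting component contributes $O(\alpha)$ states. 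Moreover $|\lambda|=|\mu|$ and $i_1\leq|\outp{\lambda}|$, $i_2\leq|\outp{\mu}|$ force $d_{|\lambda|}=i_1-i_2$, so $\mdiffe_{\leq\alpha}$ (resp.\ $\mdiffd_{\leq\alpha}$) is recognised by the automaton for $\mdiff_{\leq\alpha}$ with the acceptance condition strengthened to ``no overflow and $d_{|\lambda|}=0$'' (resp.\ ``no overflow and $d_{|\lambda|}\neq0$''), at no extra cost in states. Checking that the input is well formed -- it projects to a pair in $\subs^*\times\subs^*$ and carries exactly one mark on each side -- is trivially regular since $\subs$ is finite.

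The content of the construction is the transition function, i.e.\ computing $d_t-d_{t-1}$, which is the number of \emph{fresh} output letters produced by the $t$-th substitution of $\lambda$ that survive to $\outp{\lambda}$ at a position $\leq i_1$, minus the analogous number for $\mu$. Neither ``survives to the output'' nor ``which side of the marked position a fresh letter lands on'' is determined by the prefix read so far, as both depend on later substitutions. To deal with this the automaton guesses, for each register $X\in\var$ and each time $t$, a \emph{status} in a fixed finite set: after the $t$-th substitution the content of $X$ is either (a)~not co-reachable to $\varO$, hence discarded; (b)~co-reachable and lying entirely at output positions $\leq i_1$; (c)~co-reachable and lying entirely at positions $>i_1$; or (d)~co-reachable and straddling the marked position. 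Copylessness ensures that an already-produced block of letters is never duplicated or split, so a register's content is always a contiguous factor of the output and these four cases are exhaustive, with at most one register in status~(d) at a time; statuses~(a) and (c) behave identically as far as the weight is concerned and can be merged, and together with the single-straddling-register observation this lets one store both status maps within $O(8^{|\var|})$ states, giving $O(\alpha\cdot 8^{|\var|})$ overall.

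At each step the automaton verifies that the guessed statuses are consistent with the substitution just read and with the previous statuses: $X$ may be given a co-reachable status only if the unique register whose right-hand side mentions $X$ also carries a co-reachable status at the next step (this is closed off against $\kappa_F$ and $\varO$ when the final substitution is read); the side assigned to $X$ must match the side of the register into which $X$ is embedded; and the unique status-(d) register must be the one that either receives the marked fresh letter -- which happens exactly once, namely when the mark is produced -- or contains the previously status-(d) register. Because $\lambda,\mu,i_1,i_2$ admit a single globally consistent status assignment, every run built on a wrong guess is eventually rejected, and along the correct run $d_t-d_{t-1}$ is read straight off the right-hand sides of the $t$-th substitutions together with the current statuses: a fresh block written into a status-(b) register contributes its full length, a block written into a status-(a) or status-(c) register contributes $0$, and for the status-(d) register one counts exactly those fresh letters it receives that land at positions $\leq i_1$, which is determined by where inside that register's new content the marked letter (or the previously status-(d) register) sits. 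This yields automata of the stated size for all four sets.

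I expect the main obstacle to be making this forward guess-and-validate scheme watertight: one must check that the status set genuinely closes under one substitution step -- the point where copylessness is essential, to exclude split blocks and to keep the straddling register unique -- and, more delicately, that \emph{every} inconsistent global guess is caught no later than when the final-output substitution is processed, so that the automaton is sound rather than merely complete. Once that bookkeeping is nailed down, updating $d_t$, reading off the four acceptance conditions, and the state count are routine.
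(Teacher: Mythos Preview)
Your approach is essentially the same as the paper's: maintain the running signed difference $d_t=\weight_{i_1,t}(\lambda)-\weight_{i_2,t}(\mu)$ in a bounded counter with an absorbing overflow state, and resolve the forward uncertainty about which freshly produced letters end up at output positions $\le i_1$ by nondeterministic guessing verified against later substitutions. Your observation that $d_{|\lambda|}=i_1-i_2$, and hence that $\mdiffe_{\le\alpha}$ and $\mdiffd_{\le\alpha}$ fall out by restricting the final counter value, matches the paper exactly.

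The one difference is packaging. You guess a per-register \emph{status} (left of mark / right of mark / straddling / discarded) and thread it forward. The paper instead introduces \emph{$1$-labellings}: the unique relabelling of $\lambda\triangleright i_1$ in which every output position $\le i_1$ carries the mark $\{1\}$, so that each letter occurring in a substitution already advertises whether it is ``relevant''. The map $\textsf{mks}$ from $1$-labellings back to $1$-markings is a letter-to-letter rational function realised by a transducer with $O(2^{|\var|})$ states (it only needs to guess the flow of variables into $\varO$), and on $1$-labelled inputs the difference update is read off literally, with no further guessing. Because the marking-to-labelling correspondence is a bijection, soundness is automatic: there is a unique preimage, so wrong guesses are rejected by $\textsf{mks}$ itself rather than by an ad hoc consistency check. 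This sidesteps precisely the point you flag as the main obstacle---making the guess-and-validate scheme watertight, and in particular justifying that merging statuses (a) and (c) does not admit spurious accepting runs that miscompute $d_t$. Your route works, but the paper's decomposition into ``inverse of a rational relabelling'' followed by ``trivial bounded counter on labelled inputs'' gives the soundness argument for free and keeps the $O(\alpha\cdot 8^{|\var|})$ bound modular.
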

  
  \begin{proof}[Proof sketch.]
      Let us sketch the proof for $\mdiff_{\leq \alpha}$. We construct a finite
      automaton $A$ checking the conditions of
      the definition of $\mdiff_{\leq \alpha}$. For a substitution $s$ in
      $\lambda$ and an occurrence of a $\Sigma$-symbol in $s$, we say that
      this occurrence is \emph{relevant} if in the final output, it will
      occur before $i_1$. A similar notion can be defined for $\mu$. 
      When reading $(\lambda\triangleright i_1)\otimes
      (\mu\triangleright i_2)$, the automaton must count, incrementally, the number of
      relevant occurrences of $\Sigma$-symbols in substitutions of
      $\lambda \triangleright i_1$ minus the number
      of relevant occurrences of $\Sigma$-symbols in substitutions of
      $(\mu\triangleright i_2)$. Note that the absolute value of this difference after having
      read $t$ input pairs of substitutions is exactly
      $\diff_{i_1,i_2,t}(\oriout{\lambda},\oriout{\mu})$. The automaton updates this difference
      incrementally while reading pairs of substitutions, and as soon as
      its absolute value exceeds $\alpha$, the
      automaton rejects (or go to a sink accepting state if one wants to
      recognize $\mdiff_{>\alpha}$). It is correct since $\maxdiff$ is a max of
      $\diff$ over all input positions $t$, by definition. There is a
      little technicality due to the fact  it is unknown, when
      reading a pair of substitutions of $(\lambda\triangleright i_1)\otimes
      (\mu\triangleright i_2)$, which $\Sigma$-symbol occurrences are relevant and
      which are not. To overcome this, we first apply a rational relabelling into
      a more informative marking which also indicates which symbol occurrences are
      relevant or not, and obtain the result since rational
      relabelings preserve regularity under inverse. Note that if at
      the end of the computation, the difference is $0$ iff $i_1=i_2$,
      allowing to recognize the languages $\mdiffd_{\leq \alpha}$
      and $\mdiffe_{\leq \alpha}$. The proof is detailed in \cref{sec:proofLemDelayReg}. 
  \end{proof}
  
  Finally, the following result is useful to prove the regularity of the second property (\cref{rat:i2}) of the characterization of \cref{lem:carac}:
  
  \begin{restatable}{lemma}{lemmaMismatch}\label{lem:mismatch}
      The following set of pairs of $2$-marked substitution sequences $(\lambda\triangleright (i_1,j_1))\otimes (\mu\triangleright (i_2,j_2))$ where $\lambda,\mu\in \subs^*$ and $1\leq i_1,j_1\leq |\outp{\lambda}|$, $1\leq i_2,j_2\leq |\outp{\mu}|$, is regular, recognizable by an NFA with
                                $O((|\Sigma|\ell)^{8|\var|})$
                                 states:
      \[
         \textsf{Mism}_\ell = \{ (\lambda\triangleright
                                 (i_1,j_1))\otimes (\mu\triangleright
                                 (i_2,j_2))\mid 
                                                  0 \leq j_1-i_1 = j_2-i_2\leq \ell^2, 
                                               \outp{\lambda}[j_1]\neq\outp{\lambda}[j_2] \}
                                               \]                           
                             \end{restatable}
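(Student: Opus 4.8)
The plan is to follow the same two-stage strategy used for \cref{coro:cutnextcutreg}: first prove a regularity statement about \emph{marked words}, and then transfer it to marked substitution sequences via the fact that the inverse image of a regular language by an SST is regular (the result alluded to as \cref{lem:invreg}). Concretely, I would first show that the set
\[
W_\ell = \{\, u\triangleright(i_1,j_1)\ \text{convolved with}\ v\triangleright(i_2,j_2)\mid 0\leq j_1-i_1=j_2-i_2\leq\ell^2,\ u[j_1]\neq v[j_2]\,\}
\]
is a regular subset of pairs of $2$-markings, recognizable by an NFA of size polynomial in $|\Sigma|\ell$. This is easy: an automaton reads the two marked words in parallel; when it sees the label $1$ on the first track (position $i_1$) it starts a counter $c_1\in\{0,1,\dots,\ell^2\}$ that it increments at every subsequent letter, and it also records the letter $u[j_1]$ once it sees label $2$ on the first track; symmetrically for the second track with a counter $c_2$ and the letter $v[j_2]$. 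It accepts iff both marked pairs $(i_1,j_1)$ and $(i_2,j_2)$ occur with label-$1$ strictly before (or at) label-$2$, the value of the counter at the label-$2$ position is the same on both tracks and lies in $[0,\ell^2]$ (so $j_1-i_1=j_2-i_2\leq\ell^2$), and the two recorded letters differ. All of this needs only $O((|\Sigma|\ell)^{O(1)})$ states since $\ell$ and $\Sigma$ are fixed; the displacement $j-i$ is bounded by $\ell^2$, so it can be tracked explicitly.

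For the transfer step, recall that for a substitution sequence $\lambda$ and positions $\overline{x}$ of $\outp{\lambda}$ the $n$-marked sequence $\lambda\triangleright\overline{x}$ satisfies $\outp{\lambda\triangleright\overline{x}}=\outp{\lambda}\triangleright\overline{x}$. Consider the convolution $(\lambda\triangleright(i_1,j_1))\otimes(\mu\triangleright(i_2,j_2))$ as an input over the product alphabet of marked substitutions; reading it with an SST that simply simulates both tracks produces the pair of marked output words $\outp{\lambda}\triangleright(i_1,j_1)$ and $\outp{\mu}\triangleright(i_2,j_2)$ in parallel. Since $W_\ell$ is regular, its inverse image under this SST is regular, with state count multiplied by a factor $q^{2|\var|}$ where $q$ is the number of states of the NFA for $W_\ell$ — exactly the bound claimed in the statement (using that $\mathcal S\subseteq\sub{\var}{\alpo}$ is fixed and the product alphabet has size polynomial in $|\Sigma|$). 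This is precisely the same ``regular predicate pulled back along an SST'' argument invoked in step~2 of the proof sketch of \cref{coro:cutnextcutreg}, applied to the predicate $P=W_\ell$ on $2$-markings over $\Sigma$ (here a two-track version, but the same folklore lemma applies track-wise).

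The only point needing a little care — and the part I expect to be the main (mild) obstacle — is bookkeeping the marks correctly: each of the four marks $i_1,j_1,i_2,j_2$ must appear \emph{exactly once} in the respective marked substitution sequence, and the automaton for $W_\ell$ must be tolerant to the fact that, on the level of the substitution sequence, the output positions $i_1,j_1$ are not read in left-to-right output order but are scattered across the right-hand sides of updates and only assembled into their final order by the composition of substitutions. However, this is handled \emph{for free} by the pullback: the SST inverse-image construction reconstructs $\outp{\lambda}$ in its correct order, so the NFA for $W_\ell$ always sees the marks in output order. Hence no new idea is required beyond what is already in place for \cref{coro:cutnextcutreg,lem:delaylargesmallreg}; one just instantiates the predicate $P$ as $W_\ell$ and reads off the state bound. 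The detailed construction is deferred to the appendix.
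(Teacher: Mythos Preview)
Your overall two-stage strategy matches the paper's proof: establish regularity on marked output words, then pull back via the SST inverse-image lemma (\cref{lem:invreg}). However, there is a genuine technical gap in your transfer step. You propose to read the convolution $(\lambda\triangleright(i_1,j_1))\otimes(\mu\triangleright(i_2,j_2))$ with ``an SST that simply simulates both tracks'' producing ``the pair of marked output words in parallel'', and then apply the inverse-image lemma to the word-level predicate $W_\ell$. But an SST produces a \emph{single} output word, and $|\outp{\lambda}|$ and $|\outp{\mu}|$ need not be equal (nothing in the definition of $\textsf{Mism}_\ell$ forces this), so the ``convolution'' of the two marked output words that your $W_\ell$ is supposed to receive is ill-defined in general. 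Your parenthetical ``the same folklore lemma applies track-wise'' is the right instinct, but applying \cref{lem:invreg} track-wise to a predicate that \emph{couples} the two tracks (same displacement $d$, distinct letters) is not immediate: you first need to express the coupling as a finite disjunction of independent single-track conditions.

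The paper fixes this precisely by decoupling the tracks before pulling back: it writes the word-level predicate as the finite union
\[
H \;=\; \bigcup_{\sigma_1\neq\sigma_2}\ \bigcup_{0\leq d\leq\ell^2}\ P_{\sigma_1,d}\otimes P_{\sigma_2,d},
\]
where $P_{\sigma,d}=\{u\triangleright(i,i{+}d)\mid u[i{+}d]=\sigma\}$ is a \emph{single-track} predicate recognized by a DFA with $O(d)$ states. Now \cref{lem:invreg} can legitimately be applied to each track separately (yielding NFAs on marked substitution sequences of size $O(d^{2|\var|})$), the two resulting automata are run as a product on the convolution of the substitution sequences (which \emph{do} have the same length), and the union over $\sigma_1,\sigma_2,d$ is taken at the end. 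This gives the stated bound. Your sketch becomes correct once this decomposition is inserted; everything else you wrote is fine.
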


  \begin{proof}[Proof sketch.]
  As for \cref{coro:cutnextcutreg}, we first prove the regularity for the output language using the fact that the distance between $j_1$ and $i_1$ (and between $j_2$ and $i_2$) is bounded by a constant, and then transfer it to substitution sequences while preserving regularity. The full proof is given in \cref{sec:proofLemMismatch}.
  \end{proof}
  
  Finally we prove the main lemma of this section: regularity of the delay notion. 
  
  \begin{restatable}{lemma}{lemmapolRat}\label{lem:polRat}
      For every $k,\ell \in \N$ and any finite subset of substitutions
      $\subs$ over a set of variables $\var$ and alphabet $\Lambda$,
      the set $\mathbb{D}_{k,\ell,\subs}$ is regular,
      recognizable by a DFA with a number of states doubly exponential
      in $\ell^3$ and in $|\var|$, and singly exponential in $k$.
  \end{restatable}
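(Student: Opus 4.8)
The plan is to assemble the automaton for $\mathbb{D}_{k,\ell,\subs}$ out of the building blocks provided by \cref{lem:carac} and the regularity lemmas (\cref{coro:cutnextcutreg,lem:delaylargesmallreg,lem:mismatch}), following the proof overview given above. The key observation is that \cref{lem:carac} characterizes the complement of $\mathbb{D}_{k,\ell,\subs}$ \emph{restricted to pairs whose outputs end with a fresh endmarker $\dashv$} by an existentially-quantified property over positions in $\outp{\lambda}$ and $\outp{\mu}$. Concretely, I would first fix an extra endmarker symbol $\dashv\notin\alpo$ and work with the alphabet $\alpo\cup\{\dashv\}$ and the augmented substitution set in which one designated ``final'' substitution appends $\dashv$ to the output variable; call $\subs^{\dashv}$ this set. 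For such bordered sequences the condition $\outp{\lambda}=\outp{\mu}$ can be folded into the regular machinery, which is exactly why \cref{lem:carac} is stated with the $\dashv$ hypothesis.

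\textbf{Step 1: build the automaton for $\cdelay{\subs}$.} I mark, via the $\triangleright$ operation, the positions $i$, $j_1$, $j_2$ (and the offset position $i+b$) occurring existentially in \cref{lem:carac}, obtaining marked substitution sequences over $\var$ and the extended alphabet $\alpo\times 2^{\{1,\dots,4\}}$. The conjunction required by \cref{lem:carac} is: (a) $i\in(\cut_\ell(\outp{\lambda})\cap\cut_\ell(\outp{\mu}))\cup\{0\}$, which is $\textsf{Cut}$ applied to both components (or $i=0$); (b) $\maxdiff_i(\lambda,\mu)\le k$, which is $\mdiffe_{\le k}$; and (c) either \cref{rat:i1} — using $\textsf{NextCut}$ on both components to pin down $j_1,j_2$, then $\mdiff_{>k}$ or $j_1\ne j_2$ (the disequality of two marked positions is itself regular, and in fact comes for free from $\mdiffe/\mdiffd$) — or \cref{rat:i2}, using $\textsf{Mism}_\ell$ to locate a mismatch at distance $b\in[0,\ell^2]$. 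Intersecting the corresponding NFA (all over the common marked alphabet, using $\otimes$ to read $\lambda$ and $\mu$ synchronously) and then projecting away the markings yields an NFA for $\cdelay{\subs}$; call this \cref{lem:regdashv}. The projection is sound because each mark is guaranteed, by the $\triangleright$ construction, to occur exactly once, so a successful projected run witnesses genuine positions.

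\textbf{Step 2: from $\cdelay{\subs}$ to $\mathbb{D}_{k,\ell,\subs}$.} By \cref{lem:carac}, for bordered pairs we have $(\lambda,\mu)\in\mathbb{D}_{k,\ell,\subs}\cap(\subs^{\dashv})^2$ iff $\lambda\otimes\mu\notin\cdelay{\subs}$ and $|\lambda|=|\mu|$. So I complement the DFA obtained by determinizing the NFA for $\cdelay{\subs}$ (length-equality and the $\dashv$-border are regular and intersected in), obtaining a DFA $A_D$ for the bordered restriction of $\mathbb{D}_{k,\ell,\subs}$. Finally, to recover $\mathbb{D}_{k,\ell,\subs}$ itself over $(\subs\times\subs)^*$, note that $(\lambda,\mu)\in\mathbb{D}_{k,\ell,\subs}$ iff appending the final $\dashv$-substitution to both yields a bordered pair in the language of $A_D$ (appending the same letter to equal outputs preserves equality and does not create new cuts before the old last position, while turning the former last position into an interior cut does not affect the delay value there); this is a fixed-length rational transformation of the input, so applying its inverse to $A_D$ and keeping only runs of the right shape gives the desired automaton. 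Since $A_D$ is deterministic, so is the result after a final determinization.

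\textbf{Complexity and main obstacle.} The state bounds are obtained by tracking the sizes through Steps 1 and 2: the NFA from \cref{coro:cutnextcutreg} has $O(\ell\,|\Sigma|^{4|\var|\ell^3})$ states, those from \cref{lem:delaylargesmallreg} have $O(k\,8^{|\var|})$ states (taking $\alpha=k$), and $\textsf{Mism}_\ell$ has $O((|\Sigma|\ell)^{8|\var|})$ states; their product is still singly exponential in $\ell^3$ and $|\var|$ and polynomial in $k$, after which the single determinization in Step 2 contributes one exponential, giving a DFA doubly exponential in $\ell^3$ and $|\var|$ and singly exponential in $k$, as claimed. The main obstacle is not any single regularity lemma — those are delegated — but the \emph{correctness of the assembly}: checking carefully that the existential quantifier over $i,j_1,j_2,b$ is faithfully implemented by marking-then-projection (in particular that markings can always be placed consistently across the $\otimes$ of two sequences when the claimed positions exist, and that no spurious witness survives projection), and that the endmarker wrapping in Step 2 is a genuine bijection between $\mathbb{D}_{k,\ell,\subs}$ and the bordered language recognized by $A_D$ — i.e., that adding and later removing $\dashv$ neither changes the $\ell$-factorization in a way that alters any relevant $\maxdiff$ value at a cut, nor drops the condition $\outp{\lambda}=\outp{\mu}$.
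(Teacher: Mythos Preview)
Your proposal is correct and follows essentially the same two-step architecture as the paper: first build an NFA for $\cdelay{\subs}$ by taking a synchronized product of the marked-sequence automata from \cref{coro:cutnextcutreg,lem:delaylargesmallreg,lem:mismatch} and projecting away the marks (this is the paper's \cref{lem:regdashv}), then complement, intersect with the endmarker constraint $T_\dashv\otimes T_\dashv$, and strip the endmarker. The only cosmetic difference is in Step~2: the paper enlarges the substitution set to $\subs'=\subs\cup\Phi(\subs)$ and removes the endmarker by a \emph{forward} alphabetic morphism $\Psi:\subs'\to\subs$, whereas you phrase it as the \emph{inverse} of appending a designated final $\dashv$-substitution; both realize the same bijection $(\star)$ between $\mathbb{D}_{k,\ell,\subs}$ and the bordered language, and your identification of the correctness of this wrapping as the main thing to verify matches the paper's own emphasis.
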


  \begin{proof}[Proof sketch.]
     It is done in two steps: $(i)$ first, for any finite subset of substitutions $\subs'$, we prove the regularity of the set 
      $\cdelay{\subs'}$, and then $(ii)$ we show how to use it to prove \cref{lem:polRat}.

      For step $(i)$, the proof is based on the characterization of
      \cref{lem:carac}, which states sub-properties which can all be checked individually by the automata constructed in \cref{coro:cutnextcutreg,lem:delaylargesmallreg,lem:mismatch}. Taking a synchronized product of those automata yield the result. The technical details are in Appendix  (\cref{lem:regdashv}).

      For step $(ii)$, observe that the characterization of
      \cref{lem:carac},
  
      \noindent $(1)$ holds for pairs of substitution sequences whose output ends
      with some endmarker $\dashv$, and

      \noindent $(2)$ characterizes the pairs of such substitution sequences
      which are \emph{not} in $\mathbb{D}_{k,\ell,\subs}$.

      The restriction of observation $(1)$ can be removed as follows:
      first, we define $\subs = \subs' \cup \subs_\dashv$,
      where $\subs_\dashv$ is the set of substitutions of $\subs$ where
      an endmarker has been added in the update of the
      output variable $\varO$. The statement $(i)$ tells us that $\cdelay{\subs'}$ is regular. This yields regularity of 
      the complement of $\mathbb{D}_{k,\ell,\subs'}$, as noticed in the
      second observation),  complement which is restricted to substitutions
      sequences whose outputs end with an endmarker. The latter set is
      then complemented into a DFA. Finally, we apply a
      morphism which removes the endmarker to the substitutions of
      substitution sequences. This preserves the
      regularity (and the determinism of the automaton), as well as the delay. The full details are given in \cref{sec:proofLemPolRat}. 
  \end{proof}

    \subsection{Regularity of Cut and NextCut}
    
    \lemCutNextCut*
    
     The proof of this lemma is done in two steps, as explained in the main body of the paper. Those two steps are respectively given by \cref{lem:cutreg} and \cref{lem:invreg} below, whose combination immediately yields the desired result. In what follows, we call $n$-ary regular predicate any regular set of $n$-markings.

    \begin{restatable}{lemma}{lemmaCutReg}\label{lem:cutreg}
        The sets $\{u \triangleright i\mid u\in\Sigma^*, i\in \cut_\ell(u)\}$ and 
        $\{u \triangleright (i,j)\mid u\in\Sigma^*, \nextcut_\ell(i,u)=j\}$
        are unary and binary regular predicates respectively. They can be
        both recognized by NFA with $O(\ell.|\Sigma|^{\ell+\ell^3})$ states. 
    \end{restatable}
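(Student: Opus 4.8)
Since this statement concerns plain words over $\Sigma$ (the transfer to substitution sequences needed by \cref{coro:cutnextcutreg} is handled separately, via inverse images of regular languages under SSTs), the plan is to build a nondeterministic automaton that reads $u$ and reconstructs, on the fly, the greedy $\ell$-factorization $\fact_\ell(u)$, declaring the right end of each factor a cut. From such an automaton both predicates follow at once: for the unary predicate one intersects with the constant-size automaton ``exactly one position is marked'' and requires that position to be a declared cut; for the binary predicate one intersects with ``exactly two positions $i<j$ are marked'', requires $j$ to be a declared cut, and forbids any declared cut strictly between $i$ and $j$ (with the degenerate case in which $i$ precedes the first cut treated in the obvious way). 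Thus the state count is governed by the factorization automaton.

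First I would set up two gadgets. The \emph{periodicity gadget} nondeterministically guesses a period $p\in\{1,\dots,\ell\}$, keeps the last $p$ letters in a sliding window, rejects as soon as it reads a letter differing from the one $p$ positions back, and maintains the length read so far modulo $p$; guessing $p$ and storing a window of length at most $\ell$ costs $O(\ell\cdot|\Sigma|^{\ell})$ states. A word $v$ has primitive root of length at most $\ell$ exactly when some period $p\le\ell$ of $v$ divides $|v|$, so by additionally checking that the block length is a multiple of $p$ and that the first $p$ letters form a primitive word (a bounded check on a word of length at most $\ell$) this gadget recognizes the \emph{valid} blocks together with a guess of their primitive period $p^{*}$. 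The second gadget checks that a word has \emph{no} period at most $\ell$; this is what certifies that a factor cannot be enlarged. Because several short periods may coexist on a short word, this gadget keeps a window of $O(\ell^{3})$ letters, which is where the $|\Sigma|^{\ell^{3}}$ factor in the bound originates; and once a word of length greater than $\ell$ has lost every period at most $\ell$ it can never regain one, so from that point on the gadget sits in a sink.

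The factorization automaton then nondeterministically cuts $u$ into consecutive blocks, runs the periodicity gadget on each, and at every block boundary verifies (i) validity of the block, and (ii) that the block is maximal, i.e. is the longest valid prefix of the corresponding suffix of $u$. Maximality is the delicate point, and the tool is the Fine--Wilf periodicity lemma: I would show that once a valid block $v=u[a,b]=w^{k}$ has length greater than $2\ell$, it is maximal if and only if it ends with $u$ or the period $p^{*}=|w|$ breaks within the next $p^{*}$ letters, i.e. $u[b+j]\neq u[b+j-p^{*}]$ for some $j\in\{1,\dots,p^{*}\}$. One direction is immediate: if the period survived for $p^{*}$ further letters, then $u[a,b+p^{*}]$ would again be a power of $w$, hence a longer valid prefix. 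For the other direction, any strictly longer valid prefix would carry two periods at most $\ell$ on a word of length greater than $2\ell$, forcing by Fine--Wilf a common period that divides $p^{*}$; this common period is either a proper divisor of $p^{*}$, contradicting primitivity of $w$, or equals $p^{*}$, which together with the divisibility constraints yields a contradiction as well. Hence maximality of a long block is a local condition, checked via the two gadgets with a look-ahead of at most $\ell$ letters; blocks of length at most $2\ell$ are handled by remembering their content and the set of surviving periods, still within $|\Sigma|^{O(\ell)}$. Assembling the pieces and reading the extra mark bits gives an NFA with $O(\ell\cdot|\Sigma|^{\ell+\ell^{3}})$ states for each predicate.

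The main obstacle is exactly this maximality condition. The set of valid prefixes of a word is \emph{not} prefix-closed (for $\ell=2$ both $ba$ and $baba$ are valid while $bab$ is not), and whether a position is an $\ell$-cut genuinely depends on arbitrarily far future letters (the factorization of $abab$ ends at position $4$, while that of $ababab$ does not), so cuts cannot be decided by purely local inspection. The way around this is the combinatorial fact above: apart from an initial portion of bounded length, a greedy factor behaves like a pure power whose right end is signalled by a single local mismatch. The step I expect to require the most care is proving, via Fine--Wilf, that a window of size polynomial in $\ell$ really suffices to certify both validity and maximality, while keeping the book-keeping tight enough that the state count stays within the announced bound.
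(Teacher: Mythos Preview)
Your plan is correct in spirit and, via the Fine--Wilf reduction, takes a genuinely different route from the paper. The paper does \emph{not} use Fine--Wilf at all: it checks maximality of every factor by running, for \emph{each} period $\ell'\le\ell$, the complement automaton $\overline{D_{\ell'}}$ on the factor extended to the next multiple of $\ell'$ past the cut. At any input position at most $\ell$ recently ended factors can still have such checks pending, each with $\ell$ periods, so at most $\ell^{2}$ copies of automata with $O(|\Sigma|^{\ell})$ states run in a product, and this product is exactly where the $|\Sigma|^{\ell^{3}}$ bound comes from. From the resulting automaton for the full language $\{u\triangleright\cut_\ell(u)\}$ the two predicates are obtained by the same projection trick you describe.

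Your Fine--Wilf argument for long factors is sound (any longer valid prefix would force the same primitive root $w$, hence the period $p^{*}$ must continue for $p^{*}$ more letters), and in fact would give a tighter state bound than the paper's. However, your bookkeeping is off in two places. First, ``the word has no period at most $\ell$'' is not the maximality condition: a factor $u[a{,}b]$ can be maximal while $u[a{,}b{+}1]$ still has period $p^{*}$, as long as no extension whose length is a \emph{multiple} of some $p\le\ell$ keeps that period; so the second gadget as you describe it tests something strictly stronger than needed and would wrongly reject legitimate factorizations. Second, that gadget does not need a window of $O(\ell^{3})$ letters (a window of $\ell$ letters plus one bit per candidate period suffices), so your explanation of the $|\Sigma|^{\ell^{3}}$ factor is not right. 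If you keep the Fine--Wilf route, the honest accounting is: a sliding window of $O(\ell)$ letters plus, for each of the at most $O(\ell)$ recent cuts in the window, the guessed primitive period; this gives $|\Sigma|^{O(\ell)}\cdot\ell^{O(\ell)}$ states, which is well within the stated bound. If instead you want to match the paper's exponent exactly, drop Fine--Wilf and run the $\ell$ complement checks per factor in parallel as the paper does; that is simpler and makes the $\ell^{3}$ transparent.
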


    \begin{proof}[Proof]
        For all $h\in\N$, let $P_{h}
        =\bigcup_{v\in \Sigma^{h}} v^*$ be the set of periodic words of
        period $h$. Note that $P_h$ is recognizable by a deterministic
        finite automaton $D_h$ with $O(|\Sigma|^h)$ states. 
        Now, for all words $u\in\Sigma^*$ and set $X$ of positions
        of $u$, let $u\triangleright X\in (\Sigma\times \{\varnothing,
        \{1\}\})^*$ be such that for all positions $i$, we have 
        $(u\triangleright X)[i] = (u[i],\{1\})$ if $i\in X$, otherwise it
        is $(u[i],\varnothing)$. We prove that the language
        $$
        L_{cut} = \{ u\triangleright \cut_\ell(u)\mid u\in\Sigma^*\}
        $$
        is regular. To that end, consider a word $u$ and an arbitrary
        subset $X = \{ i_1 < i_2 < \dots < i_n \}$ of positions of
        $u$. Let $i_0 = 0$.  It
        defines a factorization $u = u(i_0{:}i_1]u(i_1{:}i_2]\dots
        u(i_{n}{:}|u|]$ (note that the last factor is empty if $i_n =
        |u|$). We let $u_k = u(i_k{:}i_{k+1}]$.  An automaton $C$ recognizing $L_{cut}$ needs to check
        several properties in parallel (via a synchronized product construction):
    
        $(i)$ the maximal element of $X$ is $|u|$,

        $(ii)$ for all $0\leq k < n$, $u_k\in P_{\ell'}$ for some $1\leq
        \ell'\leq \ell$. To do so , $C$ successively for each $k$, non-deterministically guesses a period
        $\ell'$ and checks that $u_k$ is accepted by $D_{\ell'}$. In total, this needs $O(\ell.|\Sigma|^{\ell})$
        states.

        $(iii)$ for all $0\leq k < n$, the automaton $C$ must check that there is no factor
        of $u$ which starts at position $i_k+1$, belongs to some $P_{\ell'}$,
        and is strictly longer than $u_k$. In other words, $C$ must check that for
        all periods $1\leq \ell' \leq \ell$, if $\alpha_{\ell'}>i_{k+1}$ is the
        smallest integer such that $\alpha_{\ell'}-i_k\ mod\ \ell' = 0$, then either
        $\alpha_{\ell'}>|u|$ or  $u(i_k{:}\alpha_{\ell'}]\not\in
        P_{\ell'}$. To do so, for all $0\leq k < n$ and all $1\leq \ell'\leq \ell$, $C$ simulates a copy of the automaton $\overline{D_{\ell'}}$ (complement of
        $D_{\ell'}$) running on the factor $u(i_k{:}\alpha_{\ell'}]$ ($C$ also makes
        sure that $\alpha_{\ell'}$ is indeed the smallest integer such
        that $\alpha_{\ell'}-i_k\ mod\ \ell' = 0$ and $\alpha_{\ell'}\leq |u|$, by
        slightly modifying $\overline{D_{\ell'}}$). The different copies
        of the (modified) automata $\overline{D_{\ell'}}$ are run in
        parallel via a synchronized product, and must all accept. There is
        at most $\ell^2$ copies running in parallel, and each copy has
        $O(|\Sigma|^\ell)$ states, so, $C$ needs
        $O(|\Sigma|^{\ell^3})$ states for this verification.

        Overall, to check the three properties above in parallel, $C$ needs
        $O(\ell.|\Sigma|^{\ell+\ell^3})$ states. 
    
        It is not difficult, from an automaton $A$ recognizing $L_{cut}$,
        to define two automata recognizing the languages of the lemma
        statement. For the first language, it is a  kind of projection of
        $A$:  its
        transition on labels $(\sigma,\{1\})$ are duplicated as
        transitions on labels $(\sigma,\varnothing)$ and it is
        additionally checked that there exists exactly one position marked
        $\{1\}$ on the input.

        An automaton for the second language can be constructed using
        similar ideas.  
    \end{proof}

    \begin{restatable}{lemma}{lemmaInvReg}\label{lem:invreg}
        Let $n\geq 0$ be fixed. Given a regular $n$-ary predicate $P$ recognized by an NFA
        with $q$ states, the set of $n$-marked
        substitution sequences $\lambda$ over $\subs_n$ such that $\outp{\lambda}\in P$ is
        regular, recognized by an NFA with $O(q^{2m})$ states, where $m = |\var|$.
    \end{restatable}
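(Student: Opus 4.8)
The plan is to construct the NFA directly, using nondeterminism so that, rather than tracking the full transition profile of each register content (a relation on $q$ states, which would blow up to $2^{q^2}$ per register), the automaton only \emph{guesses one run} of the automaton for $P$ through each register. Fix an NFA $B=(\Lambda,Q_B,I_B,\Delta_B,F_B)$ with $q$ states recognizing $P$ over the alphabet $\Lambda=\Sigma\times 2^{\{1,\dots,n\}}$ of $n$-markings; we may assume $B$ is \emph{complete}, i.e.\ from every state every letter has at least one successor, by adding a single sink state. Write $p\xrightarrow{w}p'$ (in $B$) when $B$ has a run from $p$ to $p'$ reading $w\in\Lambda^*$. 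I would first recall the forward register semantics: for $\lambda=\sigma_1\cdots\sigma_T\in\subs_n^*$, setting $\nu_0(X)=\epsilon$ for all $X\in\var$ and $\nu_t=\nu_{t-1}\circ\sigma_t$, one has $\nu_t=\sigma_\epsilon\circ\sigma_1\circ\cdots\circ\sigma_t$ and $\outp{\lambda}=\nu_T(\varO)$; concretely, if $\sigma_t(X)=w_0Y_1w_1\cdots Y_jw_j$ with $w_i\in\Lambda^*$ and the $Y_i\in\var$ pairwise distinct (and, by copylessness, distinct across all $X$), then $\nu_t(X)=w_0\,\nu_{t-1}(Y_1)\,w_1\cdots\nu_{t-1}(Y_j)\,w_j$, so $\nu_t$ is obtained from $\nu_{t-1}$ by processing $\sigma_t$ from left to right.

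Next I would define the NFA $A$ with input alphabet $\subs_n$, state set $Q_A=(Q_B\times Q_B)^{\var}$, initial states $I_A=\{f\in Q_A\mid \forall X\in\var,\ \exists p\in Q_B,\ f(X)=(p,p)\}$, accepting states $F_A=\{f\in Q_A\mid f(\varO)\in I_B\times F_B\}$, and transitions $(f,\sigma,f')\in\Delta_A$ iff for every $X\in\var$, writing $\sigma(X)=w_0Y_1w_1\cdots Y_jw_j$, $f(Y_i)=(\pi_i,\omega_i)$ and $f'(X)=(p,q)$, one has $p\xrightarrow{w_0}\pi_1$, $\omega_i\xrightarrow{w_i}\pi_{i+1}$ for $1\le i<j$, and $\omega_j\xrightarrow{w_j}q$ (and simply $p\xrightarrow{w_0}q$ when $j=0$). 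The intended meaning of a state $f$ is that, for each $X$, the pair $f(X)$ records guessed endpoints of a run of $B$ over the current content of $X$, and the transition relation threads these guesses through $\sigma$ as prescribed by the update; each of the finitely many constraints $a\xrightarrow{w}b$ is decidable, so $\Delta_A$ is well defined, and $|Q_A|=(q+1)^{2|\var|}=O(q^{2m})$.

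Correctness would then follow from the invariant, proved by induction on $t$: the set of states of $A$ reachable on input $\sigma_1\cdots\sigma_t$ equals $\{f\in Q_A\mid \forall X\in\var:\ p\xrightarrow{\nu_t(X)}p'\text{ where }(p,p')=f(X)\}$. The base case holds because $\nu_0(X)=\epsilon$. For the inductive step, soundness (every reachable state lies in this set) follows by concatenating, for each $X$, the threading run over $\sigma_{t+1}(X)$ with the runs over the $\nu_t(Y_i)$ furnished by the induction hypothesis. Completeness (every state in the set is reachable) follows by taking a candidate $f'$ together with, for each $X$, a $B$-run $r_X$ over $\nu_{t+1}(X)$ between the endpoints prescribed by $f'$, and slicing $r_X$ along the factorization $\nu_{t+1}(X)=w_0\,\nu_t(Y_1)\,w_1\cdots\nu_t(Y_j)\,w_j$ to read off the endpoints of the sub-runs over the $\nu_t(Y_i)$; since the $Y_i$ occurring in the various $\sigma_{t+1}(X)$ are pairwise distinct (copylessness), this defines a consistent predecessor state $f$ for all registers occurring in $\sigma_{t+1}$, while each discarded register is assigned the endpoints of an arbitrary $B$-run over its content, which exists since $B$ is complete. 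Instantiating the invariant at $t=T$ and intersecting with $F_A$ gives $\lambda\in L(A)\iff\nu_T(\varO)=\outp{\lambda}\in L(B)=P$, which is the claim.

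The main obstacle is the completeness half of the inductive step: the endpoint guesses coming from different registers must be globally consistent, and it is exactly copylessness (each register content appears at most once in any later content) that makes this possible — and is the reason one run through each register, rather than a full transition profile on $Q_B$, suffices, yielding the $q^{2m}$ bound. The secondary technicality is that registers whose contents are eventually discarded still need legal endpoint guesses along the way; completing $B$ beforehand settles this at the cost of one extra state, absorbed by the $O(q^{2m})$ bound.
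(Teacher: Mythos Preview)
Your construction is correct and is essentially the same as the paper's: both track, for each variable, a nondeterministically guessed pair of states of the predicate automaton, and update these pairs by threading through the right-hand sides of the substitution; the paper first phrases this as ``inverse image of a regular language by an SST'' and then specializes to the trivial one-state SST that just applies its input substitution, whereas you go directly to that special case. Your treatment is in fact more careful than the paper's sketch, since you spell out the invariant, explicitly use copylessness for the completeness direction, and handle discarded registers by completing $B$---points the paper leaves implicit.
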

    
    \begin{proof}
        Suppose that $T$ uses $m$ variables and $\alpha$
        states, and $L$ is defined by a NFA $A_L$ with $n_q$ states. The main
        idea is to construct an NFA $B$ which simulates $T$ and in parallel non-deterministically computes for each of its
        variables, a pair of states $(p,q)$, with the following invariant:
        for each variable $X$ and at each moment there is a run
        of $A_L$ from $p$ to $q$ on the content of $X$. So, states of $B$
        are of the form $Q_T\times (Var_T \rightarrow Q_L\times Q_L)$ where $Q_T$
        are the states of $T$, $Var_T$ its variables, and $Q_L$ the states
        of $A_L$. In a state $(p, f)$, reading a symbol $a$, the automaton
        $B$ goes to any state of the form $(p',f')$ such that $T$ goes from $p$ to $p'$ on
        $a$, producing a substitution $\sigma$, and $f'$ satisfies the
        following: for all $X\in Var_T$,  if $\sigma(X)$ is of the form 
        $u_1X_1u_2X_2\dots u_nX_nu_{n+1}$ where the $u_i$ do not contain
        variables, then there exists states $p_1,q_1,p_2,q_2,\dots,
        p_{n+1},q_{n+1}\in Q_L$ such that $f'(X) = (p_1,q_{n+1})$, $f(X_i)
        = (q_i, p_{i+1})$ and $p_i\xrightarrow{u_i}_{A_L} q_i$ for all
        $i=1\dots n$. The automaton accepts if at the end of the
        computation, the function $f$ computed satisfies that
        $f(\varO)\in I_L\times F_L$, where $\varO$ is the
        output variable of $T$, $I_L$ the set of initial states of $A_L$
        and $F_L$ its set of accepting states. So, the automaton $B$ has $|Q_L|^{2|Var_T|}$
        states.

        To get the statement of the lemma, it suffices to apply the latter
        result to a single initial-accepting state  SST $T$ whose input alphabet is
        $\subs_n$, with set of variables $\var$, and which does nothing
        but apply the substitution it reads as input. For $L$, it suffices
        to take the predicate $P$. 
        \end{proof}

      \subsection{Regularity results for max-diff}
      \label{sec:proofLemDelayReg}
    \lemmaDelayReg*
    
    \begin{proof}[Proof (formal details).]
    Before proving the lemma, we introduce some useful notions. Given a
    word $u\in\Sigma^*$, an \emph{$n$-labelling} of $u$ is
    a word in $v\in(\Sigma\times 2^{\{1,\dots,n\}})^*$ such that
    $\pi_1(v) = u$ and for all positions $p_1\leq p_2$ of $u$,
    $\pi_2(v[p_1])\supseteq \pi_2(v[p_2])$. For example, the word 
    $v = (a, \{1,2,3\})(b, \{1,2,3\})(c, \{1,3\})(d,\varnothing)$ is a
    $3$-labelling of $u = abcd$. Observe that there is a
    bijection between the $n$-labelings of $u$ and the
    $n$-markings of $u$: the $n$-marking of $u$ associated with 
    an $n$-labelling of $u$ is obtained by keeping only the right-most
    occurrence of $i\in\{1,\dots,n\}$, and conversely, any occurrence
    of $i$ in some $n$-marking is propagated leftward to all
    sets. This bijection from $n$-labelings to $n$-markings, denoted
    $\textsf{mk}$, is a rational function, defined by a finite transducer of
    constant size if $n$ is fixed. For example, we have $\textsf{mk}(v) = (a, \varnothing) (b,
    \{2\}) (c, \{1,3\}) (d, \varnothing)$.

    Similarly, an \emph{$n$-labelled substitution sequence} $\tau$ is a
    substitution sequence such that $\outp{\tau}$ is an
    $n$-labelling (of its projection onto $\Sigma$). Any $n$-labelled
    substitution sequence $\tau$ can be mapped injectively to an
    $n$-marked substitution sequence  we denote $\textsf{mks}(\tau)$ such
    that $\outp{\textsf{mks}(\tau)} =
    \textsf{mk}(\outp{\tau})$. The function $\textsf{mks}$ is
    defined by a finite transducer  $T$ with $O(2^{|\var|})$ states (if
    $n$ is fixed) we now describe. Suppose $T$  processes an $n$-labelled
    substitution sequence $\lambda$ and reads a substitution $s$
    of $\tau$. It must
    replace any set $I\subseteq \{1,\dots,n\}$ occurring in some
    update of some variable $X$, by $J\subseteq I$ such that $J$
    contains $i\in I$ if this is the last occurrence of $i$ in
    $\outp{\tau}$. To that end, the transducer guesses $J\subseteq
    I$ for all $j\in J$, verifies that $j$ will not occur again in
    $\outp{\tau}$. The latter requires $T$ to use non-determinism
    to guess the order in
    which the content of the variables in the substitutions
    occur in $\outp{\tau}$. This requires to keep $O(2^{|\var|})$ states.

    Now, observe that any $1$-labelled substitution sequence $\tau$
    uniquely defines a substitution sequence $\lambda$ and some
    position $i$ in $\outp{\lambda}$ such that $\textsf{mks}(\tau) = \lambda\triangleright
    i$. Let $G_\alpha = \{ \tau_1\otimes \tau_2\mid \exists
    \lambda.\textsf{mks}(\tau_1) = \lambda\triangleright i_1,
    \textsf{mks}(\tau_2) = \lambda\triangleright i_2,
    \maxdiff_{i_1,i_2}(\oriout{\lambda},\oriout{\mu})\leq \alpha\}$. We have
    $$
    \mdiff_{\leq \alpha} = \{ \textsf{mks}(\tau_1)\otimes \textsf{mks}(\tau_2)
    \mid \tau_1\otimes \tau_2\in G_\alpha\}
    $$
    Therefore, if $G_\alpha$ is regular, definable by an NFA with $q$
    states, $\mdiff_{\leq \alpha}$ is regular and definable by an NFA with
    $O(q.2^{|\var|})$ states, because $\textsf{mks}$ is a
    letter-to-letter rational function defined by a finite transducer
    with $O(2^{|\var|})$ states.

    We show that $G_\alpha$ is regular, definable by an NFA $A$ with
    $O(\alpha.2^{|\var|})$ states.  The automaton $A$ is obtained via a synchronized
    product of three finite automata which respectively check that:
    \begin{enumerate}
      \item for each pair of substitutions $(\sigma_1,\sigma_2)$ read
        as input, their $\Sigma$-projection are equal, i.e.,
        $\pi_\Sigma(\sigma_1) = \pi_\Sigma(\sigma_2)$. This ensures that $\textsf{mks}(\tau_1) =
    \lambda\triangleright i_1$ and
    $\textsf{mks}(\tau_2) = \lambda \triangleright i_2$ for the same
    $\lambda$, and some $i_1,i_2$. Only one state is needed.

  \item it reads pairs of $1$-labelled substitution sequences as
    inputs. The set $L_1$ of words $v$ such
    that $v$ is a $1$-labelling of some word $u\in\Sigma^*$ is easily
    seen to be regular, definable by some NFA with $2$ states.
    We can then apply \cref{lem:invreg} (for $n=0$)
    and $P = L_1$, and get that using $O(4^{|\var|})$ states, $A$ can
    verify that its input consists in pairs of $1$-labelled
    substitution sequences. 

  \item $\maxdiff_{i_1,i_2}(\oriout{\lambda},\oriout{\mu})\leq \alpha$: to do so, it needs
    $2.\alpha+2$  states in
    the $D = \{-\alpha,-\alpha+1,\dots,\alpha-1,\alpha,\infty\}$ to count
    the differences of the weights in the interval
    $[-\alpha,\alpha]\cap \mathbbm{N}$ (and state $\infty$ if the
    difference exits that interval). Let $(s_1,s_2)$ be a pair of substitutions and $d\in D$. The
    state $\infty$  is a non-accepting sink state while the other states
    from $D$ are accepting. State $0$ is the unique initial state. 
    Suppose
    that $d\neq \infty$. Let $n_1$
    be the number of symbols from $\Sigma\times \{\{1\}\}$ occurring
    in the right hand-sides of $s_1$, and $n_2$ be defined similarly
    but for $s_2$. When reading $(s_1,s_2)$ from state $d$, $A$ goes
    to $\infty$ if $d+n_1-n_2 < -\alpha$ or $d+n_1-n_2>\alpha$,
    otherwise it goes to state $d+n_1-n_2$. 
\end{enumerate}
So, overall, $A$ has $O(\alpha.8^{|\var|})$ states.

To recognize $\mdiffe_{\leq \alpha}$, we just modify the third automaton: all states of
$D$ are rejecting but state $0$. Indeed, $i_1=i_2$ iff
$\maxdiff_{i_1,i_2}(\oriout{\lambda},\oriout{\mu})=0$. So, if the automaton does not reach the
sink state $\infty$ and ends up in state $0$, we are guaranteed that
$(i)$ the maximal absolute difference is lesser than $\alpha$ and
$(ii)$ $i_1=i_2$. For $\mdiffd_{\leq \alpha}$, it suffices to
set states $0,\infty$ to be rejecting and all states in $D\backslash
\{0,\infty\}$ to be accepting. For $\mdiffe_{> \alpha}$, only
$\infty$ is set to be accepting.
\end{proof}

  \subsection{Regularity of mismatches}
  \label{sec:proofLemMismatch}
  \lemmaMismatch*
  \begin{proof}[Proof.]    
  
      We prove that the language
  \[
  H = \{ (u\triangleright
                                 (i_1,i_1+d))\otimes (v\triangleright
                                 (i_2,i_2+d))\mid \lambda,\mu\in\subs^*,
                                 0\leq d\leq \ell^2,
                                 u[i_1+d]\neq v[i_2+d] \}
                                 \]
      is recognizable by an NFA with $O(|\Sigma|^2\ell^4)$ states using the fact that $\ell$ is a given constant. Then, it suffices to apply
      \cref{lem:invreg} to conclude.

      For all $\sigma\in \Sigma$
      and $0\leq d\leq \ell^2$, 
      let $P_{\sigma,d}$ be the following binary predicate:
        \[
        P_{\sigma,d} = \{ u\triangleright (i,i+d)\mid u\in\Sigma^*,
        1\leq i\leq i+d\leq |u|,\ u[i+d]=\sigma\}
       \]
        This predicate is easily seen to be regular, recognized by a DFA
        $A_{\sigma,d}$ with $O(d)$ states. The operator $\otimes$ is naturally extended
        to languages, and we get:
        \[
        H = \bigcup_{\sigma_1,\sigma_2\in \Sigma, \sigma_1\neq \sigma_2}
      \bigcup_{0\leq d\leq \ell^2} P_{\sigma_1,d}\otimes P_{\sigma_2,d}
       \]
        Moreover, for all $\sigma_1,\sigma_2,d$, one can construct a DFA
        recognizing $P_{\sigma_1,d}\otimes P_{\sigma_2,d}$ with $O(d^2)$
        states, and take the union of all these DFA to obtain an NFA
        recognizing $H$ with $O(|\Sigma|^2\ell^4)$ states. 
        \end{proof}

          \subsection{Final steps: Regularity of the delay notion}
          \label{sec:proofLemPolRat}
    
          In order to prove \cref{lem:polRat}, we need one more auxiliary lemma.
    
        \begin{restatable}{lemma}{lemRegDashv}\label{lem:regdashv}
        Let $\subs$ be a finite set of substitutions
        over a finite set of variables $\var$ and an alphabet
        $\Sigma$. Let $\dashv\in\Sigma$ and let $T_\dashv = \{
        \lambda\in\subs^*\mid \outp{\lambda}\in
        (\Sigma\backslash\{\dashv\})^*{\dashv}\}$. Let
        $k,\ell\in\N$. Then, the set 
        \[
        \cdelay{\subs} = \{ \lambda\otimes \mu\mid \lambda,\mu\in T_\dashv\text{ and }
        (\lambda,\mu)\not\in\mathbb{D}_{k,\ell,\subs}\}
        \]
        is regular, recognized by an NFA with a number of states exponential
        in $\ell^3$ and in $|\var|$, and polynomial in $k$.
    \end{restatable}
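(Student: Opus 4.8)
The plan is to convert the characterization of \cref{lem:carac} into an NFA by the usual ``mark-then-project'' technique. Fix the endmarker $\dashv$. By \cref{lem:carac}, a pair $(\lambda,\mu)\in T_\dashv\times T_\dashv$ lies in $\cdelay{\subs}$ precisely when there is a position $i\in\bigl(\cut_\ell(\outp\lambda)\cap\cut_\ell(\outp\mu)\bigr)\cup\{0\}$ with $\maxdiff_i(\lambda,\mu)\leq k$ for which either \cref{rat:i1} or \cref{rat:i2} holds. This is an existential statement over the output position $i$ and, for \cref{rat:i1}, over the two next cuts $j_1,j_2$, resp., for \cref{rat:i2}, over the mismatch position $i+b$. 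I would therefore guess these positions by marking them inside the right-hand sides of the substitutions: the automaton reads a convolution $(\lambda\triangleright\overline x)\otimes(\mu\triangleright\overline y)$ of two $2$-marked substitution sequences, where mark $1$ of $\overline x$ and of $\overline y$ sits on the candidate common cut $i$, and mark $2$ sits on $j_1$ resp.\ $j_2$ (for \cref{rat:i1}) or on the candidate mismatch position (for \cref{rat:i2}); at the end all marks are erased by the projection $\pi_\Sigma$, which preserves regularity since NFAs are closed under letter-to-letter projection of the input alphabet.

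First I would build an NFA $B_1$ over $2$-marked sequences for the disjunct using \cref{rat:i1}, as a synchronized product of the following regular languages, each lifted to the $2$-marked convolution alphabet by ignoring the irrelevant mark components: (i) ``$\outp\lambda,\outp\mu\in(\Sigma\setminus\{\dashv\})^{*}\dashv$'', which is regular by \cref{lem:invreg} applied to the constant-size output predicate (this is also how membership in $T_\dashv\times T_\dashv$ is enforced); (ii) ``mark $1$ is a cut of $\outp\lambda$'' and ``mark $1$ is a cut of $\outp\mu$'', via two copies of $\textsf{Cut}$ from \cref{coro:cutnextcutreg}; (iii) ``$\maxdiff_{i}(\lambda,\mu)\leq k$ \emph{and} the two mark-$1$ positions coincide'', which is exactly $\mdiffe_{\leq k}$ of \cref{lem:delaylargesmallreg} --- this is the crucial point, since equality of a position of $\outp\lambda$ with a position of $\outp\mu$ is not regular in isolation, but becomes so once conjoined with the bounded-difference condition, as the weight-difference counter, bounded by $k$ throughout the run, is forced to return to $0$ exactly when the two marked positions agree; (iv) ``mark $2$ of $\lambda$ equals $\nextcut_\ell(i,\outp\lambda)$'' and ``mark $2$ of $\mu$ equals $\nextcut_\ell(i,\outp\mu)$'', via two copies of $\textsf{NextCut}$ from \cref{coro:cutnextcutreg}; and (v) ``$\maxdiff_{j_1,j_2}(\lambda,\mu)>k$ \emph{or} $j_1\neq j_2$'', which is the union $\mdiff_{>k}\cup\mdiffd_{\leq k}$ of \cref{lem:delaylargesmallreg} read on the mark-$2$ positions. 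The case $i=0$ is absorbed by small constant-size modifications of these automata (no mark $1$, with $\maxdiff_0=0$ and ``$\nextcut_\ell(0,\cdot)$'' read as ``first $\ell$-cut''). Symmetrically, an NFA $B_2$ handles the disjunct using \cref{rat:i2}: it reuses (i), (ii) and (iii), and additionally, with mark $2$ of both components on the candidate $i+b$, the language $\textsf{Mism}_\ell$ of \cref{lem:mismatch}, which checks $0\leq j_1-i_1=j_2-i_2\leq\ell^2$ and $\outp\lambda[j_1]\neq\outp\mu[j_2]$; conjoined with (iii), which forces $i_1=i_2$ (hence $j_1=j_2=i+b$), this says precisely that $\outp\lambda$ and $\outp\mu$ differ at some position $i+b$ with $b\in[0,\ell^2]$. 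Then $\cdelay{\subs}$ is the $\pi_\Sigma$-projection of $L(B_1)\cup L(B_2)$.

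For the state bound: every factor of the above products is either an NFA whose state count is exponential in $\ell^3$ and in $|\var|$ (the copies of $\textsf{Cut}$ and $\textsf{NextCut}$ of \cref{coro:cutnextcutreg}, and $\textsf{Mism}_\ell$ of \cref{lem:mismatch}), or one with $O(k\cdot 8^{|\var|})$ states (the $\mdiff$-automata of \cref{lem:delaylargesmallreg}), or the constant-size output-predicate automaton; a synchronized product multiplies state counts, which keeps the overall bound exponential in $\ell^3$ and in $|\var|$ and polynomial in $k$, and neither the union $B_1\cup B_2$ nor the final projection pushes it past that. I expect the main friction to be purely bookkeeping: making all sub-automata read one common $2$-marked convolution alphabet and agree on which mark component encodes which guessed position --- in particular that mark $1$ is simultaneously inspected by the two $\textsf{Cut}$ automata, by $\mdiffe_{\leq k}$, and by the $\textsf{NextCut}$/$\textsf{Mism}_\ell$ automata --- together with the clean handling of the boundary case $i=0$. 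All the genuinely nontrivial regularity statements have already been isolated in \cref{coro:cutnextcutreg,lem:delaylargesmallreg,lem:mismatch}.
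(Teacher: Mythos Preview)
Your proof is correct and follows essentially the same approach as the paper: both use the characterization of \cref{lem:carac}, the mark-then-project technique over $2$-marked substitution sequences, and assemble the automaton as a synchronized product of the building blocks from \cref{coro:cutnextcutreg,lem:delaylargesmallreg,lem:mismatch}, with the $i=0$ case handled separately. The only cosmetic difference is organizational---the paper splits the product into five labeled languages $C_1,\ldots,C_5$ (and $Z_1,\ldots,Z_4$ for $i=0$) before taking the Boolean combination, whereas you group them directly into two automata $B_1,B_2$ for the two disjuncts---but the substance and the complexity analysis are the same.
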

    
    \begin{proof}
     
        First, for $x\in\{1,\dots,5\}$, we define
        the set $C_x$ of convolutions of marked
        substitution sequences $(\lambda\triangleright (i,j_1))\otimes
        (\mu\triangleright (i,j_2))$ satisfying
        the $x$th property below:
        \begin{enumerate}
        \item $i \in \left(\cut_\ell(\outp{\lambda}) \cap
              \cut_\ell(\outp{\mu})\right)$ and $\maxdiff_i(\oriout{\lambda},\oriout{\mu})\leq k$
        \item $j_1 = \nextcut_\ell(i,\outp{\lambda})$, $j_2 =
          \nextcut_\ell(i,\outp{\mu})$
          \item $\maxdiff_{j_1,j_2}(\oriout{\lambda},\oriout{\mu}) >~k$
    
              \item $\maxdiff_{j_1,j_2}(\oriout{\lambda},\oriout{\mu}) \leq~k$ and $j_1=
                j_2$
              \item $\outp{\lambda}[j_1] \neq \outp{\mu}[j_2]$,
                $j_1-i\leq \ell^2$ and $j_2-i\leq \ell^2$ .
            \end{enumerate}

            Similarly, we define languages corresponding to the various
            properties of \cref{lem:carac} but in the case $i=0$. So,  
            for $x\in\{1,\dots,4\}$, we define
            the set $Z_x$ of convolutions of marked
            substitution sequences $(\lambda\triangleright j_1)\otimes
            (\mu\triangleright j_2)$ satisfying the
            $x$th property below:
            \begin{enumerate}
              \item $j_1 = \nextcut_\ell(0,\outp{\lambda})$, $j_2 =
                \nextcut_\ell(0,\outp{\mu})$
              \item $\maxdiff_{j_1,j_2}(\oriout{\lambda},\oriout{\mu}) >~k$
                
              \item $\maxdiff_{j_1,j_2}(\oriout{\lambda},\oriout{\mu}) \leq~k$ and $j_1=
                j_2$
              \item $\outp{\lambda}[j_1] \neq \outp{\mu}[j_2]$,
                $j_1\leq \ell^2$ and $j_2\leq \ell^2$ .
            \end{enumerate}
    
            Let $\pi_\Sigma$ be the projection of marked substitutions on
            their $\Sigma$ component. So, $\pi_\Sigma$ substitutes any
            label $(\alpha, M)$ occurring in the right-hand side of a substitution, where $\alpha\in\Sigma$ and $M$ is a
            marking, by just $\alpha$.  According to \cref{lem:carac}, 
            we have:
            \[
            \cdelay{\subs} = \pi_\Sigma(T_\dashv\cap
            (C_1\cap C_2 \cap C_3 \cup C_1\cap C_2\cap C_4\cup C_1\cap
            C_5\cup Z_1\cap Z_2\cup Z_1\cap Z_3\cup Z_4))
            \]
            Since $\pi_\Sigma$ preserves regular languages (in linear time
            for regular sets given as NFA), and regular languages are
            closed under Boolean operation, it suffices to prove that each
            set in the above expression is regular. For
            $C_1,C_2,C_3,C_4$, it is an immediate consequence of
            the various results we have previously shown:
            \cref{coro:cutnextcutreg},
            \cref{lem:delaylargesmallreg} and
            \cref{lem:mismatch}. For example, for $C_1$, it suffices
            to simulate twice the automaton recognizing the language $Cut$ of
            \cref{coro:cutnextcutreg}, on inputs of the form
            $(\lambda\triangleright (i_1,j_1))\otimes (\mu\triangle
            (i_2,j_2))$, and intersect it with the automaton recognizing
            $\textsf{MD}^=_{\leq k}$ of
            \cref{lem:delaylargesmallreg}. More precisely, using a
            product construction:
            \begin{itemize}
                \item the automaton for $Cut$ is simulated
            both twice on $(\lambda\triangleright (i_1,j_1))\otimes (\mu\triangle
            (i_2,j_2))$: the first simulation ignores all components but
            the one corresponding to $\lambda$ and the marking of $i_1$,
            while the other ignores all components but the one
            corresponding to $\mu$ and the marking of $i_2$. 
          \item the automaton for $\textsf{MD}^=_{\leq k}$ is simulated by
            ignoring the markings corresponding to $j_1$ and~$j_2$. 
            \end{itemize}
           This yields an automaton for $C_1$ with
           $O(k8^{|\var|}\ell^2|\Sigma|^{4|\var|\ell^3})$ states.

           The regularity of  $Z_1,\dots,Z_4$ can be shown using 
           similar ideas. Observe they are particular cases of
           $C_2,\dots,C_5$ for $i=0$. Since $i$ is not a position of the
           input, it cannot be marked. However, modulo adding a starting
           symbol, say, $\vdash$, occurring uniquely, the regularity of
           $C_2,\dots,C_5$ and this starting symbol removed
           while preserving regularity.

           Based on the number of states of each automata
           obtained in \cref{coro:cutnextcutreg},
            \cref{lem:delaylargesmallreg} and
            \cref{lem:mismatch}, a careful analysis yields an NFA for
            $\mathcal{C}^{\dashv}_{k,\ell,\subs}$ with a number of states
            which is:
            \begin{itemize}
              \item exponential
                in $\ell^3$
              \item exponential in $|\var|$
              \item polynomial in $k$.
            \end{itemize}
    \end{proof}

    \lemmapolRat*
    \begin{proof}[Proof (formal details).]
        
        Let $k,\ell \in \N,\var,\Lambda,\subs$ as in the statement of the
        lemma. Let $\dashv\not\in \Lambda$ and $\alpo =
        \Lambda\cup\{\dashv\}$. Let $\Phi : \sub{\var}{\Lambda}\rightarrow
        \sub{\var}{\alpo}$ defined for all $s\in \sub{\var}{\Lambda}$ by
        $\Phi(s)(X) = s(X)$ if $X\neq \varO$ and
        $\Phi(s)(\varO) = s(X).{\dashv}$. Note that $\Phi$ is a
        bijection. We let $\subs' = \subs\cup \Phi(\subs)$. Note that
        $|\subs'| = 2|\subs|$. We let $\Psi : \subs' \rightarrow \subs$
        defined by $\Psi(s) = s$ if $s\in \subs$ and $\Psi(s) =
        \Phi^{-1}(s)$ if $s\in \Phi(\subs)$. So, $\Psi$ removes the
        occurrence of the endmarker, if it exists.

    
    \cref{lem:regdashv} precisely proves that the following set is regular:
        \[
        \cdelay{\subs'} = \{ \lambda\otimes \mu\mid \lambda,\mu\in T_\dashv\text{ and }
        (\lambda,\mu)\not\in\mathbb{D}_{k,\ell,\subs'}\}
        \]
        From that, we obtain that
        the following set:
        \[
        \mathbb{D}_{k,\ell,\subs'}^{\dashv} = \{ \lambda\otimes \mu\mid
        \lambda,\mu\in T_\dashv\text{ and } (\lambda,\mu)\in\mathbb{D}_{k,\ell,\subs'}\}
        \]
        is regular. Indeed, we have 
        \[
        \mathbb{D}_{k,\ell,\subs'}^{\dashv} = ((\subs'\times
        \subs')^*\setminus \cdelay{\subs'})\cap (T_\dashv\otimes
        T_\dashv)
        \]
        where here $\otimes$ is extended to languages. Since $\cdelay{\subs'}$ is
        recognizable by an NFA with a number of states which is
        exponential in $\ell^3$ and $|\var|$, and polynomial in $k$, 
    $((\subs'\times
        \subs')^*\setminus \cdelay{\subs'})$ can be
        recognized by a DFA $D_1$ with a number of states doubly exponential in
        both $\ell^3$ and $|\var|$, and exponential in $k$. The set $T_\dashv$ is recognizable by a DFA $D_2$ with
        $4^{|\var|}$ states by an immediate application of
        \cref{lem:invreg} to $2$-states automaton accepting words with a
        unique occurrence of $\dashv$ at the end.
        From the DFA
        $D_2$, it is easy to get a DFA $D_3$ recognizing $T_\dashv\otimes
        T_\dashv$, with $4^{2|\var|}$ states. Taking the
        product of $D_1$ and $D_3$, one obtains a DFA recognizing 
        $\mathbb{D}_{k,\ell,\subs'}^{\dashv}$ with a number of states doubly exponential in
        $\ell^3$ and in $|\var|$, and exponential in $k$.

        Let us now explain how to obtain a DFA for
        $\mathbb{D}_{k,\ell,\subs}$. The mapping $\Psi$ which removes occurrences of
        endmarkers in substitutions is first extended two pairs of
        substitutions by $\Psi(s,s') = (\Psi(s),\Psi(s'))$ and then
        morphically to sequences of the form $\lambda\otimes \mu$. 
        Then, we claim that
        \[
        \mathbb{D}_{k,\ell,\subs} =
        \Psi(\mathbb{D}^{\dashv}_{k,\ell,\subs'})\qquad\qquad\qquad (\star)
        \]
        Indeed, let $\lambda\otimes \mu \in
        \mathbb{D}_{k,\ell,\subs}$. Let $\lambda = \lambda_1 s$ for
        $\lambda_1\in \subs^*$ and $s\in\subs$. Similarly, decompose $\mu$
        into $\mu = \mu_1 t$. Then, let $\lambda' = \lambda_1.\Phi(s)$ and
        $\mu' = \mu_1.\Phi(t)$. We have $\Psi(\lambda'\otimes \mu') =
        \lambda\otimes \mu$. Moreover, $\lambda',\mu'\in T_\dashv$,
        $\outp{\lambda'} = \outp{\lambda}.{\dashv} = \outp{\mu}.{\dashv} =
        \outp{\mu'}$, because $\outp{\lambda} = \outp{\mu}$ since
        $\lambda\otimes\mu\in \mathbb{D}_{k,\ell,\subs}$. 
        It can be seen that
        $d_\ell(\lambda',\mu') = d_\ell(\lambda,\mu)$ (adding an
        endmarker does not modify the delay). So, $\lambda'\otimes \mu'\in
        \mathbb{D}^{\dashv}_{k,\ell,\subs'}$. We have shown inclusion
        $\subseteq$ in equality  $(\star)$. The other inclusion is shown
        similarly: removing the endmarker from two equal outputs does not
        modify the delay.

        To conclude,  the mapping $\Psi$ is an alphabetic morphism, so, by the
        regularity of $\mathbb{D}^{\dashv}_{k,\ell,\subs'}$, we also get
        the regularity of $\mathbb{D}_{k,\ell,\subs}$, and its
        recognizability by an DFA of the same size as the DFA recognizing
        $\mathbb{D}^{\dashv}_{k,\ell,\subs'}$: doubly exponential in
        $\ell^3$ and in $|\var|$, and exponential in $k$. 
        \end{proof}

    \section{Inclusion and equivalence up to fixed delay} 
      
    \thmComplexityInclusion*
    
    \begin{proof}
    To begin with, we introduce some notation.
    For a non-deterministic SST $T$ over input alphabet $\Sigma$, we denote by $L(T)$ its language,
    defined as the set of words of the form $u\otimes \tau(\rho)$, where
    $u\in \Sigma^*$, $\rho$ is an accepting run of $T$ over $u$, and
    $\tau(\rho)$ is the sequence of substitutions occurring on $\rho$. 
    Let $T_1$ and $T_2$ be two non-deterministic SSTs over two finite sets of variables
    $\var_1$ and $\var_2$ respectively, both with output variable $\varO$. Let $\subs_1$ (resp. $\subs_2$) be the finite set of
    substitutions occurring in $T_1$ (resp. $T_2$). Let $\subs =
    \subs_1\cup \subs_2$ and $\var = \var_1\cup \var_2$. Let
    $\ell,k\in\N$.
    Let $n_1$ and $n_2$ be the number of states of $T_1$ and $T_2$, respectively.
        
        We prove the result for the inclusion problem, i.e. $T_1\subseteq_{k,\ell} T_2$. We let
        $\mathbb{D}^{in}_{k,\ell,\subs} = \{ u \otimes \lambda \otimes
        \mu\mid u\in\Sigma^*\wedge \lambda \otimes \mu\in \mathbb{D}_{k,\ell,\subs}\}$.  
        The automaton recognizing $\mathbb{D}_{k,\ell,\subs}$ in
        \cref{lem:polRat} can be easily extended into a DFA
        $A_D$ which recognizes
        $\mathbb{D}^{in}_{k,\ell,\subs}$, using the same number of states,
      i.e., doubly exponential in $\ell^3$ and $|\var|$, and singly
      exponential in $k$.

      Now, observe that $T_1$
      is $(k,\ell)$-included in $T_2$ iff $L(T_1)\subseteq
      \mathbb{D}^{in}_{k,\ell,\subs}(L(T_2))$, where $\mathbb{D}^{in}_{k,\ell,\subs}(L(T_2))$ denotes the set $\{ u \otimes \lambda \mid \exists u \otimes \mu \in L(T_2)\colon u \otimes \lambda \otimes \mu \in \mathbb{D}^{in}_{k,\ell,\subs}\}$. Let $A_1$ be an NFA recognizing $L(T_1)$. It is easily obtained by simulating
    $T_1$ and has $n_1$ states, where $n_1$ is the number of states of
    $T_1$. Similarly, let $A_2$ be an NFA recognizing $L(T_2)$ with $n_2$ states.

    Clearly, $\mathbb{D}^{in}_{k,\ell,\subs}(L(T_2))$ is definable by
    restricting the first and third component of $A_D$ to $L(T_2)$ and then projecting onto the first and second component.
    This can be done via a product construction with $A_2$. 
    This yields a NFA $B$ for $\mathbb{D}^{in}_{k,\ell,\subs}(L(T_2))$ with doubly exponentially many states in $\ell^3$ and $|\var|$, singly exponentially many states in $k$, and polynomially many  states in $n_2$. 
    Note the $B$ has polynomially many transitions in the size of $T_1$ and size of $T_2$, because the number of transitions is polynomial in $|\subs|$ which depends on the size of $T_1$ and~$T_2$.
    
    So, the problem $T_1\subseteq_{k,\ell} T_2$ boils down to checking the inclusion of the NFA $A_1$ of polynomial size in $n_1$ into the NFA $B$, which is well-known to be decidable in \textsc{PSpace} in the size of both NFA. This yields a \textsc{2ExpSpace} procedure.
    
    Now, if $k,\ell$ and $|\var|$ are fixed constants, note that $A_D$ has a constant number of states and a polynomial number of transitions in the size of $T_1$ and $T_2$, because it depends on~$\subs$. Thus, $B$ has a polynomial number of states in $n_2$ and polynomially many transitions. Therefore, the latter procedure is $\textsc{PSpace}$.

    The \PSPACE lower bound is a direct consequence of the fact
    that the language inclusion problem for NFA is
    \PSPACE-hard. This problem can be encoded to the
    $(0,0)$-inclusion problem for non-deterministic SSTs which uses only a
    single variable $\varO$ which is always updated by $\varO
    := \epsilon$. So, those SSTs do not produce anything and are basically
    NFA.
    \end{proof}

\end{document}